\numberwithin{equation}{section}
\newcommand{\bV}{\boldsymbol{V}}
\newcommand{\bE}{\boldsymbol{E}}
\newcommand{\bF}{\boldsymbol{F}}
\newcommand{\blambda}{\boldsymbol{\lambda}}
\newcommand{\bx}{ {\bf x}}
\newcommand{\bp}{{\bf p }}
\newcommand{\be}{ {  \bf e } }
\newcommand{\ba }{  \textbf{\textit{a}}     }
\newcommand{\bA}{ {\bf A}  }
\newcommand{\mast}{ \overline{\mu}_\ast  }
\renewcommand{\d}{\mathrm{d}}
\newcommand{\N}{\mathbb{N}} 
\newcommand{\C}{\mathbb{C}} 
\newcommand{\Z}{\mathbb{Z}} 
\newcommand{\R}{\mathbb{R}} 
\newcommand{\w}{w}
\newcommand{\U}{\mathcal{U}}
\newcommand{\W}{\mathcal{W}}
\newcommand{\calD}{\mathcal{D}}
\newcommand{\calF}{\mathcal{F}}
\newcommand{\F}{\mathcal{F}}
\renewcommand{\S}{\mathbb{S}}
\renewcommand{\H}{\mathcal{H}}
\newcommand{\calC}{\mathcal{C}}
\newcommand{\calU}{\mathcal{U}}
\newcommand{\im}{\mathrm{Im}}
\newcommand{\A}{ {\bf A}  }
\newcommand{\sg}{  \boldsymbol{\sigma } }
\newcommand{\p}{\mathbb{P}}
\newcommand{\delto}{\delta_0}
\newcommand{\bchi}{{\boldsymbol \chi }	}
\newcommand{\auxE}{\widetilde{E}}
\renewcommand{\im}{\mathrm{i}}
\newcommand{\1}{ \mathds{1} }
\newcommand{\Cr}{ \mathscr{C}_{ \mathrm{r} } }
\newcommand{\supp}{\textnormal{supp}}
\newcommand{\<}{\left\langle}
\renewcommand{\>}{\right\rangle}
\newcommand{\vp}{\varphi}
\newcommand{\ve}{\varepsilon}
\renewcommand{\t}[1]{\textnormal{#1}}
\newcommand{\core}{\mathcal{C}}
\newcommand{\dmu}{\mu_\ast}
\newtheorem{condition}{Condition}
\newenvironment{conditionp}[1]{
	
	\conditionalt
}{\endconditionalt}
\newtheorem{theorem}{Theorem}[section]
\newtheorem{definition}{Definition}
\newtheorem{proposition}[theorem]{Proposition}
\newtheorem{corollary}{Corollary}
\newtheorem{lemma}[theorem]{Lemma}
\theoremstyle{remark}
\newtheorem{notation}[theorem]{Notation}
\theoremstyle{definition}
\newtheorem{remark}[theorem]{Remark}
\newcounter{listi}
\newenvironment{remarklist}{\begin{list}{{\rm(\roman{listi})}}{%
		\setlength{\topsep}{0mm}\setlength{\parsep}{1mm}\setlength{\itemsep}{0mm}%
		\setlength{\labelwidth}{1.3em}\setlength{\leftmargin}{1.2em}\usecounter{listi}%
}}{\end{list}}
\renewcommand{\le}{\leqslant}
\renewcommand{\leq}{\leqslant}
\renewcommand{\ge}{\geqslant}
\renewcommand{\geq}{\geqslant}     
\title[Tunneling estimates for magnetic Dirac systems]{Tunneling  estimates for two-dimensional  perturbed magnetic Dirac systems}
\author[E. C\'ardenas]{Esteban C\'ardenas}
\address[E. C\'ardenas]{Department of Mathematics,
University of Texas at Austin,
2515 Speedway,
Austin TX, 78712, USA}
\email{eacardenas@utexas.edu}
\author[B. Pavez]{Benjam\'in Pavez}
\address[B. Pavez]{Instituto de F\'isica, Pontificia Universidad Cat\'olica de Chile, Vicu\~na Mackenna 4860, Santiago 7820436, Chile}
\email{bipavez@uc.cl}
\author[E. Stockmeyer]{Edgardo Stockmeyer}
\address[E. Stockmeyer]{Instituto de F\'isica, Pontificia Universidad Cat\'olica de Chile, Vicu\~na Mackenna 4860, Santiago 7820436, Chile}
\email{stock@fis.puc.cl}
\begin{document}

\maketitle

{\centering\footnotesize 
	\textit{Dedicated to the memory of Georgi Raikov.}\par}

\begin{abstract}
	We prove tunneling estimates for  two-dimensional Dirac systems which are localized in space due to the presence of a magnetic field.
	The Hamiltonian driving the motion  admits the decomposition $  H  = H_0 + W$, where $H_0 $ is a rotationally symmetric magnetic Dirac operator and  $W  $  is a position-dependent  matrix-valued potential satisfying certain smoothness condition in the angular variable. 
	A consequence of our results are  upper bounds for  the 
	growth in time 
  of  the expected size of the system and its total angular momentum.
\end{abstract}

\setcounter{tocdepth}{1}
{\hypersetup{linkcolor=black}
	\tableofcontents}

\section{Introduction}
\label{section introduction}
Consider a quantum  system whose dynamics is driven by a self-adjoint Hamiltonian $H_0$ on the 
 Hilbert space $L^2(\R^2, \C^q)$ for $ q\in \{1,2\}$. 
Assume that $H_0$ is symmetric under rotations so that it can be decomposed in an orthogonal sum  of operators acting on  the corresponding  angular momentum subspaces \emph{i.e.} 
 	\begin{equation*}
 H_0 \cong\bigoplus_{j\in \Z} h_j\,.
 \end{equation*}	
 Let us  assume  further that the $h_j$'s have compact resolvent, \emph{i.e.} discrete spectrum. In particular, the unperturbed system  has certain  localization properties in space which may or not be uniform in $j$.
 We are interested in perturbing such systems and analyse the fate of certain dynamical variables that describe the size of the quantum system.
 
 Let us consider    a hermitian multiplication operator $W$, 
 possibly not invariant under rotations, that it is smooth in the angular variable and it is such that  the operator sum
 \begin{align*}
 H=H_0+W\,,
 \end{align*}
   is a well-defined   self-adjoint operator. 
   According to the Schr\"odinger equation,
    if $\varphi\in L^2(\R^2, \C^q)$ is the initial  state of the system at time 
   $t=0$,  then  (in units where $\hbar=1$)   
    \begin{align*}
   \varphi(t)=e^{-{\rm i}t H }\,\varphi\,,
   \end{align*}
   is the quantum state at time $t>0$.
   Let $	\p_I (H)$ denote the spectral projection of $H$ onto a 
   {bounded} 
   energy  interval $I\subset \R$. We are motivated by the study of the long time behavior of the expected radius of the system for finite energy initial data.
   More precisely, the problem we are interested in consists
   of understanding 
   	\begin{align}\label{eq.size}
   R_\varphi(t) : =
   	\<  \vp(t)  ,   |\bx|  \,    \vp (t)  \>\,,\quad \varphi=\p_I (H)\varphi\,, \quad t\gg 1\,.
   \end{align}
In certain cases, 
 \emph{apriori} knowledge on the spectrum of $H$    
 is helpful. 
 For instance,  when the spectrum   is purely absolutely continuous  one expects ballistic dynamics  {\emph{i.e.}}  $R_\varphi(t)\sim t$ holds \cite{Last1996}. Another  case is   when the spectrum is a discrete set, where one can show that the dynamics is localized and, in particular, that $\sup_{t>0} R_\varphi(t) <+\infty$ holds, at least for various physically relevant models. 
 The former cases correspond to when the spectral measure of $H$
 is  absolutely continuous with respect to the Lebesgue measure and 
 a point measure, respectively.
 However, generically, the  spectral measure may  behave 
 very differently  from the former two cases and in such situations  there is very little to say about \eqref{eq.size}. 
Moreover,   in view of the Weyl - von Neumann and the Wonderland Theorems (see \emph{e.g.} \cite{Oliveira2009} and \cite{RJLS2,hundertmark2008short}),
the spectral quality of the system may be very unstable under perturbations. 
 Therefore, it is  desirable to  study the underlying dynamics of such systems without relying on   the spectral quality of $H$.  
 Indeed, notice that    even if all   the  $h_j$ have  compact resolvent,
  it is unclear whether  or not the same holds for 
  $H_0$. 
 Hence, the problem on the behavior of $R_\varphi(t)$ is non-trivial even for $W=0$.

 An approach to address this problem
 for the magnetic Schr\"odinger operator
 was proposed  by Hundertmark, Vugalter and two of the present authors in  \cite{CHSV21}. 
 In there, $H_0$ is the   Schr\"odinger operator on $L^2(\R^2, \C)$  with 
 rotationally symmetric magnetic field 
\begin{equation}
\label{B field}
 \textbf{B} (\bx )=     		  B (	 |  \bx  	|	) 	 \be_3 
 	\qquad
 	 \t{with}
 \qquad 
 B( r ) \sim r^{\alpha -1 } \t{ as } r \rightarrow \infty \ . 
\end{equation}
 pointing orthogonal to the plane, for some $\alpha>0$. 
 The perturbation $W$ is an electric potential, 
assumed to be 
decaying  at infinity and to be   infinitesimally 
 $-\Delta
 $-bounded. 
 The  results of the present paper cover 
 the situations  considered in 
  \cite{CHSV21} analogous to  the magnetic Dirac operator on $L^2(\R^2, \C^2)$.
   
 The strategy in \cite{CHSV21} is based upon two steps. Firstly, one shows certain tunneling estimates, describing  the decay in the probability for the system to be localized in the classically forbidden  region. 
 This allows in  particular to estimate  
 $R_\varphi(t)$  in terms of the 
 expectation value 
   of  some power of 
   the angular momentum operator $   - i \partial_\theta$.
 Secondly, one estimates the time behavior  of the latter expectation using Heisenberg's evolution equation and again the tunneling  
 estimates. It turns out that the faster $W$ decays the better are the bounds on $R_\varphi(t)$.
 
 Let us now turn to the model investigated in this work. Consider from now on $H_0$ to be  the Dirac operator on $L^2(\R^2, \C^2)$ with 
  magnetic field \eqref{B field}.
Contrary to the  non-relativistic case, a growing \textit{electric potential} enhances the delocalization properties of a Dirac system. This can already be seen  in the simple case when $H_0$ is perturbed by a (well-behaved) rotationally symmetric electric potential $V$ (\emph{i.e.}  $W=V\otimes 		\textbf{1}_{2\times 2}$). 
 Let us further explain. 
To this end, let
 \begin{align}
 A(r) = \frac{1}{r}\int_0^r s B(s) \d s\, 
 \end{align}
be the corresponding magnetic vector potential. 
Then,   it is known \cite{KMSYamada} that  the corresponding
operator  $h_j+V$ has  compact resolvent if and only if 
 	\begin{align}\label{eq.V/A}
\mu_* := 
\limsup_{|\bx|\to\infty} |V(|\bx|)/A(|\bx|)|^2 
 <1\,.
 \end{align}
 Since $A\sim |\bx|^\alpha$ the electric potential $V$ is allowed to grow at infinity, but slower than $A$. 
 In fact, if 
 $\mu_* =1 $
 and $V>0$, the spectrum of $h_j+V$ splits
 as discrete on the positive and absolutely continuous on the  negative real line \cite{Yamada,KMSYamada}. Moreover, if $ \mu_\ast> 1$ the spectrum of $h_j+V$ is purely absolutely continuous \cite{KMSYamada} and behaves for large times in time average as $R_\varphi(t)\sim t$ \cite{MSBallistic2015}. Hence, the system has no localization property.
  
  In this work we consider the Dirac system $H=H_0+W$
  where $W$ is an hermitian  $2\times 2$-matrix valued multiplication operator that is  (see Condition~\ref{condition 2}) locally in $L^2$, it is smooth in the angular variable, and it is allowed to grow at infinity as long as it is controlled by  $A$. 
  We quantify the angular regularity of $W$ 
  with a parameter $\beta>0$.
  In particular, 
  our condition on $W$
  	is asymptotically sharp in the limit of maximal regularity.
  	More precisely, 
  for $\beta \rightarrow \infty$ one recovers \eqref{eq.V/A}.  
  See Remark~(\ref{remark1.2})
  for more details. 
  
   Our main result, Theorem~\ref{theorem tunneling estimates}, provides the above mentioned tunneling estimates for $H$.  
   They state that for finite energy  the system remains localized  in a certain 
   	classically allowed region. 
The system has a different classically allowed region for each angular momentum channel $ j \in \Z $, 
determined by $h_j$
and a reference energy level $E_j$.
We denote by 
 $\mathfrak{C} = \big( \calC_j (E_j)	\big)_{ j \in \Z}$
 the collection of all  these sets. 
Mathematically,
it induces  an 
\textit{orthogonal projection}
$ {   \boldsymbol{\theta}	} = \oplus_{ j \in \Z } \theta_j $
in the Hilbert space $L^2(\R^2,\C^2). $
   In particular, Theorem~\ref{theorem tunneling estimates} shows      exponential decay 
 away from the support of 
    $  {   \boldsymbol{\theta}	}    $ (see Corollary~\ref{cor.p}).
  An important  consequence  of  Theorem~\ref{theorem tunneling estimates} is that for any $\nu>0$ one finds constants $c, C_1, C_2>0$ such that for any $\varphi=\p_I (H)\varphi$
   	\begin{align}\label{eq.ib2}
    C_1\<  \vp  ,   |J|^{\nu/(1+\alpha)} \,    \vp   \> -c\|\vp\|^2\le \<  \vp  ,   |\bx|^\nu  \,    \vp   \> \le  
    C_2\<  \vp  ,   |J|^{\nu/(1+\alpha)} \,    \vp   \> +c\|\vp\|^2
   \, , 
   \end{align}
where $J  = - i \partial_\theta +\frac{1}{2}\sigma_3$
is the total angular momentum. 
   As for the dynamical applications, notice first that since the spectral projection commutes with time evolution, we get  for all $t>0$
  	\begin{align}\label{eq.ib}
    R_\varphi(t)  \le  
  C\big(\|\vp\|^2+\<  \vp(t)  ,   |J|^{1/(1+\alpha)}  \,    \vp (t)  \>\big)
   \,,\quad \varphi=\p_I (H)\varphi\,.
  \end{align}
  Equations \eqref{eq.ib2} and \eqref{eq.ib} combined give a satisfactory answer to the problem when  $W$   is rotationally symmetric. 
  Indeed,   in this case the time evolution operator commutes with 
  the angular momentum operator. 
  Consequently, we deduce 
  that  $\sup_{t\ge0}R_\varphi(t)<+\infty $ holds, whenever $R_\varphi(0)$ is finite.
  If the symmetry is perturbed the situation is  less clear and the problem deserves further investigation.
  However, in \cite{CHSV21}  a general tendency
  was shown. 
  Namely,  if  $W$ is smooth and
  decays at infinity,    
  it is proven that 
  $R_\varphi(t)\lesssim t^\epsilon$
  where 
 the exponent $\epsilon>0$ depends on the decay rate of $W$.
 In particular, if  $W$ decays sufficiently fast, 
 one gets  sub-ballistic dynamics \emph{i.e.} $\epsilon\in(0,1)$.
  Moreover, for exponentially decaying $W$ one gets that $R_\varphi(t)$  grows at most 
logarithmically 
   in time.  These results can be replicated here using  Theorem~\ref{theorem tunneling estimates} as an essential ingredient.  
%
%

  Let us now describe the new elements in our analysis. 
  While our  strategy  for the proof of Theorem \ref{theorem tunneling estimates}  follows  the  general lines of \cite{CHSV21}, 
   there are two new  major  independent  obstacles:
   \begin{enumerate}[leftmargin=*]
   	\item The Dirac operator $H$ is not semi-bounded.
Let us recall that the heart of the proof of \cite{CHSV21}
consists of introducing a comparison operator $\widetilde H$
that creates a spectral gap around zero of its  Schr\"odinger counterpart. This is achieved by 
adding a suitable  operator that \textit{shifts} the spectrum to the right. 
Since $H$ is not lower bounded, 
the  same  naive operation   fails and no spectral gap is created. 
   	We overcome this by introducing a novel comparison operator,   formally given by  
   	\begin{align}\label{co}
   		\widetilde{H} = H+ \sigma_3\bE \bchi \ . 
   	\end{align}
   	The operator 
   	$
   	 \sigma_3 \bE \bchi 	
   	 =	
   	  \oplus_{ j \in \Z } \, \sigma_3  E_j\chi_j
   	 $
   	acts a mass term that  depends on the angular momentum $j\in\Z$ under consideration.
In particular, it    	  creates a suitable spectral gap around zero in the sense of Proposition \ref{prop inverse estimate}. 
The sequence   	$(E_j)_{j\in \Z}$ corresponds to a   collection  of
carefully selected 
   	\textit{reference energy levels},  prescribed in terms 
   	{ of $ W$ and  the  energy interval $I$. 
   	For each $ j \in \Z$, the energy level $E_j$ determines  
   	an allowed region $\calC_j (E_j)\subset \R_+$ 
   	 on which each $\chi_j \sim \1_{\calC_j(E_j)}$
   	acts as a smooth   {localization function}. 
 This is a novel adaptation of the comparison operators suggested for Schr\"odinger-like operators in \cite{Griesemer2004} and may be of interest in  its own right, even for constant $\bE$.}   	
   	\item 
   	The perturbation $W$ is allowed to grow at infinity. 
   	In particular, $W$ may not be relatively bounded with respect to $H_0$.
   	We recover control of
   	the perturbation
   	by carefully analyzing its growth when projected onto the 
   	{classically allowed region $\mathfrak{C}$}. 
   	In particular,  we prove  in Theorem~\ref{theorem W} that  there  are constants  $a\in (0,1)$, $\mast\in [\mu_\ast,1)$, and $c>0$ such that
   	\begin{align}\label{eq.W-bound}
   		\|W\varphi\|^2\le a\|H_0\varphi\|^2+\mast\|{\boldsymbol{\lambda \chi}}\varphi\|^2
   		+ c\|\varphi\|^2 \ , 
   	\end{align} 
for all $\varphi$ on a suitable core. 
   {The term in the middle quantifies the possible growth of $W$ at infinity:
the operator $\bchi = \oplus_{ j \in \Z } \chi_j$
is the same as in \eqref{co}
and   $\blambda = \oplus_{ j \in \Z } \lambda_j$ 
is a sequence of multipliers
such that  $ \lambda_j \sim A(|j|^{	1/ (1+ \alpha )	})$ for large $|j|$; see  Definition \ref{definition lambda}.
 We select the energy levels such that $\mast{\boldsymbol \lambda}^2< {\bE}$.
	In particular,  if $W$ 
	decays at infinity,  
	 then $\mast=0$ and we recover the usual relative operator bound inequality. 
 For this    case,  the $E_j$'s can be selected to be all equal to each other; 
	 see Definition~\ref{definition-energies}.   
   	This corresponds to the case considered in \cite{CHSV21} with only one energy level. }
   \end{enumerate}

 Summarizing, in order to prove Theorem \ref{theorem tunneling estimates}
we  show   exponential decay of $\p_I (H)$ 
 on the support of $(1-{\boldsymbol \chi})$.
  We proceed in the spirit of \cite{BFS1998b} using an integral representation of the spectral projection. 
 The method works efficiently 
 in terms of the comparison operator $\widetilde H$ which is a well-defined self-adjoint operator thanks to \eqref{eq.W-bound} and
 satisfies 
  $\p_I(\widetilde{H})=0$. 
 \vspace{3mm}

 {\it Organization of the paper.} 
 In Section \ref{section main results} 
 we define  the precise model that we study,
state our main results and discuss its consequences. 
 In Section \ref{section preliminaries} we collect some  preliminary facts  used throughout  our analysis. 
 The analysis of the classically allowed region and the definition of the corresponding localization  function ${\boldsymbol \chi}$ is given in  Section \ref{section analysis car}.
 This is then used in  Section \ref{section perturbation bounds}
  to obtain bounds on $W$ as in \eqref{eq.W-bound}; see Theorem \ref{theorem W}.
 In Section \ref{section abs decay thm} we 
 define the energy levels and  state
  a somewhat  abstract exponential  decay result, Theorem~\ref{theorem exp decay}. Moreover, we show how that result implies Theorem~\ref{theorem tunneling estimates}.  
  Finally, in Section~\ref{section proof thm} we show  Theorem~\ref{theorem exp decay}. Most of the  section is devoted to establish the main properties of the comparison operator $\widetilde{H}$. The main text of this article is followed by several appendices containing useful technical results.

\section{The model and main results}
\label{section main results}
In this section  we define the mathematical model that we study. We then state our main result in Theorem \ref{theorem tunneling estimates} and discuss its consequences. 
\subsection{The model}
We consider a massless Dirac system in a two-dimensional plane, under the influence of  a
 magnetic field pointing orthogonal to the plane
$      \textbf{B}(\bx ) = B(\bx )\be_3$, with $\be_3=(0,0,1)^{\rm T}$.  The Hamiltonian describing this system operates on the 
Hilbert space 
$\H : = L^2(\R^2, \C^2)$ and may be defined on  $C_c^\infty (\R^2  , \C^2 )$ acting as
\begin{equation}\label{h-0}
	H_0   = {\sg   } \cdot (-\im \nabla - {\bf A} )\,,
	\end{equation}
	where 
	$ ( \sg  , \sigma_3 )  \equiv  (\sigma_1, \sigma_2, \sigma_3 )$  correspond to   the standard  Pauli matrices
	\[\sigma_1=\begin{pmatrix}
	0&1\\
	1&0
	\end{pmatrix}\,,\quad \sigma_2=\begin{pmatrix}
	0&-i\\
	i&0
	\end{pmatrix}\,,\quad \sigma_3=\begin{pmatrix}
	1&0\\
	0&-1
	\end{pmatrix} \,,\]
	and  $\bA =(A_1,A_2)^{{\rm T}}: \R^2 \rightarrow \R^2$ is a magnetic vector potential
	satisfying $\partial_1 A_2-\partial_2 A_1=B$. If $B$ is    rotationally symmetric we fix   the gauge to 
	\begin{equation}\label{aaa}
		\A (\bx) = A( |\bx| )   \, \textbf{e}_\theta 
		\qquad \t{ and } \qquad 
		A(r) = \frac{1}{r}\int_0^r s B(s) \d s\,.
	\end{equation}
	(The vectors  $\textbf{e}_\theta $ and $\textbf{e}_r $ 
	are the usual orthonormal basis of $\R^3$ in polar coordinates with $\textbf{e}_r\times \textbf{e}_\theta=\textbf{e}_3 $).
	Notice that in this  gauge the magnetic vector potential is related to the magnetic  flux $\Phi$ on the centered ball $B_r(0)$ of radius $r>0$ 
	given by
	\begin{align}\label{flux}
	\Phi(r)=\int_{B_r(0)} \mathbf{B} \cdot {\mathrm{d} \bx}\,,
	\end{align} 
	through $\Phi(r)=2\pi rA(r)$.
	
	In order to exploit the symmetry we identify $\H$ with $ L^2(\R_+) \otimes L^2(\S^1)^2$, where   $L^2 (\R_+)\equiv L^2 (\R_+,r\d r)$  is equipped with the measure $r\d r$. We define  
	the \textit{total angular momentum operator}
	\begin{equation}
		J  = 		\textbf{1}
		\otimes \left(- \im \partial_\theta  
		+
		\frac{1}{2}
		\sigma_3 \right)
		\quad 
		\t{on}\quad L^2(\R_+) \otimes L^2(\S^1)^2,
	\end{equation}
	which is self-adjoint in its maximal domain. We denote by
	$p_j = (2\pi)^{-1}|e^{ij\theta}\rangle\langle e^{ij\theta}|$  
	the projection  from  $L^2(\S^1)$ to the space of orbital angular momentum $j\in \Z$. 
	The resolution of the identity  for $J$ is given by 
	\begin{align}
\label{J}
	J
	=
	\sum_{j\in \Z} 
	m_j 
	P_{m_j}
	\,,
	\qquad
	\t{where}
	\qquad
	m_j := j  +  \frac{1}{2}   
\end{align} 
and 	where
	\begin{equation}
		P_{m_j} 
		=
		\textbf{1}
		\otimes
		\begin{pmatrix}
			p_{j} & 0 \\
			0 &  p_{j+1}\\
		\end{pmatrix}
		\,.
	\end{equation}
	The sequence of projections $ \big(P_{m_j}\big)_{ j \in \Z }$ provide a   decomposition
	 of the Hilbert space 
	 $\H\cong \bigoplus_{j\in\Z} L^2(\R_+, \C^2)$ in angular momentum subspaces.
	Since $H_0$ commutes with $J$ \cite[Chapter ~7]{Thaller1992}, the action of the Hamiltonian $H_0$   is described on each angular momentum subspace as
	\begin{equation}\label{Dj definition}
		H_0 \cong\bigoplus_{j\in \Z} h_j\,, \qquad \mbox{where} \qquad h_j :=  
		\begin{pmatrix}
			0 &    d_j^* \\
			d_j  & 0 
		\end{pmatrix}\,,
	\end{equation}	
	and 
	\begin{align}
		&d_j := \frac{d}{d r}  - \frac{j}{r} + A(r)\, ,  \quad d_j ^*:= -\frac{d}{d r}  - \frac{(j+1)}{r} + A(r) \,. 
	\end{align}

 Let us now state the assumptions of the magnetic field of the unperturbed problem. They are given in terms of the flux function \eqref{flux} as follows.

	\begin{condition}\label{condition 1} 
	The magnetic field   $B\in L_{\rm loc}^2(\R^2, \R )$ is 
	  rotationally symmetric.
	  Further, there exist constants $\alpha>0$, $\Phi_0>0$
	  and a  function $\phi$, {with $\phi(r) = o (r^{1+ \alpha })$ as $r\to \infty$},
	  such that 
		\begin{align}
			\label{magnetic field}
			\Phi ( r ) 
			\, =  \, 
			\Phi_0 \, 
			r^{ \alpha +1 }   
			+ 
			\phi(r)  \ , \qquad r> 0 \ . 
		\end{align}		
	\end{condition}
	\begin{remark}
The operator $H_0$ is essentially self-adjoint on 
$C_c^\infty(\R^2,\C^2)$  and 
we denote its closure by $H_0$ as well. 		
The following can be said about its spectrum (see \emph{e.g.} \cite{Thaller1992}):
		\begin{enumerate}[label=(\roman*)]
			\item If $\alpha  > 1 $, ${\rm spec}(H_0)$   is purely  discrete. 
			\item  If $\alpha = 1 $, ${\rm spec}(H_0)$  is a perturbation of Landau levels. 
			\item if $ 0<\alpha <1$, ${\rm spec}(H_0)$ coincides with $\R$, is purely essential, and dense pure point. 
		\end{enumerate}   
	\end{remark}
	We shall consider  perturbations  $W$ of $H_0$ that have  general hermitian matrix structure given through multiplication by 
	\begin{align}\label{eq-for-W}
	 W(\bx)=  \begin{pmatrix}
			\w_{11}(\bx) &    \w_{12}(\bx) \\
			\w_{21}(\bx)  & \w_{22}(\bx) 
		\end{pmatrix}\, , \qquad \bx \in \R^2 \ . 
	\end{align}
	\begin{remark}\label{rem.generalw}
		Recall that the set $\{  \mathds{1}_{2\times 2}, \sigma_1 , \sigma_2 , \sigma_3   \}$ form a  basis for the space of  two-dimensional hermitian matrices. 
		Hence, the perturbation $W$ may be written in the form
		\begin{equation} 
			W (\bx )  \ = \    \,  V  ( \bx )\otimes 
			{\textbf{1}}_{2\times 2}     \ + \         \sg \cdot  \ba (\bx )  \ + \  \sigma_3 m(\bx )  
			\ , \qquad \bx \in \R^2 
		\end{equation} 
		for some functions $ V : \R^2 \rightarrow \R $, $ \ba :   \R^2 \rightarrow \R^2 $ and $ m : \R^2 \rightarrow \R $. 
		From  a physical point of view,  $V$ corresponds to an electric potential, $\ba$ corresponds to a magnetic vector potential and $m$ corresponds to a position-dependent mass term also known as Lorentz scalar potential. 
	\end{remark}

 Let us now state the  precise conditions 
that we impose on the perturbation $W$.
To this end, 	let us first introduce 
the matrix-valued function on $\R_+\times\Z $
 \begin{equation}
 		\widetilde{W}(r,j) : =\begin{pmatrix}
 		\widehat{w}_{11}(r,j) &    \widehat{w}_{12}(r,j-1) \\
 		\widehat{w}_{21}(r,j+1)  & \widehat{w}_{22}(r,j) 
 	\end{pmatrix} \ , 
 \end{equation}
where  for 
  $w \in L^1_{\t{loc}}(\R^2, \C)$
  we write   its
  Fourier coefficients  as   
$
  \widehat{w}(r,j) : =    \frac{1}{\sqrt{2 \pi }} \int_0^{2 \pi} w (r \textbf{e}_r )  e^{ - \im j \theta } \d \theta  .
  $
 	Essentially, the function     just defined  is the Fourier transform   of the map   $\theta\mapsto {W}(r,\theta)$ 
	and the additional $\pm1$   account for      the spin  angular momentum, 
	introduced in \eqref{J}. 
In particular, we note that for $\vp , \psi \in \H $ smooth enough there holds
	\begin{equation}
		\< \vp, W \psi \>_{\H}
\ 		 =  \ 
		 \sum_{j,k} 
		 \,
		 \langle  \vp_j,  \widetilde{ W}(\cdot, j -k ) \psi_k   \rangle_{	 L^2(\R_+ ,\C^2 )	} , 
	\end{equation}
where $\vp_j  \equiv  P_{m_j} \vp $ and  $\psi_k \equiv  P_{m_k} \psi $
are regarded as vectors in $L^2( \R_+ , \C^2	)$.

\begin{notation}
	Given a matrix $A\in \C^2 \times \C^2$, 
$\|	A\|_{2 \times 2 }$ denotes 
	its operator norm in $\C^2$ with respect to
	the usual inner product $ (x,y)= \sum_{i=1}^2 \overline{x_i} y_i$. 
	Hence, $|(x,Ay)| \leq \| A	\|_{2\times2} |x| |y|$. 
\end{notation}

	\begin{condition}\label{condition 2}
 $W \in L^2_{\t{loc}} (\R^2, \C^2 \times \C^2 )		$ 
 is a hermitian matrix-valued multiplication operator.
		Further, there exist    $\beta>0$, $\mu_\ast\in[0,1)$ 
		  and   
		  $v \in L_{\t{loc}}^2	( [0,\infty )  ,\R	)	 $ 
		such that:
		\begin{enumerate}[label=(\alph*)]
			\item\label{smoothness} (Smoothness in $\theta$). For all $r>0$ and $j\in \Z$ we have
			$$
\| \widetilde{W}(r,j)\|_{2\times 2} 
			\le v(r) e^{-\beta | j| }
			\,.
			$$ 
			\item\label{a-control} (Control by $A$ at infinity). 
	\begin{align}\label{eq.lim}
	{\limsup _{r \to \infty}} \,\,\,\left|\frac{\coth^2(\beta/4) v^2(r)}{A^2(r)}\right|=\dmu<1\,.
\end{align}
		\end{enumerate}
	\end{condition}
	\begin{remark}\label{rem.rem}
		\begin{remarklist}
			\item In view  of Paley--Wiener Theorem,  Condition \ref{condition 2} \ref{smoothness} corresponds to uniform real-analiticity. It is reminiscent from 
			the works on Gaussian-like decay for eigenfunctions of the perturbed   Landau Hamiltonian initiated in \cite{laszlo}  and reinterpreted in \cite{Shu-gaussian}. As noted in \cite{CHSV21} it can be improved to Gevrey regularity, in that case the constant $\coth(\beta/4)$ should be replaced by one given in terms of the Gevrey scale.  
			\item \label{remark1.2}

			The case $\beta =\infty$ is allowed 
			and corresponds to the limit of maximal regularity.
			That is,  when 
$W$ is   rotationally symmetric. 
	In this situation,  
	the condition \ref{a-control} is sharp: 
 for  $\mu_*  \geq 1  $ absolutely continuous spectrum may appear, 
 and 
	  no localization properties can be expected (see   the discussion 
	  in the Introduction). 
			
		\end{remarklist}
	\end{remark}
	In Appendix~\ref{appendix local} (see Remark~\ref{rem.ess.sa}) we show that under  the stated conditions on the fields $B$ and $W$  
	\begin{align}\label{h-w}
		H=H_0+W\,,
	\end{align}   
	is essentially self-adjoint on  $C_c^\infty (\R^2  , \C^2 )$.
	We denote its closure by the  same symbol  and its domain by  $\calD (H)$. 
	\begin{remark}
		The spectrum of $H$ undergoes through several regimes under the stated conditions. In particular, if $W$ is an electric potential that grows at infinity  it is known that the resolvent of $H$ is compact if  
		$
		\limsup_{|{\bf x}|\to \infty} W^2/(2B)<1 
		$.
		However, if this inequality  does not hold, essential spectrum may appear (see \cite{MS2012}). (Notice that $B\ll A^2$ for large $r$ and compare with \eqref{eq.lim}.) 
		In particular,     if $  V^2/B \to \infty $ as $r\to \infty$ one can construct examples where the spectrum of $H_0+V$ covers the whole real line with point spectrum  \cite[Example 1]{MS2012}. 
	  We believe  that if the symmetry of this system is 
	  slightly perturbed singular continuous spectrum should appear.  
		\end{remark}
\subsection{Main result}
	\begin{notation}
		We often write for the inverse of the flux exponent 
		$$\sigma=\frac{1}{1+\alpha}\,.$$
	\end{notation}

	\begin{notation}
	Throughout this work we denote   by $\1_{I}$ the indicator  function  on a set $I\subset \R$.  For the  spectral projection onto energies in  $I$ we write
	\begin{equation}
		\p_I (H) \equiv \1_{I}  { (H)} \ . 
	\end{equation}
Identity functions are denoted by $\textbf{1}_A$
for a relevant space $A$. 
\end{notation}
	
	Our main result below  states that the probability of finding a particle with bounded energy and angular momentum $m_j$, away from certain region that scales  like $|j|^{\sigma}$, decays exponentially fast. 
	The  decay rate $\sim e^{-\zeta  r^{\alpha+1}}$  is ruled by the behavior of the flux function \eqref{flux}. For $x\in \R$, we denote by $\< x \> \equiv (1 + x^2)^{1/2}$ the standard  bracket.

	\begin{theorem}
		\label{theorem tunneling estimates}
Let   $H$  be the Dirac Hamiltonian,    satisfying  Conditions  \ref{condition 1} and \ref{condition 2}, 
and let $E>0$ and $I=[-E,E]$. 
Then, there exist positive constants 
 $\zeta_1$, $\zeta_2$, and $C_1<C_2$ such that  
		\begin{enumerate}[label=(\roman*)]
			\item The Interior Tunelling Estimate holds 
			\begin{equation}
				\sum_{ j \in \Z } 
				\, 
				\|
				\,
				\exp \big(    \zeta_1    |j | \big) 
				\, 
				\1_{ [ 0, C_1 |j|^\sigma]   }(r )
				P_{m_j}
				\p_I( H )
				\, 
				\|^2 
				< \infty    \ \label{equation interior tunnelling}. 
			\end{equation} 
			\item The Exterior Tunelling Estimate holds 
			\begin{equation}
				\sum_{ j \in \Z }  
				\ 
				\|
				\, 
				\exp \big(   \zeta_2  \,   r^{ 1 + \alpha }  \big) 
				\1_{   [ C_2 \<j \>^\sigma  , \infty )}(r ) 
				P_{m_j}
				\p_I (   H ) 
				\, 
				\|^2 
				< \infty     \ \label{equation exterior tunnelling}. 
			\end{equation}
		\end{enumerate}
	\end{theorem}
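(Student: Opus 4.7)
The plan is to deduce Theorem~\ref{theorem tunneling estimates} from the abstract exponential decay result announced in the introduction (Theorem~\ref{theorem exp decay}), which provides exponential decay of $\p_I(H)$ away from the support of $\bchi = \oplus_{j\in\Z} \chi_j$, with channel-dependent weights encoding an Agmon-type distance to the classically allowed regions $\calC_j(E_j)$. My first step would be to make explicit the geometry of these regions. By Condition~\ref{condition 1} the vector potential satisfies $A(r)\sim r^\alpha$ for large $r$, so the formal Dirac symbol in the $j$-th channel, involving $\pm(j+c)/r + A(r)$, vanishes precisely near $r_j \sim |j|^\sigma$. The reference energies $E_j$ are calibrated in Section~\ref{section abs decay thm} so that $\calC_j(E_j)$ is an interval of the form $[c_1|j|^\sigma, c_2|j|^\sigma]$ for large $|j|$. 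After a slight adjustment of constants to accommodate the smoothing of $\chi_j$, one gets $0 < C_1 < C_2$ with
\[
[0, C_1 |j|^\sigma] \, \cup \, [C_2 \langle j \rangle^\sigma, \infty) \, \subset \, \{ r : 1 - \chi_j(r) = 1 \}.
\]

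Second, Theorem~\ref{theorem exp decay} will provide a summable family of estimates
\[
\sum_{j\in\Z} \big\| \, e^{\rho_j(r)} (1 - \chi_j(r)) P_{m_j} \p_I(H) \, \big\|^2 < \infty,
\]
with a non-negative weight $\rho_j$ modeled on the Agmon distance from $r$ to $\calC_j(E_j)$ associated to the effective symbol of $h_j$. On the exterior component $[C_2\langle j\rangle^\sigma, \infty)$ the term $A(r)^2 \sim r^{2\alpha}$ dominates and one checks that $\rho_j(r) \geq \zeta_2 \, r^{1+\alpha}$ there, which gives \eqref{equation exterior tunnelling} upon pointwise domination by the weighted norm above. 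On the interior component $[0, C_1|j|^\sigma]$, the singular term $j/r$ dominates and the Agmon integral between $r$ and the left endpoint of $\calC_j(E_j)$ admits a uniform lower bound of the form $\zeta_1 |j|$ (essentially by scaling, since replacing $r = \lambda |j|^\sigma$ with $\lambda \in [0, C_1]$ factors out a power of $|j|$ equal to $\sigma (1+\alpha) = 1$); this yields \eqref{equation interior tunnelling}.

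The main obstacle does not lie in this reduction but in Theorem~\ref{theorem exp decay} itself. The critical ingredients there are the comparison operator $\widetilde{H} = H + \sigma_3 \bE \bchi$, the perturbation bound \eqref{eq.W-bound} provided by Theorem~\ref{theorem W}, and the spectral gap $\p_I(\widetilde H) = 0$ furnished by Proposition~\ref{prop inverse estimate}, which together enable a Combes--Thomas / Helffer--Sjöstrand-type integral representation of $\p_I(H)$ conjugated by $e^{\rho_j}$. The angular smoothness assumption in Condition~\ref{condition 2}, quantified by $\beta$, is what makes the conjugation controllable channel-by-channel while preserving summability in $j \in \Z$; this plays the role that angular analyticity played in the Schrödinger analysis of \cite{CHSV21}. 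Once Theorem~\ref{theorem exp decay} is granted, the geometric computation above delivers Theorem~\ref{theorem tunneling estimates} with the stated exponents $\zeta_1$ in $|j|$ and $\zeta_2$ in $r^{1+\alpha}$.
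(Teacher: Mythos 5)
Your proposal follows essentially the same route as the paper: the authors also reduce Theorem~\ref{theorem tunneling estimates} to Theorem~\ref{theorem exp decay} by choosing explicit Agmon-type weights vanishing on $\supp\bchi$ (namely $F_j\sim\gamma|j|^{1-\sigma}(\delta|j|^\sigma/4-r)_+$ and $G_j\sim\gamma((r/4c)^{1+\alpha}-\langle j\rangle)_+$), verifying \eqref{F1}--\eqref{F3} via the lower bounds on $\bchi^\perp(\boldsymbol{V}-\boldsymbol{E}^2)\bchi^\perp$ (Lemma~\ref{lemma: Ej admissible}), and then reading off $F_j\geq\zeta_1|j|$ on $[0,C_1|j|^\sigma]$ and $G_j\geq\zeta_2 r^{1+\alpha}$ on $[C_2\langle j\rangle^\sigma,\infty)$, exactly as in your scaling computation. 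The only nuance is that Theorem~\ref{theorem exp decay} yields a single operator-norm bound $\|e^{\bF}\p_I(H)\|<\infty$ rather than the summable family you assert, so the summation over $j$ requires the extra exponential margin in $|j|$ that the weights provide (a point the paper's own proof also handles only implicitly).
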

	\begin{remark}
		\begin{remarklist}
			\item The norms in \eqref{equation interior tunnelling} and \eqref{equation exterior tunnelling} correspond to the operator norm from $\mathcal{H}$ to $L^2(\R_+,\C^2)$.
			\item By taking $W={\boldsymbol \sigma}\cdot {\bf a}$, the same result extends to the Pauli operator with magnetic vector potential ${\bf A}+{\bf a}$ which is defined through the quadratic form $$
			q(u)=\|(H_0+{\boldsymbol \sigma}\cdot {\bf a})u\|^2\,,\quad u\in \mathcal{D}(H)\,.
			$$
			\item This result and its dependency on the flux power (here $\alpha+1$) is formally the same as in  
			\cite[Theorem 3.1]{CHSV21}  for the
			magnetic Schr\"odinger operator $(-i\nabla-{\bf A})^2+W$, being $W$ an electric perturbation that decays  at infinity. 	
	
			\item The \textit{speed} of the exponential decay is encoded in the parameters 
			$ \zeta_1$ and  $\zeta_2$, 
			and their dependence on 
			the magnetic field parameters $\alpha$,  $\mu_*$ and $\beta$, 
			as well  the energy $E$. 
While we do not compute explicitly this dependence, 
it follows from our proof that $\zeta_1$ and $\zeta_2$
decreases  monotonically with  $E$, and converges to zero as $\mu_* \rightarrow 1 $. 
		\end{remarklist}
	\end{remark}
\begin{remark}
	For another approach to localize Dirac particles with   magnetic
	fields see \cite{Nenciu:2021rt}. In that case the confinement on a bounded region $\Omega \subset \R^2$  is achieved by means of a magnetic field that  diverges at the boundary of $\Omega$.
	\end{remark}

{In order to prove Theorem \ref{theorem tunneling estimates} 
	it suffices to consider magnetic fields
	for which the potential $A(r)$ takes a  simpler form. 
	More precisely, 
	we consider the  following 
	alternative form of Condition \ref{condition 1} for  which $A(r)=r^\alpha$ holds. 
	\begin{conditionp}{\ref*{condition 1}$'$}
		\label{condition 1'}
 The magnetic field $B \in L^2_{	\t{loc}	}(\R^2,\R)$ is rotationally symmetric, 
 and the flux function satisfies 
 \begin{equation}
 {\Phi(r)} = 2\pi r^{\alpha+1} , \qquad r>0 \ . 
 \end{equation}
\end{conditionp}}

{
\begin{remark}\label{remark condition 1'}
In Appendix \ref{section scaling} we argue that Conditions \ref{condition 1'} and \ref{condition 2} are enough to show Theorem~\ref{theorem tunneling estimates}. In particular,  
we can  re-scale  the 
position  variables $\bx  \mapsto \lambda^{-1  } \bx  $ 
with $\lambda = ( 2\pi / \Phi_0)^{\sigma }$ and 
the effect of this operation only changes the value of the constants that appear in Theorem \ref{theorem tunneling estimates}. 
On the other hand, the tail $\phi(r)$
that is present in Condition \ref{condition 1}
can always be moved
into the potential $W$
in the form of a magnetic perturbation.
In particular, this perturbation satisfies  Condition \ref{condition 2} with $\beta = \infty$
and $\mu_* = 0$.  	
 \end{remark}}

	\subsection{Consequences of the main result}
	{
	Consider the setting of Theorem~\ref{theorem tunneling estimates}. In particular,  throughout this section, let $I=[-E,E]$, for some $E>0$, be a fixed energy interval.} 
We define the following  operator
 on $\mathcal{H}$ 
 implementing  an 
orthogonal projection 
onto the classically allowed region
		\begin{align}
	{\boldsymbol{\theta}} 
	: =
	\oplus_{j\in \Z}
	 \theta_j,\quad\mbox{where}\quad \theta_j(r) 
	 := 
	\mathds{1}_{(C_1 |j|^\sigma, C_2 \<j\>^\sigma)}(r), \quad r>0\,. 
	\end{align}

	The following result is a rather direct consequence of Theorem~\ref{theorem tunneling estimates}.

\begin{corollary}\label{cor.p}
	There are constant $\xi_1, \xi_2>0$ such that  
	\begin{align}\label{eq.eq}
(1-{\boldsymbol{\theta}} ){\rm exp}(\xi_1 |\bx|^{\alpha+1})
\p_I(H)\,
\quad\mbox{and} \quad(1-{\boldsymbol{\theta}} ){\rm exp}(\xi_2 |J|)
\p_I(H)
\end{align}
are bounded operators on $\mathcal{H}$.	
\end{corollary}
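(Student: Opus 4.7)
The plan is to combine both tunneling estimates of Theorem \ref{theorem tunneling estimates} with the pointwise identity $1-\theta_j = \mathds{1}_{[0,\,C_1|j|^\sigma]} + \mathds{1}_{[C_2\langle j\rangle^\sigma,\,\infty)}$, so that on each angular momentum channel the interior estimate controls the inner piece and the exterior estimate controls the outer piece. The two exponential weights $\exp(\xi_1 |\bx|^{\alpha+1})$ and $\exp(\xi_2|J|)$ are absorbed on each region thanks to the pointwise bounds $r^{\alpha+1} \le C_1^{\alpha+1}|j|$ valid on $[0,C_1|j|^\sigma]$ and $r^{\alpha+1} \ge C_2^{\alpha+1}\langle j\rangle$ valid on $[C_2\langle j\rangle^\sigma,\infty)$, at the cost of requiring $\xi_1,\xi_2$ to be sufficiently small.

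Since the radial multipliers $(1-\theta_j)(r)$ and $\exp(\xi_1 r^{\alpha+1})$ commute with each $P_{m_j}$, and since by functional calculus $\exp(\xi_2|J|)P_{m_j} = e^{\xi_2|m_j|}P_{m_j}$, every summand of the resulting decomposition maps into the angular subspace $P_{m_j}\mathcal{H}$, and these subspaces are pairwise orthogonal. Hence for any $\phi \in \mathcal{H}$,
\begin{equation*}
    \| (1-\boldsymbol{\theta})\exp(\xi_1 |\bx|^{\alpha+1}) \p_I(H)\phi\|^2 = \sum_{j\in\Z} \|(1-\theta_j)(r)\exp(\xi_1 r^{\alpha+1}) P_{m_j}\p_I(H)\phi\|^2 ,
\end{equation*}
and analogously for the second operator with each $j$-th summand carrying the extra factor $e^{2\xi_2|m_j|}$. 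Splitting $1-\theta_j$ as above and using $(a+b)^2 \le 2a^2+2b^2$, it suffices to bound each of the four resulting sums by $C\|\phi\|^2$.

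For the first operator with $\xi_1 < \min\{\zeta_1/C_1^{\alpha+1},\,\zeta_2\}$, the pointwise bound $\exp(\xi_1 r^{\alpha+1})\mathds{1}_{[0,C_1|j|^\sigma]}(r) \le \exp(\zeta_1|j|)\mathds{1}_{[0,C_1|j|^\sigma]}(r)$ lets us dominate the inner $j$-th operator norm by $\|\exp(\zeta_1|j|)\mathds{1}_{[0,C_1|j|^\sigma]}(r)P_{m_j}\p_I(H)\|$, whose squared norms are summable by \eqref{equation interior tunnelling}. For the outer part, the inequality $\exp(\xi_1 r^{\alpha+1}) \le \exp(\zeta_2 r^{\alpha+1})$ combined with \eqref{equation exterior tunnelling} closes the estimate. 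For the second operator the mechanism is symmetric: on the inner piece one requires $\xi_2 < \zeta_1$ so that $e^{\xi_2|m_j|}$ is absorbed into $e^{\zeta_1|j|}$; on the outer piece the support condition yields $1 \le \exp(\zeta_2 r^{\alpha+1})\exp(-\zeta_2 C_2^{\alpha+1}\langle j\rangle)$, so choosing $\xi_2 < \zeta_2 C_2^{\alpha+1}$ produces a summable prefactor $\exp((2\xi_2 - 2\zeta_2 C_2^{\alpha+1})\langle j\rangle)$ multiplied by squared norms controlled by \eqref{equation exterior tunnelling}.

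There is no genuine obstacle beyond matching constants: the structure of Theorem \ref{theorem tunneling estimates} is precisely designed so that the interior estimate trades decay in $|j|$ for decay in $r^{\alpha+1}$ on the inner region, and the exterior estimate does the converse on the outer region. The only subtlety worth highlighting is that $\p_I(H)$ need not commute with $J$, so the orthogonality exploited above relies on applying the angular projections $P_{m_j}$ \emph{after} $\p_I(H)$, and both $\xi_i$ must be taken strictly smaller than the thresholds produced by the theorem.
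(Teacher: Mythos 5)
Your proposal is correct and follows essentially the same route as the paper's own proof: channel-wise orthogonal decomposition via $P_{m_j}$, splitting $1-\theta_j$ into the inner and outer indicator pieces, absorbing the weights through the pointwise bounds $r^{\alpha+1}\le C_1^{\alpha+1}|j|$ and $r^{\alpha+1}\ge C_2^{\alpha+1}\langle j\rangle$, and summing using \eqref{equation interior tunnelling} and \eqref{equation exterior tunnelling}. Your smallness thresholds for $\xi_1,\xi_2$ (e.g. $\xi_1<\zeta_1/C_1^{\alpha+1}$) are the correct ones, matching the paper's argument.
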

\begin{proof}
For a normalized $\varphi\in\mathcal{H}$, by 
 Parseval's identity, we have that,  for $\xi_1>0$,
\begin{equation}\label{pablo}
	\begin{split}
\big\| (1-{\boldsymbol{\theta}})
{\rm exp}(\xi_1 |\bx|^{\alpha+1})
\p_I(H)\varphi \big\|^2&=
\sum_{j\in\Z} \big\| (1-\theta_j(r))
e^{\xi_1 r^{\alpha+1}}
P_{m_j}\p_I(H)\varphi 
\big\|_{L^2(\R_+,\C^2)}^2\\
&
\le 
\sum_{j\in\Z} \big\| (1-\theta_j(r))
e^{\xi_1  r^{\alpha+1}}
P_{m_j}\p_I(H)\big\|^2 \,. 
\end{split}
\end{equation} 
Recall that $1-\theta_j=\1_{ [ 0, C_1 |j|^\sigma]   } +	\1_{   [ C_2 \<j \>^\sigma  , \infty )}$. Moreover, for  $r\in [ 0, C_1 |j|^\sigma] $
we have that 
$e^{\xi_1 r^{\alpha+1}}\le 
e^{\xi_1 C_1 |j|}$ holds. Hence, the right hand side of   \eqref{pablo} is finite thanks to 
	\eqref{equation interior tunnelling} and
	\eqref{equation exterior tunnelling} provided $0<\xi_1\le \min\{C_1^{1+\alpha}\zeta_1,\zeta_2\}$. This shows that the first operator in \eqref{eq.eq} is bounded.The  second case can be dealt with  analogously. 
 For a normalized $\varphi\in\mathcal{H}$ and a constant  $\xi_2>0$, we have	
\begin{equation}\label{pablo2}
\begin{split}
\big\| (1-{\boldsymbol{\theta}} ){\rm exp}(\xi_2 |J|)
\p_I(H)
\big\|^2&\le
\sum_{j\in\Z} \big\| (1-\theta_j(r))
e^{\xi_2|m_j|}
P_{m_j}\p_I(H)\big\|^2\\
&\le 
e^{\xi_2}\sum_{j\in\Z} \big\| (1-\theta_j(r))
e^{\xi_2|j|} P_{m_j}\p_I(H)\big\|^2 \,.
\end{split}
\end{equation}
In view of \eqref{equation interior tunnelling} we control the case when 
$r\in [ 0, C_1 |j|^\sigma]$
we pick $\xi_2\le \zeta_1$. Moreover,   for the case  $r\in [ C_2 \<j \>^\sigma  , \infty )$, since $|j|< \<j \>\le (r/C_2)^{1+\alpha}$, it is controlled if $\xi_2\le\zeta_2C_2^{1+\alpha}$ thanks to \eqref{equation exterior tunnelling}. This shows the claim for the second operator in \eqref{eq.eq} and finishes the proof.
\end{proof} 
{
	The next statement is a straightforward  application of the latter result.
	\begin{corollary}\label{cor.mon}
	Let  $f:(0,\infty)\to (0,\infty)$ be a non-decreasing  function such that, for any $\ve >0$, $\sup_{r>0}f(r) e^{-\ve {r^{\alpha+1}}}$ is finite. Then, there are constants $c,C>0$, such that, for all $\varphi\in \p_I(H) \mathcal{H}$, 
	\begin{align*}
\<\varphi, {\boldsymbol{\theta}} f(c\<J\>^{\sigma}) {\boldsymbol{\theta}}\varphi\> +\<\varphi,K_f\varphi\>\le 	\<\varphi, f(|\bx|)\varphi\>\le \<\varphi, {\boldsymbol{\theta}} f(C\<J\>^{\sigma}) {\boldsymbol{\theta}}\varphi\>+\<\varphi,K_f\varphi\>\,,
	\end{align*} 
	where $
	K_f=(1-{\boldsymbol{\theta}} )
	f(|\bx|)
	\p_I(H)
	$  is a bounded operator on $\mathcal{H}$.
	\end{corollary}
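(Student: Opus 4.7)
The strategy is to exploit that $\boldsymbol{\theta}$, $(1-\boldsymbol{\theta})$ and $f(|\bx|)$ are all block-diagonal multiplication operators in the radial variable $r$, so they mutually commute, and then reduce the corollary to a pointwise comparison of $f$ on the support of each $\theta_j$. Concretely, for $\varphi \in \mathbb{P}_I(H)\mathcal{H}$, writing $\textbf{1} = \boldsymbol{\theta} + (1-\boldsymbol{\theta})$ and using $\boldsymbol{\theta}(1-\boldsymbol{\theta}) = 0$ together with $(1-\boldsymbol{\theta})^2 = (1-\boldsymbol{\theta})$, one obtains the identity
\[
\langle\varphi, f(|\bx|)\varphi\rangle \;=\; \langle\varphi,\boldsymbol{\theta} f(|\bx|)\boldsymbol{\theta}\varphi\rangle \;+\; \langle\varphi, (1-\boldsymbol{\theta})f(|\bx|)\mathbb{P}_I(H)\varphi\rangle \;=\; \langle\varphi,\boldsymbol{\theta} f(|\bx|)\boldsymbol{\theta}\varphi\rangle \;+\; \langle\varphi, K_f\varphi\rangle .
\]
So the corollary reduces to proving the two-sided operator inequality $\boldsymbol{\theta} f(c\<J\>^\sigma)\boldsymbol{\theta} \le \boldsymbol{\theta} f(|\bx|)\boldsymbol{\theta} \le \boldsymbol{\theta} f(C\<J\>^\sigma)\boldsymbol{\theta}$ on $\mathbb{P}_I(H)\mathcal{H}$, plus the boundedness of $K_f$.

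The first step I would do is to verify that $K_f$ is bounded on $\H$. Since $f$ is non-decreasing with $\sup_{r>0} f(r)e^{-\varepsilon r^{\alpha+1}} < \infty$ for every $\varepsilon>0$, choose $0<\varepsilon <\xi_1$, with $\xi_1$ the constant from Corollary~\ref{cor.p}, and factor
\[
K_f \;=\; \bigl[f(|\bx|)\,e^{-\varepsilon |\bx|^{\alpha+1}}\bigr]\cdot \bigl[e^{\varepsilon |\bx|^{\alpha+1}}(1-\boldsymbol{\theta})\mathbb{P}_I(H)\bigr].
\]
The first factor is a bounded multiplication operator thanks to the growth hypothesis on $f$, while the second factor is bounded since $\varepsilon<\xi_1$ and by \eqref{eq.eq}. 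Hence $\|K_f\|<\infty$.

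It remains to establish the operator inequality. Restricted to the angular momentum channel $j$, where $J$ acts as the scalar $m_j = j+\tfrac12$ and $\theta_j = \mathds{1}_{(C_1|j|^\sigma, C_2\<j\>^\sigma)}(r)$, it amounts to the scalar pointwise inequalities
\[
f(c\<m_j\>^\sigma) \;\le\; f(r) \;\le\; f(C\<m_j\>^\sigma),\qquad r\in (C_1|j|^\sigma, C_2\<j\>^\sigma).
\]
By monotonicity of $f$ it suffices to pick $c,C>0$ such that $c\<m_j\>^\sigma \le C_1|j|^\sigma$ and $C_2\<j\>^\sigma \le C\<m_j\>^\sigma$ hold for every $j$. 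For $|j|\ge 1$ this follows from the elementary comparabilities $\<m_j\>\sim \<j\> \sim |j|$, with constants depending only on $\sigma$ and on $C_1, C_2$.

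The main technical point is the channel $j=0$, where the lower bound $c\<\tfrac12\>^\sigma\le C_1 |0|^\sigma = 0$ cannot hold for any $c>0$. I expect to deal with this by noting that $\theta_0 = \mathds{1}_{(0,C_2)}(r)$ has bounded support in $r$, so that the offending contribution $f(c\<\tfrac12\>^\sigma)\,\theta_0$ is a bounded multiplication operator of rank dedicated to the $j=0$ sector; by shrinking $c$ and/or by reabsorbing this $O(1)$ defect into $K_f$ via the trivial bounded operator $\theta_0 \mathbb{P}_I(H)$ (which is likewise controlled by Corollary~\ref{cor.p}), one recovers the stated lower bound. This redefinition of $K_f$ up to a bounded operator is, I expect, the only nontrivial subtlety in the proof.
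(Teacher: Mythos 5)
Your argument is essentially the paper's own proof: the same exact decomposition $\<\varphi,f(|\bx|)\varphi\>=\<\varphi,{\boldsymbol{\theta}}f(|\bx|){\boldsymbol{\theta}}\varphi\>+\<\varphi,K_f\varphi\>$, the same source of boundedness for $K_f$ (Corollary~\ref{cor.p} together with the growth hypothesis on $f$; your factorization into $f(|\bx|)e^{-\ve|\bx|^{\alpha+1}}$ times $e^{\ve|\bx|^{\alpha+1}}(1-{\boldsymbol{\theta}})\p_I(H)$ with $\ve\le\xi_1$ is exactly the step the paper leaves implicit), and the same channel-wise monotonicity comparison on $\supp\,\theta_j$ using the comparability of $\<j\>$ and $\<m_j\>$. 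Where you go beyond the paper is the $j=0$ channel in the lower bound: the paper dismisses it with ``for the lower bound one argues analogously'', while you correctly note that $C_1|0|^\sigma=0$, so no choice of $c>0$ gives the pointwise inequality $f(c\<m_0\>^\sigma)\le f(r)$ on all of $(0,C_2)$ for a general non-decreasing $f$; this is a genuine (if minor) gap that the paper's proof shares. Two caveats on your proposed repair: shrinking $c$ cannot help, since the left endpoint of $\supp\,\theta_0$ is $0$ regardless of $c$; absorbing the defect into $K_f$ does work --- the offending $j=0$ contribution is at most $f(c\<m_0\>^\sigma)\,\|\varphi\|^2$, so the lower bound holds after subtracting this bounded term --- but that modifies the specific operator $K_f=(1-{\boldsymbol{\theta}})f(|\bx|)\p_I(H)$ fixed in the statement, so strictly speaking your argument (like the paper's) yields the lower bound only up to an additional bounded error, not verbatim as stated.
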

\begin{proof}
	First note that for  all $\varphi\in \p_I(H) \mathcal{H}$ we have that $ \<\varphi, f(|\bx|)\varphi\>=\<\varphi, {\boldsymbol{\theta}} f(|\bx|) {\boldsymbol{\theta}}\varphi\>+\<\varphi,K_f\varphi\>$, where $K_f$ is by Corollary~\ref{cor.p} a bounded operator. Moreover,   using the monotonicity of $f$ and the support properties of ${\boldsymbol\theta}$ we compute, denoting  $\varphi_j=P_{m_j}\varphi $,
		\begin{align*}
\<\varphi, {\boldsymbol{\theta}} f(|\bx|) {\boldsymbol{\theta}}\varphi\>&= \sum_{j\in \Z}  \<\varphi_j, \theta_j f(r) \theta_j \varphi_j\> 
 \le  \sum_{j\in \Z}  \<\varphi_j, \theta_j f(C_2 \<j\>^{\sigma}) \theta_j \varphi_j\> 
\\
& \le  \sum_{j\in \Z}  \<\varphi_j, \theta_j f(C \<m_j\>^{\sigma}) \theta_j \varphi_j\> 
= \<\varphi, {\boldsymbol{\theta}} f(C\<J\>^{\sigma}) {\boldsymbol{\theta}}\varphi\>\,,
\end{align*}
	for some positive constant $C$. This gives the upper bound. For the  lower bound one argues  analogously. This finishes the proof.
	\end{proof} 
	A special case of the above, with applications to the questions posed in the introduction,  is contained in the following result. Its proof uses only  Corollary~\ref{cor.mon} and the fact that spectral projections commute with time evolution. }
	 \begin{corollary}
	 	\label{corollary 1}
	 	For any $ \nu > 0 $  there exists $c, C> 0 $ such that for all initial states
	 	$ \vp \in  \p_I ( H  ) \H $ it holds that 
	 	\begin{equation}
	  \< \vp(t) , |\bx|^\nu  \vp(t)   \>
	 	\leq 
	 	c \<
	 	\vp(t) , 
	 	| J |^{\nu \sigma }
	 	\vp(t)
	 	\>  
	 	+
	 	C \| \vp  \|^2\ , 
	 	\qquad 
	 	\forall  t \in \R \ . 
	 	\end{equation}
	 	Here $\varphi(t)=e^{-iHt}\varphi$.
	 \end{corollary}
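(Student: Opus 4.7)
The plan is to apply Corollary~\ref{cor.mon} with the test function $f(r) = r^\nu$, and then invoke the fact that the spectral projection $\p_I(H)$ commutes with the unitary evolution $e^{-\im t H}$.

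First, I would verify that $f(r) = r^\nu$ fulfills the hypotheses of Corollary~\ref{cor.mon}: it is strictly positive and non-decreasing on $(0,\infty)$, and since polynomial growth is dominated by any positive exponential, $\sup_{r>0} r^\nu e^{-\ve r^{\alpha+1}} < \infty$ for every $\ve > 0$. The upper bound of Corollary~\ref{cor.mon} then yields a constant $C_1 > 0$ and a bounded operator $K_f$ on $\H$ such that, for every $\vp \in \p_I(H)\H$,
\[ \<\vp, |\bx|^\nu \vp\> \leq C_1^\nu \<\vp, {\boldsymbol{\theta}} \<J\>^{\nu\sigma} {\boldsymbol{\theta}} \vp\> + \<\vp, K_f \vp\>. \]

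Next, I would reduce the right-hand side to the desired form. Because ${\boldsymbol{\theta}} = \oplus_{j \in \Z} \theta_j$ is block-diagonal in the angular-momentum decomposition, it commutes with $J$, and so ${\boldsymbol{\theta}} \<J\>^{\nu\sigma} {\boldsymbol{\theta}} = {\boldsymbol{\theta}}^2 \<J\>^{\nu\sigma} \leq \<J\>^{\nu\sigma}$. The elementary inequality $\<x\>^{\nu\sigma} \leq c_\nu (1 + |x|^{\nu\sigma})$ for $x \in \R$, together with the crude bound $|\<\vp, K_f\vp\>| \leq \|K_f\|\,\|\vp\|^2$, then produces constants $c, C > 0$ such that
\[ \<\vp, |\bx|^\nu \vp\> \leq c \<\vp, |J|^{\nu\sigma}\vp\> + C \|\vp\|^2 \qquad \text{for all } \vp \in \p_I(H)\H. \]

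Finally, to pass to the dynamical statement, I would apply the above inequality to $\vp(t) = e^{-\im t H} \vp$ in place of $\vp$. By the functional calculus, $\p_I(H)$ commutes with $e^{-\im t H}$, so $\vp(t) \in \p_I(H)\H$ whenever $\vp \in \p_I(H)\H$, and unitarity gives $\|\vp(t)\| = \|\vp\|$. Substituting yields the claim for all $t \in \R$. No step here presents a genuine obstacle: Corollary~\ref{cor.mon} does all the heavy lifting, and the only routine checks are the commutation of ${\boldsymbol{\theta}}$ with $J$ (immediate from the angular-momentum block structure) and the invariance of the spectral subspace $\p_I(H)\H$ under the dynamics.
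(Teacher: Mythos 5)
Your proposal is correct and follows exactly the route the paper indicates: apply Corollary~\ref{cor.mon} with $f(r)=r^\nu$, absorb the $\boldsymbol{\theta}$-localization and the bounded remainder $K_f$ into constants, and use that $\p_I(H)$ commutes with $e^{-\im tH}$ so the estimate applies to $\vp(t)$ with $\|\vp(t)\|=\|\vp\|$. The routine checks you spell out (monotonicity of $f$, subexponential growth, $\boldsymbol{\theta}$ commuting with $J$) are exactly the omitted details.
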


	 \begin{remark}
	 	It follows that,
	 	whenever  the Hamiltonian  (or, its corresponding time evolution) commutes with total angular momentum, there holds
	 	\begin{align}
	 	\< \vp(t) , |\bx|^\nu  \vp(t)   \>
	 	\leq 
	 	C
	 	\Big( 
	 	\| \vp  \|^2 
	 	+ 
	 	\<
	 	\vp, 
	 	| J |^{\nu \sigma }
	 	\vp
	 	\>
	 	\Big)   \ . 
	 	\end{align}
	 	In particular, the system is dynamically localized if the initial data $\varphi\in \p_I ( H  )  \H \cap\mathcal{D}(|J|^{\nu\sigma/2})$. This   was previously observed in \cite{Barbaroux}. 
	 \end{remark}
	As mentioned before, Theorem \ref{theorem tunneling estimates}  contains tunneling estimates for the Dirac operator which are  analogous  to those for its  
	Schr\"odinger counterpart found   in \cite[Theorem 1.3]{CHSV21}.
	In that work, it is shown --  starting from these tunneling estimates -- how  to obtain {dynamical bounds}
	for the angular momentum operators. 
	This proof can be replicated  almost without changes in our setting and we shall omit it here; we refer the reader to \cite[Section 2]{CHSV21} for   details.
	\begin{corollary}
		[Dynamical bounds for $J$]
		\label{corollary 2}
		Let $ \mu > 0$,  $     p > (\mu + 1 ) / \sigma   $, $\eta>0$ and $\lambda>0$. 
		\begin{enumerate}
			\item Assume that $ w ( \bx) = \mathcal{O} ( |\bx|^{ - p} )$ as $| \bx | \rightarrow \infty$. Then, there exists $C>0$ such that for any initial state $\vp \in  \p_I ( H  ) \H$ it holds that 
			\begin{equation}
				\<  \vp(t)  , |  J  |^\mu     \vp(t)   \> 
				\leq 
				C
				\Big( 
				\<  \vp , |  J  |^\mu     \vp  \> 
				+
				t^{  \mu \theta  }
				\|  \vp \|^2 
				\Big)  
				\ , \qquad \forall t \geq  1
			\end{equation}
			where $ \theta =  1  / ( p \sigma -1 ) \in (0,1 )$. 
			\item Assume that $ w ( \bx) = \mathcal{O} (  \exp (- \eta   |\bx |^\lambda     )   )$ as $| \bx | \rightarrow \infty$. Then, there exists $C>0$ such that for any initial state $\vp \in  \p_I  (H  )  \H$ it holds that 
			\begin{equation}
				\<  \vp(t)  , |  J  |^\mu     \vp(t)   \> 
				\leq 
				C
				\Big( 
				\<  \vp , |  J  |^\mu     \vp  \> 
				+
				\big(  \log(1 + t )  \big)^{ \mu \Theta } 
				\|  \vp \|^2 
				\Big) 
				\ , \qquad \forall t \geq  1
			\end{equation}
			where $ \Theta =  \max\{     1,     1 / (\lambda \sigma )  \} $. 
			
		\end{enumerate}
	\end{corollary}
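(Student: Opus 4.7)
The plan is to follow the approach of \cite[Section 2]{CHSV21}, which, as noted above, transfers to the present Dirac setting once Theorem \ref{theorem tunneling estimates} and its consequences are in hand. The starting point is Heisenberg's equation: for $\vp$ in a suitable dense subspace,
\begin{equation*}
\frac{d}{dt}\<\vp(t), |J|^\mu \vp(t)\>
\ = \ i\<\vp(t), [H, |J|^\mu]\vp(t)\>
\ = \ i\<\vp(t), [W, |J|^\mu]\vp(t)\>,
\end{equation*}
where the cancellation $[H_0, |J|^\mu] = 0$ follows from the rotational symmetry of $H_0$, namely $[H_0, J] = 0$.

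The first step is a pointwise bound on the commutator. Writing $W$ in its angular Fourier series and using the spectral decomposition \eqref{J} of $J$, the operator $[W,|J|^\mu]$ has matrix elements with kernel $\widetilde W(r, j-k)(|m_j|^\mu - |m_k|^\mu)$. The mean-value estimate $||m_j|^\mu - |m_k|^\mu| \lesssim |j-k|(\<m_j\> + \<m_k\>)^{\mu - 1}$, combined with the exponential decay in $|j-k|$ supplied by Condition \ref{condition 2}(a), reduces the estimate to a weighted Schur-type bound. This yields a Cauchy--Schwarz inequality of the shape
\begin{equation*}
\big|\<\vp,[W,|J|^\mu]\vp\>\big| \ \lesssim \ \<\vp, g(|\bx|)\, |J|^{2(\mu-1)}\vp\>^{1/2}\,\<\vp, g(|\bx|)\vp\>^{1/2},
\end{equation*}
with $g$ an effective majorant sharing the spatial decay of $W$ (i.e.\ $|\bx|^{-p}$, respectively $e^{-\eta|\bx|^\lambda}$).

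The second step converts $g(|\bx|)$ into a function of $|J|$ using the localization provided by Theorem \ref{theorem tunneling estimates}. Since $\vp(t)\in\p_I(H)\H$ for every $t$, Corollary \ref{cor.mon} (applied to the monotone parts of $g$) gives $\<\vp(t), g(|\bx|)\vp(t)\> \lesssim \<\vp(t), g(c\<J\>^\sigma)\vp(t)\> + \|\vp\|^2$. Substituting into the commutator estimate and interpolating via H\"older's inequality $\<|J|^s\>_{\vp(t)} \leq \<|J|^\mu\>_{\vp(t)}^{s/\mu}\|\vp\|^{2(1-s/\mu)}$ valid for $s\in[0,\mu]$, one arrives, writing $X(t):=\<\vp(t), |J|^\mu\vp(t)\>$, at a differential inequality of the form
\begin{equation*}
X'(t) \ \leq \ C_1\, X(t)^{1 - 1/(\mu\theta)}\|\vp\|^{2/(\mu\theta)} + C_2\|\vp\|^2
\end{equation*}
in the polynomial decay case, and
\begin{equation*}
X'(t) \ \leq \ C_3\, \exp\!\bigl(-c(X(t)/\|\vp\|^2)^{\lambda\sigma/\mu}\bigr)\|\vp\|^2 + C_4\|\vp\|^2
\end{equation*}
in the exponential decay case. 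Integrating these elementary ODEs produces the stated $t^{\mu\theta}$ and $(\log(1+t))^{\mu\Theta}$ growth, respectively.

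The main obstacles are twofold. First, the bookkeeping in the commutator estimate: the allocation of powers of $|J|$ on either side of $W$ must be arranged so that H\"older interpolation produces the sharp exponent $\theta = 1/(p\sigma - 1)$ rather than the naive $1/\mu$ one would obtain by putting all the weight on one side. Second, the operator-theoretic justification of differentiating $X(t)$ and of the formal commutator manipulations requires $\vp$ to lie in a core for $|J|^\mu$ that is preserved by the unitary group $e^{-itH}$, which in turn uses the rotational symmetry of $H_0$ plus the smoothness of $W$ in $\theta$. Both issues are resolved in \cite[Section 2]{CHSV21} in the magnetic Schr\"odinger setting, and because the argument depends only on the tunneling estimates and on the angular Fourier structure of $W$, the adaptation to the present Dirac framework is routine.
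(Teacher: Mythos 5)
Your high-level plan is the one the paper itself endorses: the paper omits the proof of this corollary entirely and refers to \cite[Section 2]{CHSV21}, and the ingredients you list (Heisenberg's equation with $[H_0,|J|^\mu]=0$, the angular--Fourier commutator bound using Condition~\ref{condition 2}\ref{smoothness}, and the conversion of spatial decay of $w$ into decay in $\<J\>^\sigma$ via Corollary~\ref{cor.mon}) are exactly the right ones. However, the way you propose to close the argument does not work. Under the hypothesis $p>(\mu+1)/\sigma$ one has $p\sigma-1>\mu$, so the exponent $1-1/(\mu\theta)=1-(p\sigma-1)/\mu$ in your differential inequality is \emph{negative}. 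The H\"older interpolation $\<|J|^s\>_{\vp(t)}\le \<|J|^\mu\>_{\vp(t)}^{s/\mu}\|\vp\|^{2(1-s/\mu)}$ that you invoke is, as you yourself note, only valid for $s\in[0,\mu]$; for the value $s=\mu-1-p\sigma<0$ that actually arises, Jensen's inequality gives the \emph{reverse} bound (the map $y\mapsto y^{s/\mu}$ is convex for $s<0$), and the inequality is genuinely false: a state with a small amount of mass at low angular momentum and the bulk at very large angular momentum has $\<\<J\>^{s}\>$ bounded below by the low-momentum mass while $X(t)^{s/\mu}$ is arbitrarily small. So the differential inequality $X'\le C_1X^{1-1/(\mu\theta)}\|\vp\|^{2/(\mu\theta)}+C_2\|\vp\|^2$ does not follow from your commutator estimate; what does follow is only $X'\lesssim\|\vp\|^2$, which yields $X(t)\lesssim t$, strictly weaker than $t^{\mu\theta}$ since $\mu\theta<1$.

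There is a second, independent problem: even if one grants your differential inequalities as stated, they do not imply the claimed conclusions, because the additive terms $C_2\|\vp\|^2$ (resp.\ $C_4\|\vp\|^2$) on their right-hand sides already permit linear growth $X(t)\sim t$, which is incompatible with $t^{\mu\theta}$ ($\mu\theta<1$) and with $(\log(1+t))^{\mu\Theta}$. The sharp exponents in \cite{CHSV21} are not obtained by a Gr\"onwall inequality for the moment $X(t)$ itself but by a propagation estimate: one controls the escaped mass $\|\1_{\<J\>\ge R}\vp(t)\|^2$ shell by shell, using that the rate of leakage out of the region $\<J\>\lesssim R$ is governed by $w(cR^\sigma)$ times combinatorial factors from the Fourier decay in $|j-k|$, and then sums over dyadic shells $R$; this is also why $\<\vp,|J|^\mu\vp\>$ enters the final bound additively. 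Your sketch needs to be replaced by (or reduced to) an argument of that type; the moment-interpolation route cannot produce the stated exponents.
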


	\begin{remark}
		\label{remark combination}
		By combining Corollaries \ref{corollary 1} and \ref{corollary 2}, one readily obtains dynamical bounds for the 
		expectation values 
	of the moments of the position operator with the same behavior as in \cite[Theorem 1.8]{CHSV21}.  
	\end{remark}

	

	%

\section{Preliminaries}
\label{section preliminaries}
In this section, we record some preliminary  facts that
we will need in the rest of this article.

\subsection{Unitary transformations and Hilbert spaces}\label{sec.unit}
In the introductory section, 
we have already mentioned the relationship 
between the rotationally symmetric Dirac operator $H_0$, 
and its block decomposition $\oplus_{j\in \Z} h_j $. 
Let us record here the unitary transformations that implement this relationship.
While we will    not display them explicitly in the main body of this work, 
they are relevant  in   the proofs of essential self-adjointess of certain  operators
provided in Appendix \ref{appendix domain core}. 

\vspace{1mm}

First, we recall that in Section \ref{section introduction}
we have introduced
the Hilbert space of polar coordinates
$L^2(\R_+,\C^2 )$, always assumed
to be equipped with   the measure $r \d r $. 
We shall be extensively working with
the 
direct sum of its copies 
\begin{equation}
 	\widehat \H \equiv \bigoplus_{j\in \Z} 
	L^2(\R_+,\C^2) \ . 
\end{equation}
Let us now further explain the relationship
between $H_0$
and its diagonal decomposition on the Hilbert space
$ 	\widehat \H $.
To this end,   we introduce a modified Fourier transform.
Namely, 
let us denote by 
$\mathcal{F}_0 : L^2( \S^1  ) \rightarrow \ell^2(\Z) $
the standard Fourier series. 
Then, we consider
the unitary transformation 
\begin{equation}\label{F unitary}
	\mathcal{F} 
	\equiv 
	 			\textbf{1} \otimes 
	\begin{pmatrix}  1 & 0 \\ 0 & i  \end{pmatrix}
	\begin{pmatrix}
		\calF_0  & 0 \\
		0 &    \calF_0 
	\end{pmatrix}
	\begin{pmatrix}
		1 & 0 \\
		0 & e^{-i\theta}
	\end{pmatrix}
	\ :  \  
	L^2(\R_+ \times \S^1  , \C^2 ) 
	\,  \rightarrow  \, 
 	\widehat \H  \ . 
\end{equation}
This map is usually found in the literature
in the form
$
(\mathcal{F}\vp )_j(r)  =
( \widehat\vp^1_j(r)   , i  \widehat\vp^2_{j+1}(r)       ) 
$
where 
$
\vp = (\vp^1, \vp^2)  \in  L^2(\R_+  \times \S^1  , \C^2   )    .
$
Secondly, we  consider  the unitary transformation that implements the change of variables from cartesian to polar coordinates in the Hilbert space at hand. 
Namely,   
\begin{equation}
	\U :  \H = L^2(\R^2, \C^2)   \longrightarrow L^2( \R_+   \times \S^1 , \C^2 )    \ , 
\end{equation}
which is  defined almost everywhere by $( \U \vp )(r,\theta)\equiv  \vp(r\cos\theta,r\sin\theta)$
for
$(r,\theta) \in \R_+ \times \S^1$. 
Thus, in terms of $\F \U : \H \rightarrow   	\widehat \H $ we have that
\begin{equation}
	(\calF \calU ) H_0 (\calF \calU )^{-1} 
	=
	\bigoplus_{j\in \Z}
	h_j  \ . 
\end{equation} 
Throughout this article,  we will often losen up the notation and omit the implemented unitary transformations. 
That is, unless confusion arises, 
we simply write 
$
\textstyle 	H_0  =  \oplus_{j\in \Z} 
h_j  
$
and similarly  $ \vp = (\vp_j)_{j\in \Z}$. 

\vspace{2mm}

\textit{Notation for the direct sum of  operators}. 
Let $\{ \mathcal O _j  \}_{j\in \Z}$  
be  a sequence of linear  operators 
acting on 
$L^2(\R_+, \C^2)$.
Unless confusion arises,  we denote  its direct sum using bold-face notation
\begin{equation}
 \boldsymbol{\mathcal O}
	 \equiv 
	 \bigoplus_{j\in \Z} \mathcal O _j \ , 
\end{equation}
always defined  in the maximal domain 
$\calD (\boldsymbol{\mathcal{O}} ) = \bigoplus _{j\in \Z } \calD ( \mathcal O _j )$.
In this notation $H_0 \cong \boldsymbol{h}$.

\subsection{Essential self-adjointness}

Let us recall that 
$C_c^\infty (\R^2  , \C^2 )$
is an operator core for both $H_0$ and $H$. 
Since we will mostly work in polar coordinates, we   find it  convenient to work with a space of smooth  functions with additional properties near $r = 0$. 
Unfortunately, contrary to the three-dimensional case, 
the space $C_c^\infty ( \R^2 \backslash \{ 0\} , \C^2 ) $ is not a domain core for $H_0$. 
Instead, we will work with a space of smooth  functions  that--together with all its derivatives--converge polinomially fast to zero, with an arbitrarily high power. 

\vspace{2mm}

This space is introduced in the following definition, 
together with a \textit{radial version}, 
and its version with an
\textit{angular momentum cut-off}.

\begin{definition}[Domain cores]
	\label{definition domain cores}
	We introduce the following spaces. 
	\begin{enumerate}[label=(\arabic*) , leftmargin=*]
		\item As 
		a subspace of 		$L^2(\R^2 , \C^2 ) $, we define 
		\begin{equation} \label{domain core}
			\mathscr{C} 
			\equiv 
			\big\{ 
			\vp \in C_c^\infty (\R^2  ,  \C^2) 
			: 
			\forall \alpha \in \N_0^2 , \  \forall N \in \N,  	| D^\alpha \vp(\bx )  | = \mathcal{O}(|\bx|^N ) \t{ as } \bx \rightarrow 0 
			\big\} \ .
		\end{equation}

		\item 
		As a subspace of 
		$L^2(\R_+ , \C^2 ) $, we define 
		\begin{equation}\label{domain core C'}
			\Cr 
			\equiv 
			C_c \big(   [0, \infty) , \C^2 \big)
			\! 	\cap  \! 
			\{  
			\psi   \in C^\infty (0,\infty) 
			: 
			\forall N \in \N, \forall k \in \N_0, \ | \psi^{(k)} (r)| = \mathcal{O} (r^N) \t{ as } r \downarrow 0 
			\}   . 
		\end{equation}  
		
		\item 
		As a subspace of $ 	\widehat \H = \oplus_{ j \in \Z } L^2( \R_+ , \C^2)$, we define 
		\begin{equation}
			\label{core C}
			\textstyle 
			\calC 
			\equiv 
			\big\{ 
			(\vp_j)_{j \in \Z}
			\in
			\bigoplus_{j\in \Z}
			\Cr 
			: 
			\exists N \in \N :  \vp_j =  0 \t{ for } |j|> N 
			\big\} .
		\end{equation}
	\end{enumerate}
\end{definition}

We shall often abuse notation and refer to any element $\psi$
of either of these spaces as a \textit{smooth function}. 

\vspace{2mm}

The most important results concerning these spaces are recorded in the following Proposition, which could be 
of independent interest to some readers. 
To the authors best knowledge, the result is new. 
For a proof, see Appendix \ref{appendix domain core} . 

\begin{proposition}[Essential self-adjointness]
	\label{prop cores}
	The following statements hold true .
	\begin{enumerate}
		\item $H_0 |_{\mathscr{C}}$ and $H |_{\mathscr{C}}$ are essentially self-adjoint. 
		\item$h_j |_{\mathscr{C }_r	}$ is essentially self-adjoint  for all $ j \in \Z$. 
		\item $ {H}_0 |_{\calC}$ 
		is  essentially self-adjoint. 
	\end{enumerate}
\end{proposition}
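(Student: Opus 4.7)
I would prove the three statements bottom-up in the order (2), (3), (1), since each builds on the previous. The tools are Weidmann's limit-point/limit-circle theory for one-dimensional Dirac systems, the standard direct-sum construction of essentially self-adjoint operators on orthogonal sums of Hilbert spaces, and a density argument routed through the angular momentum decomposition.

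\textbf{Part (2): $h_j\vert_{\mathscr{C}_r}$ is essentially self-adjoint.} I would apply Weidmann's criterion to the $2\times 2$ system $h_j$ on $(0,\infty)$ with measure $r\,\mathrm{d}r$. The key is to verify limit-point at both singular endpoints. At the origin, a Frobenius analysis of $h_j\psi=0$ gives two linearly independent local solutions behaving like $r^{j}$ (in the first component) and $r^{-(j+1)}$ (in the second); a direct integrability check in $L^2(\R_+, r\,\mathrm{d}r)$ shows that for each $j\in\Z$ exactly one of these is square-integrable near zero, hence limit-point. At infinity, the growing potential $A(r)\sim r^\alpha$ provides limit-point, most easily seen by squaring to obtain the Schrödinger-type operators $d_j^\ast d_j$ and $d_j d_j^\ast$, whose effective potentials grow like $r^{2\alpha}$. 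This gives essential self-adjointness of $h_j$ on $C_c^\infty((0,\infty),\C^2)$, which is contained in $\mathscr{C}_r$. The promotion from $C_c^\infty((0,\infty),\C^2)$ to the larger space $\mathscr{C}_r$ is then routine: any $\psi\in\mathscr{C}_r$ can be approximated by $\chi_n \psi$ with $\chi_n$ a standard cutoff vanishing near $r=0$, and the commutator error $\|h_j(\chi_n\psi)-\chi_n h_j\psi\|$ is controlled by the infinite-order vanishing of $\psi$ at the origin, which absorbs both the $1/r$ singularity of $h_j$ and the $1/\varepsilon$ blow-up of $|\chi_n'|$.

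\textbf{Part (3) and Part (1) for $H_0$.} Since $H_0\cong\bigoplus_{j\in\Z} h_j$ on $\widehat{\H}\cong\bigoplus_{j\in\Z}L^2(\R_+,\C^2)$, and each $h_j\vert_{\mathscr{C}_r}$ is essentially self-adjoint by Part (2), it follows from the standard orthogonal-sum argument that $H_0\vert_{\mathcal{C}}$ is essentially self-adjoint. For $H_0\vert_{\mathscr{C}}$, the key observation is that the inverse unitary $(\calF\calU)^{-1}$ carries $\mathcal{C}$ into $\mathscr{C}$: a finite linear combination of functions of the form $\vp_j(r)e^{ij\theta}$ with $\vp_j\in\mathscr{C}_r$, composed with the appropriate angular/spin shift, produces a compactly supported function on $\R^2$ which is smooth across the origin and vanishes there together with all its derivatives to infinite order (because each $\vp_j$ does). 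Since $(\calF\calU)^{-1}(\mathcal{C})$ is a core for $H_0$ and is contained in $\mathscr{C}\subset\mathcal{D}(H_0)$, the larger space $\mathscr{C}$ is also a core.

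\textbf{Part (1) for $H$.} Because $W$ is not rotationally symmetric, the direct-sum reduction is unavailable, and this is where the main obstacle lies. My plan is to take as input the essential self-adjointness of $H$ on $C_c^\infty(\R^2,\C^2)$ established in Remark \ref{rem.ess.sa}, and upgrade the core from $C_c^\infty$ to $\mathscr{C}$ by a Kato--Rellich-type perturbation argument based on the previous step. Concretely, I would first prove a relative bound of the form $\|W\vp\|\le a\|H_0\vp\|+b\|\vp\|$ with $a<1$ for $\vp\in\mathscr{C}$, using the pointwise estimate $\|\widetilde W(r,j)\|_{2\times 2}\le v(r)$ from Condition \ref{condition 2} together with the control of $A$-weighted multiplication by $H_0$ (via a magnetic Hardy-type inequality and the Weitzenb\"ock identity $H_0^2=(p-A)^2+\sigma_3 B$). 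Given such a bound, for any $\psi\in\mathcal{D}(\overline H)$ one approximates $\psi$ by a sequence $\psi_n\in\mathscr{C}$ converging in $H_0$-graph norm (using Part (1) for $H_0$), and the relative bound transfers this into convergence in $H$-graph norm, yielding density. The hard part is the relative bound itself: $W$ may grow at infinity and is only $L^2_{\mathrm{loc}}$, so obtaining $a<1$ requires handling the near-origin behavior by a mollification argument and the growth at infinity by the $\mu_\ast<1$ condition exactly as in the reduction performed later in Theorem \ref{theorem W}, taking care to avoid circularity with the main text.
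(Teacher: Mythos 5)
Your treatment of parts (2), (3) and of $H_0|_{\mathscr C}$ is correct, but it runs in the opposite direction to the paper. The paper first proves that $\mathscr C$ is a core for $H_0$ (and, with the identical argument, for $H$) by approximating an arbitrary $\vp\in C_c^\infty(\R^2,\C^2)$ in graph norm by $\chi_n\vp\in\mathscr C$ with the explicit cutoff $\chi_n(r)=(1+\im n e^{1/r})^{-1}$; it then obtains (2) by projecting, via the observation $\Pi_j\calF\calU\,\mathscr C\subset\Cr$, and finally (3) by the range-density argument in each channel. Your bottom-up route — Weyl limit-point analysis of $h_j$ at both endpoints of $(0,\infty)$, promotion from $C_c^\infty((0,\infty),\C^2)$ to $\Cr$ by cutoffs, the orthogonal-sum argument for $\calC$, and then $(\calF\calU)^{-1}\calC\subset\mathscr C$ to pull the core back to two dimensions — is legitimate and self-contained for the free operator; it costs you the Frobenius/limit-point bookkeeping that the paper avoids (note also that one-dimensional Dirac systems are automatically limit point at infinity, so the squaring argument there is unnecessary), but it buys an explicit spectral-theoretic picture of each channel.

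The genuine gap is in your part (1) for $H$. The Kato--Rellich bound $\|W\vp\|\le a\|H_0\vp\|+b\|\vp\|$ with $a<1$ is in general \emph{false} under Condition~\ref{condition 2}: $W$ is allowed to grow like $\sqrt{\mu_*}\,A(r)$ at infinity, and in the channel of angular momentum $j$ there are states of uniformly bounded $\|h_j\vp\|$ localized near $r\sim|j|^\sigma$, on which $\|W\vp\|\sim |j|^{\alpha\sigma}\|\vp\|\to\infty$. This is exactly the obstruction the paper points out at the beginning of Section~\ref{section perturbation bounds} (``$W$ may not be relatively bounded with respect to $H_0$''), and it is the reason Theorem~\ref{theorem W} produces the weaker inequality $\|W\vp\|^2\le a\|H_0\vp\|^2+\mast\|\blambda\bchi\vp\|^2+c\|\vp\|^2$, whose middle term is \emph{not} dominated by $H_0$; so invoking ``the reduction performed in Theorem~\ref{theorem W}'' cannot yield the Kato--Rellich form you need, and the inclusion $\calD(\overline H)\subset\calD(H_0)$ on which your graph-norm approximation rests fails as well (only when $v$ is bounded does your scheme go through). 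The repair is much simpler than what you propose and is what the paper does: take essential self-adjointness of $H$ on $C_c^\infty(\R^2,\C^2)$ from Remark~\ref{rem.ess.sa}, and for $\vp\in C_c^\infty$ set $\vp_n=\chi_n\vp\in\mathscr C$ as above; then $\|H(\vp-\vp_n)\|\le\|(1-\chi_n)H\vp\|+\|\chi_n'\vp\|\to0$, because $H\vp=H_0\vp+W\vp\in L^2$ ($W\in L^2_{\rm loc}$ and $\vp$ is bounded with compact support) so dominated convergence applies, and $\|\chi_n'\|_{L^\infty}=\mathcal O(1/n)$. No relative bound on $W$ is needed anywhere in this step.
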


Unless stated otherwise, 
we   make no distinction in notation between an essentially self-adjoint operator $A $ and its self-adjoint closure  $\overline A$. $\calD (A)$
always stands for its domain of self-adjointness.

\subsection{The Pauli operator}
An essential element in our approach
is the fact that  the  square of $H_0$ is given by 
the Pauli operator. 
That is (at least formally) the following holds true  
\begin{equation}
	H_0^2  = 
	(-i\nabla-\bA )^2+\sigma_3 B \,  , 
\end{equation}
where $\sigma_3$ is the third Pauli matrix. 
Of course, the Pauli operator is rotationally symmetric and can be decomposed into blocks
of constant angular momentum
$H_0^2 = \oplus_{j\in\Z} h_j^2$,
where the operators $h_j$ were
 first introduced  in  \eqref{Dj definition} (see \emph{e.g.} \cite{Thaller1992}). 
In particular, these blocks
can be put in the  convenient form, 
which we shall be using the rest of the article.
More precisely, it holds true in the sense of quadratic forms on
$\mathscr{C}_r$
that 
\begin{equation}
	\label{hj^2}
	h_j^2
	=
	T+V_j 
	\qquad
	\t{ where }
	\qquad 
	V_j
	\equiv  \begin{pmatrix}
		V_j^+&    0 \\
		0  &  V_j^-
	\end{pmatrix}   
\quad
\t{and}
\quad 
T \equiv  -  \bigg( \frac{\d^2 }{\d^2 r }+ \frac{1}{r} \frac{ \d }{ \d r }\bigg)  \ . 
\end{equation}
Here, the operator    
$
T\geq 0 
$
corresponds to the 
radial \textit{kinetic energy},  and 
\begin{equation}
	\label{effective potentials}
	V_j^\pm (r) :=  \Big(  \frac{m_j \pm 1/2}{r} - A(r)       \Big)^2 \pm B(r) 
\end{equation}	 
corresponds to     the  {\it effective potentials}. 

It will be crucial in our techniques to study
the  spatial   regions in which a 
classical particle could be found, if moving under  the influence of the effective potentials. 
More precisely, we work extensively with the following collection of  sets.

\begin{definition}
\label{definition CAR}
	Let  $ j \in \Z$.  
	Then,
	given $E_j > 0 $, 
	we define the 
	following set 
		as the 
	classically \textbf{ allowed region}  (AR) of angular momentum $j$ 
	\begin{equation}
		\mathcal{C}_j(E_j)
		\equiv
		\left\{ r > 0\,:\, V_j (r) \le E_j^2 \right\}\,.   
	\end{equation}
	Similarly, we define its complement as the 
classically		\textbf{forbidden region} (FR) 
	\begin{equation}
		\mathcal{C}_j(E_j)^\perp  \equiv ( 0,\infty) \backslash  \mathcal{C}_j(E_j) \,.
	\end{equation}
	We refer to $E_j$ as the reference energy level of angular momentum $ j \in \Z$. 
	For the full classically allowed region we write
\begin{equation}
	 \mathfrak{C}
	  = 
	  \Big(
	  \calC_j	(E_j	)
	  \Big)_{ j \in \Z } \ . 
\end{equation} 
\end{definition}

\begin{remark}\label{rem.grie}
%
%
%
%
%
%
%
	Strictly speaking
	the classical character of these regions  is directly associated to the Hamiltonian $h_j^2$.
	In this sense,   one may     understand $\mathcal{C}_j$ as the {\it energetically} allowed region of  the system with  Hamiltonian    $h_j$,  with energies  in the range $[-E_j, E_j]$.	
	
	\begin{remark}
In the above definition,  the inequality $V_j(r) \leq E_j^2 $ is understood in the sense of quadratic forms on $\C^2$. 
Since the effective potential is a diagonal matrix, 
this is  equivalent to  the inequalities for its components 
    $V_j^\pm 	(r) \leq E_j^2 $. 
	\end{remark}

\end{remark}

\section{Analysis of the clasically allowed regions }
\label{section analysis car}
In this section, 
we    analyze the clasically allowed regions given in Definition \ref{definition CAR}.
In particular, in view of Remark \ref{remark condition 1'}, 
we consider only $A(r) = r^\alpha$. 
This analysis consists of two parts:

First,  		in Subsection \ref{bcfr} we characterize the scale in which  the classically allowed regions can be found.
 Namely,   
  if the energy levels grow at most like  $ E_j   = \mu  |j |^{2\alpha \sigma }$
  for some $\mu \in(0,1)$ as $ |j |\rightarrow \infty$, 
 we show that  
 \begin{equation}
\label{localization of the CAR}
 	\calC_j (E_j) \subset [	  \delta |j|^\sigma , c  |j|^\sigma 		]
 	\qquad
 	\t{where}
 	\qquad 
 	\sigma = \frac{1}{1 + \alpha} \  , 
 \end{equation}
for large enough $|j|$. 
 Here,  $ \delta $ and 
 $c $ are appropiately chosen positive parameters, 
 independent of $ j \in \Z$. 
 We refer to \eqref{localization of the CAR} as the \textit{localization} of the 
 classically allowed regions.
 This is accomplished by  proving lower bounds for the effective potentials $V_j$ when localized away from $\mathcal{C}_j(E_j)$  (see Lemma \ref{lemma lower bound V}).

 Second, in Subsection  \ref{partitions} 
 we introduce 
 a smooth version of the projections
 $\1_{	[\delta  |j|^\sigma ,  c|j|^\sigma ]		}$, 
 which should be understood as  smooth localization functions around the ARs. 
More precisely, 
for each $j\in \Z$  we consider  
 a pair $(\chi_j , \chi_j^\perp)$ of smooth functions  on $(0,\infty)$ that are a partition of unity, in the sense that 
 \begin{equation}
 	\chi_j^2 + (\chi_j^\perp)^2 =1 \ . 
 \end{equation}
 The function $\chi_j$ 
 equals $1$ on $\mathcal{C}_j (E_j )$, and decays smoothly otherwise. 
 Similarly, $ \chi_j^\perp$  is   smooth  and 
 localized in the FRs. 
 In particular, the pair $(\chi_j , \chi_j^\perp)$ 
 is  constructed so  that its derivatives can be chosen to be  as small as desired. 
 Most importantly, 	
 the lower bounds   for the effective potentials  proven  in Subsection \ref{bcfr} can   be recast in terms
 of  $\chi^\perp_j$.
 This is the content of     Proposition~\ref{prop-chi-mu}.

\subsection{Localizing   the ARs}\label{bcfr}
In this subsection, we localize the classically allowed regions, asymptotically as $ | j | \rightarrow \infty$.
In order to do so, in the upcoming lemma, 
we prove lower bounds for the effective potentials
outside of an interval of the form $[\delta |j|^\sigma, c \< j\>^\sigma]$.

\vspace{1mm}
We remind the reader that 
$ \<x\> = (1+x^2)^{  \frac{1}{2}}$
stands for the standard  bracket,
and the inequality $M_1 \leq M_2$ between two-dimensional 
matrices is understood in the sense of quadratic forms on $\C^2. $

\begin{lemma}[Lower bounds on $V_j$]
\label{lemma lower bound V}
The following statements hold true. 
	\begin{enumerate}[label=(\roman*), leftmargin = *]
		\item For all  $\mu\in (0,1)$ there exists   $c_\mu  \geq 1$ such that for all  $c \geq c_\mu$ and 
		$j\in \Z$ 
		\begin{align}
			\label{right-bound}
			V_j (r ) \ge \mu A(r)^2  
			\quad 
			\mbox{for all }
			\quad 
			r \geq c  \<j\>^\sigma  \ . 
		\end{align}
		\item 
		 For all $\mu \in (0,1)$
		 and  $c\geq 1 $, 
		 there exists $\delta_{\mu,c}  \in (0,1)$
		 such that  
		for all  
		$\delta\in ( 0, \delta_{\mu,c} ]$ 
		and      $ |j| \geq 2$ 
		\begin{align}
						\label{lemma lower bound V 2}
			V_j (r ) 
			\geq 
			\mu A^2(2 c |j|^\sigma )
\quad 
\t{\textit{for all}}
\quad r \leq \delta |j|^\sigma\,.
		\end{align}
	\end{enumerate}
\end{lemma}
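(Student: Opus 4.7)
Under Condition \ref{condition 1'} we simply have $A(r)=r^\alpha$ and $B(r)=(\alpha+1)r^{\alpha-1}$, so each effective potential reads
\begin{equation*}
V_j^\pm(r) = \Big(\tfrac{j + 1/2 \pm 1/2}{r} - r^\alpha\Big)^2 \pm (\alpha+1)\,r^{\alpha-1}.
\end{equation*}
The plan is, in both parts, to compare the centrifugal term $(j+\tfrac12\pm\tfrac12)/r$ with the magnetic term $A(r)=r^\alpha$ and to control the sub-leading term $\pm B(r)$. The key scale is determined by the equation $|j|/r \sim r^\alpha$, which gives $r \sim |j|^\sigma$ with $\sigma = 1/(1+\alpha)$; above this scale $A(r)$ dominates, and below it the centrifugal term dominates.

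\textbf{Part (i).} For $r \ge c\langle j\rangle^\sigma$ I would estimate the centrifugal term against $A(r)$. Using $|j\pm\tfrac12+\tfrac12| \le 2\langle j\rangle$ and $1-\sigma = \alpha\sigma$, I get
\begin{equation*}
\left|\tfrac{m_j \pm 1/2}{r}\right| \le \tfrac{2\langle j\rangle}{c\langle j\rangle^\sigma} = \tfrac{2}{c}\langle j\rangle^{\alpha\sigma} \le \tfrac{2}{c^{1+\alpha}}\, r^\alpha = \tfrac{2}{c^{1+\alpha}}\,A(r).
\end{equation*}
Hence $(\text{centrifugal} - A(r))^2 \ge A(r)^2(1-2/c^{1+\alpha})^2$, and since $B(r)/A(r)^2 = (\alpha+1)r^{-\alpha-1} \le (\alpha+1)c^{-\alpha-1}\langle j\rangle^{-\sigma(\alpha+1)}\le (\alpha+1)c^{-\alpha-1}$ on the relevant region, choosing $c_\mu$ large enough so that $(1-2/c^{1+\alpha})^2 - (\alpha+1)/c^{1+\alpha} \ge \mu$ gives the desired $V_j(r) \ge \mu A(r)^2$.

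\textbf{Part (ii).} For $r \le \delta |j|^\sigma$ with $|j|\ge 2$, the centrifugal term now dominates: since $|j\pm\tfrac12+\tfrac12| \ge |j|/2$ when $|j|\ge 2$, I apply Young's inequality in the form $(a-b)^2\ge \tfrac12 a^2 - b^2$ to obtain
\begin{equation*}
V_j^\pm(r) \ge \tfrac{1}{8}\,\tfrac{|j|^2}{r^2} - A(r)^2 \mp B(r) \ge \tfrac{1}{8}\,\tfrac{|j|^2}{r^2} - r^{2\alpha} - (\alpha+1)r^{\alpha-1}.
\end{equation*}
The dominant contribution $|j|^2/r^2 \ge |j|^{2\alpha\sigma}/\delta^2$ then controls both remaining terms: directly $r^{2\alpha}\le \delta^{2\alpha}|j|^{2\alpha\sigma}$, and for the magnetic term the ratio $B(r)/(|j|/r)^2 = (\alpha+1)r^{\alpha+1}/|j|^2 \le (\alpha+1)\delta^{\alpha+1}/|j|$ can be made arbitrarily small by choosing $\delta$ small (uniformly in $|j|\ge 2$). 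Combining, $V_j(r) \ge |j|^{2\alpha\sigma}\bigl(\tfrac{1}{16\delta^2} - \delta^{2\alpha}\bigr)$, and picking $\delta_{\mu,c}$ small enough forces the bracket to exceed $\mu(2c)^{2\alpha} = \mu A(2c|j|^\sigma)^2/|j|^{2\alpha\sigma}$.

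\textbf{Main obstacle.} The only subtle point is the magnetic term $B(r)=(\alpha+1)r^{\alpha-1}$ in Part (ii) when $\alpha<1$, where $B$ blows up as $r\downarrow 0$ and cannot be bounded uniformly on $(0,\delta|j|^\sigma]$. The cure is to bound it not by a constant but pointwise against $(|j|/r)^2$: the ratio $(\alpha+1)r^{\alpha+1}/|j|^2$ is small throughout the region because $r^{\alpha+1}\le \delta^{\alpha+1}|j|$, so $B(r)$ is absorbed into the centrifugal term. This is also exactly where the hypothesis $|j|\ge 2$ is used, both to ensure $|j\pm 1|\ge |j|/2$ and to make the ratio $1/|j|$ harmless.
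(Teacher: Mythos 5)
Your proposal is correct and follows essentially the same route as the paper: in the outer region $r\geq c\<j\>^\sigma$ one compares the centrifugal term to $A(r)$ and absorbs $B/A^2$ by taking $c$ large, while in the inner region $r\leq\delta|j|^\sigma$ the centrifugal term dominates and absorbs both $A^2$ and $B$ for $\delta$ small, with the scale $|j|^\sigma$ and the final choices of $c_\mu$, $\delta_{\mu,c}$ matching the paper's. The only (cosmetic) difference is that in part (ii) you use $(a-b)^2\geq\tfrac12 a^2-b^2$ instead of factoring out $((j+1)/r)^2$ and bounding the bracket, which changes nothing of substance.
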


\begin{remark}
It will follow   from the proof, that  the parameters $c_\mu\geq 1$ and $\delta_{\mu,c} <1 $
	can be chosen to be of the form 
	(up to a constant that   depends only on $\alpha>0$)
	\begin{equation}
		c_\mu = 
 \Big(
 \frac{1}{1 - \mu}
 \Big)
  ^{   \max \big(  \alpha^{-1 }, 1+\alpha \big)	} 
		\qquad
		\t{and}
		\qquad
		\delta_{\mu,c}
		 =  
		 \min
\Big( 
 ( 1 - \mu)^\sigma \ , 
 \frac{1}{c^\alpha }
		 \Big) \ . 
	\end{equation} 
\end{remark}

\begin{proof}[Proof of Lemma \ref{lemma lower bound V}]
Note that for 
	$A(r) = r^\alpha$
	we obtain
	$B(r) = (1+\alpha) r^{\alpha -1 }$. 
	 We only give proofs for the component $V_j^+(r)$, the other one is analogous.
	
	\vspace{1mm}
	
\textit{(i)}
	Fix $\mu \in (0,1)$, and consider   $ c \geq 1 $,   
	 soon to be determined.
	Then, for $r \geq c \< j \>^\sigma$ 
	we    observe  
	that 
	$
	rA(r)
	\geq 
	c^ {1+\alpha}
	\<   j  \>   
	$
	and similarly
	$| ( B/A^2)  (r)| \leq (1+\alpha ) c^{-\alpha}$. 
	Consequently, 
	on the interval $[c\<j\>^\sigma , \infty )$
	we find 
	\begin{align}
		V_j^+  
		\   =  \ 
		A^2
		\Big[
		\Big(
		1 - \frac{1 + j}{r A}
		\Big)^2
		+ 
		\frac{B}{A^2}
		\Big]	
		\  \geq  \ 
		A^2 
		\Big[
		\Big(
		1 - \frac{ \sqrt{2}}{c^\sigma }
		\Big)^2
		- 
		\frac{1+\alpha}{c^\alpha}
		\Big]	 \  , 
	\end{align}	
	where we have used the fact
	$1 + |x| \leq \sqrt 2  \<x\>$. 
	A quick calculation shows that 
	the claim of the lemma
	  follows by choosing 
	$c 
	\geq 
	c_\mu  = 
	(2 \sqrt 2 + 1 + \alpha )^{ \max(1/\alpha, 1+\alpha)}
	(1 - \mu)^{-  \max(1/\alpha, 1+\alpha)}. 
	 $

	\vspace{1mm}
	
\textit{(ii)}
Fix $\mu\in(0,1)$	 and $c>0$.
Consider   $\delta \in (0,1)$ soon to be determined. 
Assume that
	$|j|\geq 2$ so that, in particular, 
	$|j|\leq 2 |j+1|. $
	Then, for $r \leq \delta |j|^\sigma$
	a tedious but straightforward calculation
	shows that  
	$|rA / (j+1)| \leq 2 \delta^{1+\alpha}$
	together 
	with
	$r^2B/ (j+1)^2 
	\leq
	4 (1+\alpha) \delta^{1+\alpha}
	$.
	Consequently, we can conclude that
	on the interval 
	$
	(0, \delta |j|^\sigma]
	$
	there holds
	\begin{align}
\nonumber 
		V_j^+ 
	& 	= 
		\bigg(  \frac{j+1}{r}		\bigg)^2
		\Big[
		\Big(
		1 -
		\frac{rA}{j+1}
		\Big)^2
		+ 
		\frac{r^2 B}{j+1}
		\Big] 	 \ , \\ 
		\nonumber 
	& 	\geq 
 \frac{|j|^2}{4 r^2}
		\Big[
		\Big(
		1 -
		2 \delta^{1+\alpha}
		\Big)^2
		-  
		4  (\alpha +1 )\delta^{1+\alpha}
		\Big]  \ ,  \\ 
		& \geq  
		\frac{ |j|^{2 \alpha \sigma }}{4 \delta^2}
		\Big[
		\Big(
		1 -
		2 \delta^{1+\alpha}
		\Big)^2
		-  
		4  (\alpha +1 )\delta^{1+\alpha}
		\Big]  \  , 
	\end{align}
where in the last line we used the inequality 
	$( |j|/r)^2 \geq \delta^{-2} |j|^{2 \alpha\sigma } $. 
First, we choose 
 $\delta \leq  [4(2+\alpha)]^{-\sigma} (1-\mu)^\sigma$
 so that the term in brackets $[ \cdots]$
 is larger than $\mu$. 
 Second, 
 we choose $ \delta \leq 1 / (2^{1 + \alpha} c^\alpha)$
 so that the pre-factor
 is bounded below by $A^2 (2c |j|^\sigma)$. This finishes the proof. 
\end{proof}

%

Lemma \ref{lemma lower bound V}
motivates the following definition.

\begin{definition}[The parameters $c_\mu$ and $\delta_{\mu,c}$    ]
	\label{definition delta mu c mu}
	Let $\mu \in (0,1)$ and $c \geq1$. 
	Then, we define   
	 $c_\mu\geq 1$
	 and 
	 $\delta_{\mu,c} $
	 according to the following criterion. 
	 \begin{enumerate}[leftmargin = * ]
	 	\item $c_\mu \geq 1$ is chosen so that the first statement of Lemma \ref{lemma lower bound V} holds true for 
	 	$\mu$. 
	 	
	 	\item $\delta_{\mu,c} <1 $ is chosen so that the second statement of Lemma \ref{lemma lower bound V} holds true for 
	 		$\mu$ and $c  $. 
	 \end{enumerate}
\end{definition}

 Let us finish this subsection with the   observation 
  that Lemma \ref{lemma lower bound V}
	implies that 
we can localize 
	the clasically allowed regions to an interval that scales like $ |j|^\sigma$.
Indeed, let $\mu \in (0,1)$
and consider  parameters
$c \geq c_\mu$
and
$\delta \leq \delta_{\mu,c}$. 
Then, for all 
  $|j|\geq 2$
	the  
	lower bound  
	$V_j(r)  \geq   \mu  |j|^{2\alpha\sigma}$ holds true 
	for $r \notin [\delta  |j|^\sigma ,  c  \<j\>^\sigma]$. 
	Therefore, 
	for all such $ j \in \Z$ we have that 
	\begin{equation}
		\label{localization CAR}
		E_j  < \mu |j|^{2 \alpha \sigma } 
		\implies 
		\calC_j (E_j)
		\subset 
		[	\delta  |j|^\sigma, c  \<j\>^\sigma 	] \ . 
	\end{equation}
	We read the above result as saying that the classically allowed regions 
	are \textit{stable} as one varies the energy levels.

%
%

\subsection{Partitions of  unity}\label{partitions}
In this subsection, we construct  two pairs of sequences of 
smooth localization functions. 
The first  one  we  denote by $   (\chi_j)_{j\in\Z}$, 
and can be though of as a sequence of localization functions around the
AR.
The second one is denoted by $  (\eta_j)_{j\in \Z}$, 
and should be thought of as a technical enlargement of $\bchi$.
In particular,
given  $\ve_0 >0 $,
the construction can be made    so that
their derivatives  are no larger than $\ve_0$. 

\vspace{1mm}
Let us note that the localization functions can be understood
as a smooth version of
the projection operators  (\textit{i.e.} sharp cut-offs)
first 
considered in the Schr\"odinger case  \cite{CHSV21}.
In the present case, smoothness is required since we study the square of the original Hamiltonian $H_0$.
This has the effect of incorporating commutator (\textit{i.e.} derivatives) into the analysis.

The construction starts as follows. 
First, given  two positive constants
$a< b $
we let  let 
$g_{a,b}  \in C_c^\infty (\R_+ , [0,1]) $
be a  smooth function of compact support that satisfies
\begin{align}\label{ll}
	g_{a,b}(r)=\left\{
	\begin{array}{cl}
		0\,,& r\in (0,a/2)\\
		\mbox{ monotone}\,, & r\in [a/2,a] \\
		1\,,& r\in(a,b)\\
		\mbox{ monotone}\,, & r\in [b,2b)\\
		0\,,& r\in [2b,\infty)
	\end{array}
	\right.\,.
\end{align}
In particular, 
we assume that 
it has been constructed in such a way so that:
\begin{enumerate}[label = (\roman*), leftmargin=.9cm ]
	\item its  complementary function 
	$g_{a,b}^\perp  \equiv \big(   1  - g_{a,b}^2  \big)^{1/2}$ 
	is smooth as well;  and 
	\item 
	$	
	\|	 g'_{a,b}	\|_{L^\infty } , \ 
	\|	 (g^\perp_{a,b}) ' \|_{L^\infty }  	  
	\leq  C_*/a $  
	for some  $C_*> 2$, 
	which we shall refer to as the \textit{universal constant}. 
\end{enumerate}
Observe that $\supp ( g_{a,b}^\perp) $ has two connected components: 
one of them is  supported on $[0,a]$, 
and the second one on  $[b,\infty)$.

\vspace{1mm}	
Secondly, for technical reasons, 
we need to  introduce another family of smooth   functions as follows. 
We define 
$h_{a,b}\in C_c^\infty (\R_+ , [0,1])$ 
to be such that 
\begin{align}
	h_{a,b}(r)=\left\{
	\begin{array}{cl}
		1 \,,& r\in (0, b )\\
		g_{a,b} (r) 		 \,, & r\in   [b,\infty)
	\end{array}
	\right.\,.
\end{align}
Let us  observe that the construction can be made so that: 
its associated complementary function 
$h_{a,b}^\perp $ is smooth as well; 
the analogous derivative bound holds true 
$\|	 h_{a,b} ' 	\| \leq C_* /b$; 
and $\supp (h_{a,b})$
has only one connected component. 

\vspace{1mm}

We   now     define the  sequences of smooth localization
functions as follows.

\begin{definition}
	\label{definition partition unity}
Let $\delta \in (0,1)$, $c\geq1 $
and 
	$j_ 0 \geq 2$.
	We refer
	to the following two
	sequences of functions 
	as the \textbf{partitions of unity}, 
	constructed with respect to $(\delta, c , j_0)$. 
	
	\vspace{1.5mm}
	\begin{enumerate}
		[leftmargin=* , label = (\roman*)]
		\item 
		For all $ j \in \Z $, we
		define  $ \eta =  (\eta_j)_{j\in \Z}$
		as the sequence of functions on $(0,\infty)$ given by 
		\begin{align}
			\nonumber
			\eta_j(r)
			\equiv 
			\begin{cases}
				h_{\delta ,c }   
				\Big( r / \<j\>^\sigma \Big)
				\,,& |j|\le j_0\\
				g_{\delta ,c }
				\Big( r / |j|^\sigma  \Big)
				\,,&|j|> j_0
			\end{cases} \ . 
		\end{align}
		We denote its complement   by
		$ 
		\eta^\perp_j 
		 \equiv 
		\big(
		1 - \eta^2_j
		\big)^{1/2 } .  $
		
		\vspace{1mm}
		
		\item
		For all $ j \in \Z $, we
		define  $ \chi = (\chi)_{j\in \Z}$
		as the sequence of functions on $(0,\infty)$ given by  
		\begin{align}
			\chi_j(r)
			\equiv 
			\begin{cases}
				h_{    \,   \delta /2  \,    ,2c  }
				\Big( r / \<j\>^\sigma \Big)
				\,,& |j|\le j_0\\
				g_{    \, \delta /2 \,     ,2c  } 
				\Big( r / |j|^\sigma  \Big)
				\,,&|j|> j_0
			\end{cases} \ . 
		\end{align}
		We denote its complement   by
		$ 
		\chi^\perp_j 
		 \equiv 
		\big(
		1 - \chi^2_j 
		\big)^{ 1/2} . $ 
	\end{enumerate}
\end{definition}

\begin{remark}
	In bold-face notation, 
	the direct sum of these localization functions
	are denoted by $\boldsymbol{\chi} = \oplus_{ j \in \Z } \chi_j $ 
	and
	 $\boldsymbol{\eta} = \oplus_{ j \in \Z } \eta_j $, respectively. 
\end{remark}

\begin{remark}
Fix $ j \in \Z$. Then, 	there holds $\eta_j \leq \chi_j$.
	In particular,  
	the  supports are nested into one another.
	Namely 
	$\supp (\eta_j) \subset \supp (\chi_j)$
	and similarly 
	$	 \supp (\chi^\perp_j) \subset \supp (\eta^\perp_j).$
\end{remark}
	%
	%
	%
	%
	%
	%
	%
	%
	%

Let us collect  in the following Proposition
the   important properties of the 
localization functions just introduced. 
Note that	the partitions of unity 
$\chi$
and $\eta$
strongly depend on the parameter 
$\delta$, $c$ and  $ j_0$, 
even though we do not display such dependence explicitly.

%

\begin{proposition} 
	\label{prop-chi-mu} 
Let 
$\delta \in (0,1)$, $c \geq 1$ and $j_0 \geq 2 $.
Consider  the partition of unity $(\chi_j)_{j\in\Z}$ and $(\eta_j)_{j\in \Z}$, 
constructed  with respect to $( \delta , c , j_0)$. 
  Then, the following statements hold true
  \begin{enumerate}[label = (\roman*)]
  	\item 
  	For all $ j \in \Z$  
  	\begin{align}
  				\label{bound error derivative}&
  		\|\eta'_j\|_{L^\infty},  \, 
  		\|\chi'_j\|_{L^\infty}, \, 
  		\|(\eta^\perp_j)' \|_{L^\infty}  \,  , 
  		\| 	 (\chi_j^\perp) 	\|_{L^\infty}
  		 \ \leq   \ 
  		  \, 
  		 C_* \max\bigg(
  		   \frac{1}{c} , \frac{1}{\delta j_0^\sigma }
  		 \bigg)
  	\end{align}

  	\item 
Given $\mu \in (0,1)$, let    $ (c_\mu, \delta_{\mu, c})$  be as in Definition \ref{definition delta mu c mu}.
  	Assume that 
 $c \geq c_\mu$
 and
 $\delta \leq \delta_{\mu,c}$. 
  	Then,  for all $ j \in \Z $
  	\begin{align}
  			\label{a-1}&
  			V_j(r)\,\ge \,\mu A^2(r) 
  			 &    &  \t{for all}\qquad 
  			 r\in {\rm supp}(\eta_j^\perp)\,,\\
  		\label{a-2}
  		&V_j(r)\,\ge \,\mu A^2(2 c  |j|^\sigma)
  		  		&   & 	 \t{for all}\qquad 
  		 r\in {\rm supp} (\chi_j^\perp)\,.
  	\end{align}

  \end{enumerate}

\end{proposition}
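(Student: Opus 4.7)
The plan is to treat the two parts separately. Part (i) is essentially a chain-rule computation that uses the built-in derivative bounds for $g_{a,b}$ and $h_{a,b}$, while part (ii) is a translation of Lemma \ref{lemma lower bound V} onto the supports of the complementary functions $\eta_j^\perp$ and $\chi_j^\perp$.

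For (i), I would unwind the definitions. When $|j|>j_0$, the relation $\eta_j(r)=g_{\delta,c}(r/|j|^\sigma)$ combined with $\|g'_{\delta,c}\|_\infty\leq C_*/\delta$ from the construction yields $\|\eta_j'\|_\infty\leq C_*/(\delta\,|j|^\sigma)\leq C_*/(\delta\,j_0^\sigma)$. When $|j|\leq j_0$, the relation $\eta_j(r)=h_{\delta,c}(r/\langle j\rangle^\sigma)$ and $\|h'_{\delta,c}\|_\infty\leq C_*/c$, together with $\langle j\rangle\geq 1$, give $\|\eta_j'\|_\infty\leq C_*/c$. The same reasoning applies to $\eta_j^\perp$, $\chi_j$ and $\chi_j^\perp$; for the $\chi$'s the parameters are $(\delta/2,2c)$ rather than $(\delta,c)$, which introduces a factor of two that is absorbed by enlarging the universal constant.

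For (ii), the first step is to identify the supports. From the construction,
\begin{equation*}
\supp(g_{a,b}^\perp)=[0,a]\cup[b,\infty),\qquad \supp(h_{a,b}^\perp)=[b,\infty),
\end{equation*}
so after rescaling one finds
\begin{equation*}
\supp(\eta_j^\perp) = \begin{cases} [c\langle j\rangle^\sigma,\infty), & |j|\leq j_0,\\ [0,\delta|j|^\sigma]\cup[c|j|^\sigma,\infty), & |j|>j_0,\end{cases}
\end{equation*}
and analogously for $\chi_j^\perp$ with parameters $(\delta/2,2c)$. I would then invoke Lemma \ref{lemma lower bound V} piecewise. On an exterior piece (of the form $r\geq c\langle j\rangle^\sigma$ for $\eta_j^\perp$ or $r\geq 2c\langle j\rangle^\sigma$ for $\chi_j^\perp$), the hypothesis $c\geq c_\mu$ makes part (i) of the lemma applicable and gives $V_j(r)\geq\mu A(r)^2$. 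On the interior piece $r\leq\delta|j|^\sigma$, present only when $|j|>j_0\geq 2$, the hypothesis $\delta\leq \delta_{\mu,c}$ makes part (ii) of the lemma applicable and gives $V_j(r)\geq\mu A(2c|j|^\sigma)^2$.

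To close, I would match these piecewise lower bounds to the stated inequalities using only monotonicity of $A$. For \eqref{a-1}, on the interior piece the inequality $r\leq\delta|j|^\sigma\leq 2c|j|^\sigma$ gives $A(r)\leq A(2c|j|^\sigma)$, upgrading the interior bound to $V_j(r)\geq\mu A(r)^2$; for \eqref{a-2}, on the exterior piece $r\geq 2c|j|^\sigma$ gives $A(r)\geq A(2c|j|^\sigma)$, downgrading the exterior bound to $V_j(r)\geq\mu A(2c|j|^\sigma)^2$. The only mildly delicate step, and the place where I would slow down, is the small-$|j|$ regime, where the definitions involve $\langle j\rangle^\sigma$ rather than $|j|^\sigma$; one has to verify that the interior piece is absent there (so that only part (i) of the lemma is needed) and that the inequality $2c\langle j\rangle^\sigma\geq c_\mu\langle j\rangle^\sigma$ is indeed what feeds the lemma. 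Beyond this bookkeeping I anticipate no genuine obstacle: Lemma \ref{lemma lower bound V} has already done the analytic work, and the proposition is essentially a careful matching of supports to hypotheses.
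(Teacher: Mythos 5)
Your proposal is correct and follows essentially the same route as the paper: part (i) by the chain rule together with the built-in derivative bounds for $g_{a,b}$ and $h_{a,b}$, and part (ii) by identifying the supports of $\eta_j^\perp$ and $\chi_j^\perp$ case by case in $|j|\lessgtr j_0$, invoking Lemma \ref{lemma lower bound V} on each component, and using monotonicity of $A$ to pass between $A(r)$ and $A(2c|j|^\sigma)$. The small bookkeeping points you flag (the $\langle j\rangle^\sigma$ versus $|j|^\sigma$ scaling for small $|j|$, and the factor of two from the $(\delta/2,2c)$ parameters being absorbed into the universal constant) are treated at the same level of detail in the paper itself.
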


%

\begin{proof}[Proof of Proposition~\ref{prop-chi-mu}]
	Throughout this proof, we will fix $j\in \Z$.
	\vspace{1mm}
	
\textit{(i)}
 The estimates follow  from the chain rule,   the construction of
		$g_{a,b}$
		and 
		$h_{a,b}$, 
		and Definition \ref{definition partition unity}.

	\vspace{1mm}		

\textit{(ii)}
Let us prove \eqref{a-1}. 
	First, consider the case $|j|\leq j_0$. 
	Then it follows from 
	Definition
	 \ref{definition partition unity} 
	that 
	$\supp \, \eta_j^\perp = [  c  \<j\>^\sigma, \infty )$
	and  \eqref{a-1} now follows from 
	\eqref{right-bound} 
	in Lemma
	\ref{lemma lower bound V}. 
	Secondly, consider  the case $|j| >  j_0 $.
	Then, it follows from Def. \ref{definition partition unity}
	that 
	$
	\supp \eta_j^\perp 
	=
	[ 0, \delta |j|^\sigma ]
	\cup 
	[ c \<j\>^\sigma, \infty )
	$.
	It suffices to estimate the effective potential over the first component, 
	since the proof for the second is identical to the case $|j| \geq j_0$.
	Indeed,   it follows from Def. \ref{definition delta mu c mu}
	that 
	$ c \geq c_\mu $ 
	and
	$ \delta \leq \delta_{\mu,c}$  
 	imply 
	\begin{align}
		V_j(r) \,\geq\, \mu A^2(2c |j|^\sigma)\,\geq\,\mu A^2 (r)
		\qquad 
		\t{on}
		\qquad 
		 r\leq \delta  |j|^\sigma,
	\end{align}
	where in the last inequality we  have used the fact that $A(r)$ is monotone.

	\vspace{1mm}
	
Let us now prove \eqref{a-2}. 
	First, consider the case $|j| \leq j_0$.
	Then, $\supp \chi_j^\perp = [2 c \<j\>^\sigma, \infty )$
	and the claim follows from \eqref{right-bound} from Lemma 
	\ref{lemma lower bound V}. 
	Secondly, consider the case $| j| > j_0$
	so that in a similar way  
	$
	\supp \chi_j^\perp 	=	
	[ 0, (1/2)\delta  |j|^\sigma ]
	\cup 
	[ 2 c  \<j\>^\sigma, \infty )
	$. 
	The second component is estimated similarly as in the case $|j| \leq j_0$.
	The estimate on the second component
	follows from  
	$\delta \leq \delta_\mu  $
	and $c \geq c_\mu $,  Definition \ref{definition delta mu c mu} 
	and the fact that $A(r)$ is monotone. 
	%
\end{proof}

\begin{remark} 
It follows from   Proposition \ref{prop-chi-mu} 
that
the partition of unity has small derivatives, 
provided $c \geq 1 $ and $j_0$ are large enough. 
We understand  these derivative bounds as an 
error in localizing the kinetic energy of the system.
Let us further explain.
Namely, consider the following version of the 
\textit{IMS} localization formula, 
for $\vartheta \in\{\eta , \chi \}$, 
$\psi \in \Cr$   (see Definition \ref{definition domain cores})
and all $ j \in \Z $ 
\begin{equation}
	\label{localization formula}
	\< \psi ,  T  \psi  \> 
	=
	\< \vartheta_j \psi  , T \vartheta_j \psi  \>
	+
	\< \vartheta_j^\perp  \psi  , T \, \vartheta_j^\perp  \psi  \>
	- 
	\<  \psi  ,
	\big(  |\vartheta_j ' |^2 + |\vartheta_j^\perp\,   ' |^2  \big)
	\psi     \> \, .
\end{equation}
In particular, 
the last term  can be bounded in terms of the localization error, 
\textit{i.e.}
the following upper bound holds true 
$ |\vartheta_j ' |^2 + |\vartheta_j^\perp\,   ' |^2  \le 2\varepsilon^2$ , 
where $\ve = \ve( c , \delta, j_0)$ 
can be chosen 
as the right hand side of  \eqref{bound error derivative}. 
\end{remark}

As a corollary of Proposition \ref{prop-chi-mu}
and the localization formula \eqref{localization formula}, 
we obtain the following
estimate for 
the potential energy inside the FRs.

\begin{corollary} 
Let 
$\delta \in (0,1)$, $c \geq 1$ and $j_0 \geq 2 $.
Consider  the partition of unity $(\chi_j)_{j\in\Z}$ and $(\eta_j)_{j\in \Z}$, 
constructed  with respect to $( \delta , c , j_0)$. 
Let $\ve_0 > 0 $ satisfy 
\begin{equation}
\label{error inequality}
	\ve_0  \geq 
	   		 C_* \max\bigg(
	 \frac{1}{c} , \frac{1}{\delta j_0^\sigma }
	 \bigg)  \ . 
\end{equation}
Then, it holds in the sense of quadratic forms on 
	$\mathcal C    $ 
	that
	\begin{align}
		&
		\boldsymbol{\vartheta}^\perp 
		\boldsymbol{V}  
		\boldsymbol{\vartheta}^\perp 
		\leq 
		H_0^2
		+
		2\varepsilon_0^2 
		\label{v-for-h} 
	\end{align}
	where   
	$ \boldsymbol{V}   = \oplus_{ j \in \Z } V_j$, 
	and $\vartheta$ stands for either $\chi$ or $\eta$. 
\end{corollary}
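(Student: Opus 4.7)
My plan is to reduce the statement, via the unitary block decomposition $H_0^2 \cong \oplus_{j\in\Z} h_j^2$ recorded in Section~\ref{sec.unit}, to the blockwise quadratic form bound
\begin{equation*}
\vartheta_j^\perp V_j \vartheta_j^\perp \ \leq \ h_j^2 + 2 \varepsilon_0^2
\end{equation*}
on $\Cr$, with constant $2\varepsilon_0^2$ \emph{uniform} in $j\in\Z$. Once this is in hand, the claim on the core $\calC$ will follow at once, since by Definition~\ref{definition domain cores} elements of $\calC$ have only finitely many non-zero angular-momentum components.

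The key observation powering the argument is that $V_j$ and $\vartheta_j$ are both multiplication operators and therefore commute; in particular the trivial identity
\begin{equation*}
\vartheta_j V_j \vartheta_j + \vartheta_j^\perp V_j \vartheta_j^\perp \ = \ V_j
\end{equation*}
holds. Combining this with the IMS-type localization formula \eqref{localization formula} applied to the kinetic term $T$, I expect to arrive at the exact identity
\begin{equation*}
\vartheta_j h_j^2 \vartheta_j + \vartheta_j^\perp h_j^2 \vartheta_j^\perp \ = \ h_j^2 + |\vartheta_j'|^2 + |\vartheta_j^\perp\,'|^2
\end{equation*}
as quadratic forms on $\Cr$. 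From here the desired blockwise estimate should drop out in three short moves. First, I discard the non-negative term $\langle \psi , \vartheta_j h_j^2 \vartheta_j \psi\rangle = \| h_j \vartheta_j \psi \|^2 \geq 0$. Second, I use $T \geq 0$ on the radial kinetic form to obtain $\vartheta_j^\perp V_j \vartheta_j^\perp \leq \vartheta_j^\perp h_j^2 \vartheta_j^\perp$. Third, I invoke the pointwise derivative bound \eqref{bound error derivative} from Proposition~\ref{prop-chi-mu}(i), combined with the hypothesis \eqref{error inequality} on $\varepsilon_0$, to conclude that $|\vartheta_j'|^2 + |\vartheta_j^\perp\,'|^2 \leq 2 \varepsilon_0^2$ pointwise, uniformly in $j$. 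Chaining these three bounds gives exactly the blockwise inequality.

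I do not anticipate a substantial obstacle. The only point demanding a bit of care is checking that the formal manipulations above are legitimate at the level of quadratic forms on $\Cr$, which boils down to verifying that $\vartheta_j \psi$ and $\vartheta_j^\perp \psi$ lie in $\Cr$ whenever $\psi \in \Cr$. This is routine: by Definition~\ref{definition partition unity} the functions $\vartheta_j$ are smooth and bounded, and multiplication by them preserves both compact support (for $\vartheta_j \psi$) and the vanishing behavior at $r=0$ required in the definition of $\Cr$. Summing the blockwise bound over $j$ and identifying $\oplus_{j\in\Z} h_j^2$ with $H_0^2$ then yields the statement in the form displayed.
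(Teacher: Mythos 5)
Your proposal is correct and follows essentially the same route as the paper: bound $\vartheta_j^\perp V_j \vartheta_j^\perp \leq \vartheta_j^\perp h_j^2 \vartheta_j^\perp$ using $T\geq 0$, add the non-negative term $\vartheta_j h_j^2 \vartheta_j$, apply the IMS formula \eqref{localization formula} together with the derivative bound \eqref{bound error derivative} and the hypothesis \eqref{error inequality}, and sum over $j$. Your bookkeeping of the localization error (giving $|\vartheta_j'|^2 + |\vartheta_j^\perp\,'|^2 \leq 2\varepsilon_0^2$ with the correct factor) is in fact slightly cleaner than the displayed chain in the paper's proof.
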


\begin{proof}
	Let $j\in \Z$.
	Then, 
	the positivity  of  the operators  $T$
	and  
	$h_j^2$, 
	together with the   IMS formula 
	\eqref{localization formula}, 
imply that 
	in the sense of quadratic forms on $\mathscr C _r$
	\begin{align}
		\nonumber
		\vartheta_j^\perp 	V_j  	 \vartheta_j^\perp
 	\leq 
		\vartheta_j^\perp (V_j + T )\vartheta_j^\perp 		 
		\nonumber
	  = 
		\vartheta_j^\perp  h_j^2 \vartheta_j^\perp 	 
		\nonumber
		  \leq 
		\vartheta_j^\perp  h_j^2 \vartheta_j^\perp 
		+
		\vartheta_j  h_j^2 \vartheta_j 	 
		\nonumber
		 = 
		h_j^2 
		+ 2 
		\big(  |\vartheta_j ' |^2 + |\vartheta_j^\perp\,   ' |^2  \big)	 \ . 
	\end{align} 
	The proof is finished once we use \eqref{bound error derivative}, 
	\eqref{error inequality}, 
	and   sum over $ j \in \Z $. 
\end{proof}

	\section{Relative bounds for the perturbation}
	\label{section perturbation bounds}

Under the assumptions of  Condition~\ref{condition 2}
the operator  $W$ is possibly unbounded at infinity.  
Hence, 
one may naturally ask if  the operator $H_0$ is able to dominate such perturbations. 
It turns out that -- in the general case -- this is not possible. 
Heuristically, this follows from the fact that
the magnetic Hamiltonian $H_0$
vanishes in phase-space points where $\bp = \bA(\bx)$. 
From this equation one may extract a path to infinity in which $W(\bx)$ diverges.

In this section 
we consider 
 a novel approach that allows to control  the  perturbation $W$ in terms of $H_0$
	plus an additional term, localized over  the classically allowed regions, that is diagonal in the angular momentum channels.
	This is the content of Theorem \ref{theorem W}.

	\subsection{Bounds for the potential}
	\label{subsection bounds potential}
Throughout this section, we consider $A(r)$ as in Condition 
\ref{condition 1'}. 
Further, we  consider  a perturbation $W$ 
that 
verifies Condition 
\ref{condition 2}
with respect to the  parameters $\mu_*\in(0,1)$ and $\beta>0$, 
and  with potential $v\in L^2_{\rm loc} ([  0 , \infty), \R  	)$.

	In order to gain control over $W$ we introduce the following sequence  of  positive numbers.  
	Heuristically, these numbers $\lambda_j$  control  the values that $W$ takes 
	over the phase space points $\bp = A (\bx )$ when restricted to the angular momentum channel $j\in\Z$.

	\begin{definition} 
		\label{definition lambda}
Let $ c \geq 1$. 
Then, 		for all $ j \in \Z$ we define the following sequence of 
positive numbers 
	\begin{align}\label{lambda}
			\lambda_j 
			\equiv
			 A 
			 (2c 		\<j\>^\sigma )
			\,.
		\end{align}
We denote  its direct sum 
by
$
 \boldsymbol{\lambda}
 \equiv \oplus_{ j \in \Z } \lambda_j$. 
	\end{definition}

	The main result of this section is the inequality contained in the following theorem. 
	The first step is
	to reduce the study from  $W$
	to the  potential $v_\beta =\coth(\beta/4) v$; this is achieved using Lemma~\ref{lemma exp twisted}. 
	Next we consider the decomposition into a 
	\textit{critical}
	and
	\textit{non-critical}
	part, regarding the growth at infinity. Namely
	\begin{equation}
		\label{v beta decomposition}
		v_\beta^2 = \mu_* A^2 + \omega^2\,. 
	\end{equation}
	In view of \eqref{eq.lim} we have that $\omega (r) = o (A(r))$ as $r  \rightarrow \infty$.

	\begin{theorem}
		\label{theorem W}
Let $\mu, \delta, \ve_0 \in (0,1)$ and $j_0, c  >  1$.
Further, assume that:  $\mu \in (\mu_*, 1 )$; 
$c \geq c_\mu$ and $\delta \leq \delta_{\mu,c}$ 
(see Def. \ref{definition delta mu c mu}); 
and 
$\ve_0 >0$
satisfies the lower bound  \eqref{error inequality}
in terms of $(c, \delta, j_0)$. 
		Let $\chi$ be a partition of unity, constructed with respect to 
		with respect to $(c, \delta, j_0)$ (see Definition  \ref{definition partition unity}). 
		Then, for all $\ve_*, \, \ve_1 \in (0,1)$
		there exists $C_0>0$
		such that
		\begin{equation}
\label{thm W eq 1}
			W^2 
			\, 	 \leq 	\, 
			v_\beta^2  
			\, 	\leq  \, 
		\bigg( 
			\frac{\mu_* + \ve_1 + \ve_*  }{\mu}
			\bigg)  
			H_0^2 
			+ 
(\mu_*+ \ve_* )
			\boldsymbol{\chi}
			\boldsymbol{\lambda}^2  
			\boldsymbol{\chi}
			+ C_0  
		\end{equation}
		in the sense of quadratic forms on  $\mathcal C $.
Moreover, the bound \eqref{thm W eq 1} holds true with $\ve_*=\ve_1=0 $ for  bounded $\omega$, and with
 $\ve_*=0$ whenever $\limsup_{r \to \infty} |\omega(r)|=0$.
				\end{theorem}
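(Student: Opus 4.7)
The plan is to handle $W$ in three stages: first reduce the matrix-valued perturbation to the scalar majorant $v_\beta$, then split $v_\beta^2$ according to its asymptotic growth, and finally use the partition of unity from Section~\ref{section analysis car} together with the lower bounds on the effective potentials $V_j$ to produce the $H_0^2$ and $\boldsymbol\chi\boldsymbol\lambda^2\boldsymbol\chi$ contributions in the right proportions.

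First I would invoke Lemma~\ref{lemma exp twisted}, whose role is precisely to exploit the angular-smoothness assumption $\|\widetilde W(r,j)\|_{2\times 2}\le v(r) e^{-\beta|j|}$ of Condition~\ref{condition 2}\ref{smoothness}. Pairing angular Fourier modes of the test function through Cauchy--Schwarz produces geometric series $\sum_{k\in\Z} e^{-\beta|k|}$ whose total contribution is exactly $\coth(\beta/4)$, and thus gives the quadratic-form bound $W^2\le v_\beta^2$ on $\calC$. Next I decompose $v_\beta^2=\mu_* A^2+\omega^2$ as in \eqref{v beta decomposition}; by Condition~\ref{condition 2}\ref{a-control} one has $\omega(r)=o(A(r))$ at infinity, so for $\ve_*>0$ there exists $R_*$ with $\omega^2\le\ve_* A^2$ on $(R_*,\infty)$, yielding
\begin{equation*}
v_\beta^2\;\le\;(\mu_*+\ve_*)\,A^2 \;+\; \omega^2\,\1_{[0,R_*]}.
\end{equation*}

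For the critical piece $(\mu_*+\ve_*)A^2$, since $\chi_j^2+(\chi_j^\perp)^2=1$ and all operators in sight are diagonal multiplication operators,
\begin{equation*}
A^2\;=\;\boldsymbol\chi\,A^2\,\boldsymbol\chi\;+\;\boldsymbol\chi^\perp\,A^2\,\boldsymbol\chi^\perp.
\end{equation*}
On $\supp(\chi_j)$ monotonicity of $A$ gives $A(r)\le A(2c\<j\>^\sigma)=\lambda_j$, so $\boldsymbol\chi\,A^2\,\boldsymbol\chi\le\boldsymbol\chi\,\boldsymbol\lambda^2\,\boldsymbol\chi$. Since $\supp(\chi_j^\perp)\subset\supp(\eta_j^\perp)$, Proposition~\ref{prop-chi-mu}(ii) via \eqref{a-1} gives $V_j\ge\mu A^2$ on $\supp(\chi_j^\perp)$, and then the inequality \eqref{v-for-h} yields $\boldsymbol\chi^\perp A^2\boldsymbol\chi^\perp\le\mu^{-1}(H_0^2+2\ve_0^2)$. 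Multiplying by $(\mu_*+\ve_*)$ produces the coefficient $(\mu_*+\ve_*)/\mu$ in front of $H_0^2$ and the coefficient $(\mu_*+\ve_*)$ in front of $\boldsymbol\chi\boldsymbol\lambda^2\boldsymbol\chi$.

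The remaining term $\omega^2\,\1_{[0,R_*]}$ is a compactly supported, locally integrable potential (since $v\in L^2_{\mathrm{loc}}$), and the main obstacle is to show it is infinitesimally form-bounded by $H_0^2$. Here one uses that $H_0^2=T+\boldsymbol V$ is of Pauli type, hence of Schr\"odinger type modulo the bounded angular-dependent correction; a standard 2D Sobolev / Kato-type argument then provides, for any $\ve_1>0$, a constant $C_{\ve_1}$ with $\omega^2\,\1_{[0,R_*]}\le(\ve_1/\mu)H_0^2+C_{\ve_1}$. Combining the three estimates gives the main inequality with $C_0=2(\mu_*+\ve_*)\ve_0^2/\mu+C_{\ve_1}$. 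For the special cases: if $\omega$ is bounded the second step is unnecessary and $\omega^2\le\|\omega\|_\infty^2$ is absorbed directly into $C_0$, giving $\ve_*=\ve_1=0$; if $\limsup_{r\to\infty}|\omega(r)|=0$ one writes $\omega^2\le\eta^2+\omega^2\,\1_{[0,R_\eta]}$ for arbitrarily small $\eta$, absorbs $\eta^2$ into $C_0$, and applies the form-bound to the compactly supported part, so $\ve_*=0$ suffices while $\ve_1>0$ is still needed to control the locally $L^2$ behaviour of $\omega$.
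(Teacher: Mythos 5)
Your overall architecture matches the paper's: reduce $W$ to $v_\beta$ via Lemma~\ref{lemma exp twisted}, split $v_\beta^2=\mu_*A^2+\omega^2$, control the critical part with the partition of unity and the lower bounds on $V_j$, and handle the compactly supported remainder by a local form bound. However, there is a genuine error in your treatment of the critical piece. You decompose $A^2=\boldsymbol{\chi}A^2\boldsymbol{\chi}+\boldsymbol{\chi}^\perp A^2\boldsymbol{\chi}^\perp$ and claim that on $\supp(\chi_j)$ monotonicity gives $A(r)\le A(2c\langle j\rangle^\sigma)=\lambda_j$. This is false: by Definition~\ref{definition partition unity}, $\chi_j$ is built from $g_{\delta/2,\,2c}$ (resp. $h_{\delta/2,\,2c}$), whose support extends to $4c\langle j\rangle^\sigma$, so on $\supp(\chi_j)$ one only gets $A^2(r)\le A^2(4c\langle j\rangle^\sigma)=4^{\alpha}\lambda_j^2$. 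Your argument therefore produces the coefficient $4^{\alpha}(\mu_*+\ve_*)$ in front of $\boldsymbol{\chi}\boldsymbol{\lambda}^2\boldsymbol{\chi}$, not $(\mu_*+\ve_*)$. This is not cosmetic: the theorem's constant must stay below $1$ (it feeds into $\bar\mu_*<1$, the definition of the energy levels $E_j$, the congruence estimate \eqref{eq: Ej admissible}, and Lemma~\ref{lemma: Ej admissible}), and $4^\alpha(\mu_*+\ve_*)$ can exceed $1$. The paper avoids this precisely by performing the decomposition with the \emph{smaller} partition $\boldsymbol{\eta}$, for which $\sup\supp\eta_j=2c\langle j\rangle^\sigma$ so that $\boldsymbol{\eta}A^2\boldsymbol{\eta}\le\boldsymbol{\eta}\boldsymbol{\lambda}^2\boldsymbol{\eta}$ exactly, and then uses $\eta_j\le\chi_j$ to pass to $\boldsymbol{\chi}\boldsymbol{\lambda}^2\boldsymbol{\chi}$ (Lemma~\ref{lemma critical part}); your step for $\boldsymbol{\chi}^\perp A^2\boldsymbol{\chi}^\perp$, via $\supp(\chi_j^\perp)\subset\supp(\eta_j^\perp)$, \eqref{a-1} and \eqref{v-for-h}, is fine as written.

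A secondary imprecision: for the compactly supported piece $\omega^2\1_{[0,R_*]}$ you assert that $H_0^2$ is ``of Schr\"odinger type modulo the bounded angular-dependent correction.'' The field $B(r)=(1+\alpha)r^{\alpha-1}$ is unbounded near the origin when $\alpha<1$, so the correction is not bounded; what is actually needed is the diamagnetic-inequality argument of Lemma~\ref{loc.int}, which bounds $\chi_R|B|\chi_R$ by $H_0^2$ using only $B\in L^1_{\rm loc}$ and yields the infinitesimal bound $\omega^2\1_{[0,R_*]}\le(\ve_1/\mu)H_0^2+C_{\ve_1}$ that you invoke. With the $\boldsymbol{\eta}$ correction and the appeal to Lemma~\ref{loc.int}, your bookkeeping of the coefficients (including the two special cases for $\omega$) reproduces the paper's statement.
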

		
		\begin{remark} 
\begin{remarklist}
	\item
The contributions containing $\mu_*$
arise from the critical part $\mu_* A^2$
in \eqref{v beta decomposition}. 
On the other hand, 
the infinitesimal contributions 
containing $\ve_*$
and
$\ve_1$
arise from the non-critical part $\omega$.
In particular, 
recall that 
			$\mu_*  < \mu  < 1$.
			Consequently, 
			we can choose $\ve_1$ and $\ve_*$
			to be small enough so that 
both pre-factors in Theorem \ref{theorem W}
are smaller than one. 
			This fact will be crucial in our analysis in the next section. 
\item In applications we will use \eqref{thm W eq 1} for 
$\ve_\ast,\ve_1\in[0,1)$ with the convention that $\ve_1, \ve_\ast$ take the value zero when $\omega$ satisfies the conditions    prescribed above.
\end{remarklist}
		\end{remark}

	\begin{proof}[Proof of Theorem \ref{theorem W}]
		The first inequality contained in Eq. \eqref{thm W eq 1}
		follows from Lemma
		 \ref{lemma exp twisted}
		for $ F \equiv 0$.
		As for the second inequality, 
		we start from the decomposition contained in \eqref{v beta decomposition}
		and use Lemma \ref{lemma critical part} to control the critical part, 
		and Lemma \ref{lemma non critical} to control the non-critical part. 
It then suffices to put everything together. 
	\end{proof}
	We dedicate the remainder of this section
	to the proof of Lemma \ref{lemma critical part}
	and \ref{lemma non critical}.

\subsubsection{The critical part}
	Our first result is the following Lemma, 
	which controls the critical part $\mu_* A^2$. 
	It shows 
	that   the   magnetic vector potential 
	$A^2$ 
	can be dominated by two different contributions: 
	one given by $H_0$ and     one given by 
	$\blambda $

	\begin{lemma}
		\label{lemma critical part}
Under the same assumptions of Theorem \ref{theorem W}, 
the  following inequality holds   
	in the sense of quadratic forms of 
$\mathcal C$ 
		\begin{align}
			A^2
			\leq 
			 \frac{1}{\mu} 
H_0^2 
			 +
\boldsymbol{\eta}
\boldsymbol{\lambda}^2
\boldsymbol{\eta}
			  +
			\frac{2  \varepsilon_0}{\mu}  
				\,. \label{a-for-h}
		\end{align} 
	\end{lemma}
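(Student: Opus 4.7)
The plan is to prove \eqref{a-for-h} by decomposing $A^2$ per angular-momentum channel using the partition of unity $(\eta_j, \eta_j^\perp)$ from Definition \ref{definition partition unity}, bounding each piece separately, and then summing over $j \in \Z$. Since $A$ is a multiplication operator that commutes with the scalar functions $\eta_j$, in each channel I have the identity
$$
A^2 \,=\, \eta_j A^2 \eta_j \,+\, \eta_j^\perp A^2 \eta_j^\perp.
$$

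For the inner piece, I exploit the support properties built into the partition of unity: for every $j \in \Z$ one has $\supp(\eta_j) \subset [0, 2c\<j\>^\sigma]$. Since $A(r) = r^\alpha$ is monotone increasing, this gives the pointwise bound $A(r) \leq A(2c\<j\>^\sigma) = \lambda_j$ on $\supp(\eta_j)$, hence $\eta_j A^2 \eta_j \leq \lambda_j^2 \eta_j^2 = \eta_j \lambda_j^2 \eta_j$. Summing over $j$ produces precisely the middle term $\boldsymbol{\eta}\boldsymbol{\lambda}^2 \boldsymbol{\eta}$ of \eqref{a-for-h}.

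For the outer piece, the lower bound \eqref{a-1} of Proposition~\ref{prop-chi-mu}(ii), valid under the standing hypotheses $c \geq c_\mu$ and $\delta \leq \delta_{\mu,c}$, gives $V_j \geq \mu A^2$ on $\supp(\eta_j^\perp)$, hence
$$
\eta_j^\perp A^2 \eta_j^\perp \,\leq\, \frac{1}{\mu}\, \eta_j^\perp V_j \eta_j^\perp.
$$
I then invoke the per-channel form of \eqref{v-for-h} (i.e., $\eta_j^\perp V_j \eta_j^\perp \leq h_j^2 + 2\varepsilon_0^2$, which is exactly what the proof of the preceding corollary establishes before summing in $j$) to control the right-hand side by $(1/\mu)(h_j^2 + 2\varepsilon_0^2)$. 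Summing the two contributions over $j \in \Z$ and identifying $\oplus_{j\in\Z} h_j^2 \cong H_0^2$ yields the desired inequality (with constant $2\varepsilon_0^2/\mu$, which is stronger than $2\varepsilon_0/\mu$ since $\varepsilon_0 \in (0,1)$). I do not expect any serious technical obstacle here: the whole argument is dictated by the careful calibration in Section~\ref{section analysis car} of the supports of $\eta_j$ and $\eta_j^\perp$, designed precisely so that on one of them $A$ is dominated by $\lambda_j$, while on the other the effective potential $V_j$ dominates $\mu A^2$.
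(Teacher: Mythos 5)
Your proposal is correct and follows essentially the same route as the paper: the same $\eta/\eta^\perp$ decomposition of $A^2$, the bound $\eta_j A^2\eta_j\le \eta_j\lambda_j^2\eta_j$ from the support of $\eta_j$ and monotonicity of $A$, and the bound of the $\eta_j^\perp$ piece via Proposition~\ref{prop-chi-mu} together with the localization estimate \eqref{v-for-h}. Your sharper constant $2\ve_0^2/\mu$ is consistent with (and implies) the stated $2\ve_0/\mu$.
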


	\begin{proof}
Let $\eta$ be the partition of unity
in the sense of Definition \ref{definition partition unity}.
We consider the decomposition
	\begin{equation}
\label{decomposition of A}
		A^2 = 
			\boldsymbol{\eta}^\perp
		A^2
		\boldsymbol{\eta}^\perp
		+
		\boldsymbol{\eta}
		A^2
				\boldsymbol{\eta}
	\end{equation}
and study each term separately. 
The first term in \eqref{decomposition of A}
can be estimated using Proposition \ref{prop-chi-mu}, 
combined with the localization estimate \eqref{v-for-h} for the effective potentials
$V = \oplus_{ j \in \Z }V_j$. 
We obtain 
that in the sense of quadratic forms in $\mathcal C$
\begin{equation}
				\boldsymbol{\eta}^\perp
	A^2
	\boldsymbol{\eta}^\perp
	\leq 
	\frac{1}{\mu}
					\boldsymbol{\eta}^\perp
V
	\boldsymbol{\eta}^\perp
	\leq 
		\frac{1}{\mu}
	H_0^2 + 
		\frac{2\ve_0 }{\mu}  
		 \ . 
\end{equation}
The second term in \eqref{decomposition of A}
can be estimated in terms of 
the sequence $ (\lambda_j)_{j\in\Z} $, 
and the fact that
$\sup \supp \eta_j = 2 c  \<j\>^\sigma$. Namely, we obtain
\begin{equation}
		\boldsymbol{\eta}
A^2
\boldsymbol{\eta}
\leq 
		\boldsymbol{\eta}
\boldsymbol{\lambda}^2
\boldsymbol{\eta} \ . 
\end{equation}
Combining the two contributions finishes the proof. 
	\end{proof}

  \subsubsection{{The non-critical part}}
Let us now control the non-critical part 
$\omega$ of the potential.
\begin{lemma}
	\label{lemma non critical}
Under the same assumptions of Theorem \ref{theorem W}, 
	let  $\omega$ be given through \eqref{v beta decomposition}. 
 	Then, for all $\ve_1, \ve_* \in (0,1)$
 there exists a constant $C>0$ such that 
 in the sense of quadratic forms on 
 $\mathcal C$
 \begin{equation}
 	\omega^2 
 \  	\leq  \ 
 \bigg( 	\frac{\ve_1 + \ve_*}{\mu}		\bigg) 
 	H_0^2
 \,  	+ \, 
 	\ve_* 
 	\boldsymbol{\chi}
 	\boldsymbol{\lambda}^2
 	\boldsymbol{\chi}
\,  	+  \, 
 	C  \ . 
 \end{equation} 
Moreover, if $\limsup_{r\to\infty} |\omega(r)|=0$, the above inequality holds for $\varepsilon_\ast=0$.
\end{lemma}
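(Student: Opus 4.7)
The plan is to split $\omega^2 = \omega^2 \mathbf{1}_{r \leq R} + \omega^2 \mathbf{1}_{r > R}$ for a threshold $R = R(\ve_*) > 0$ chosen sufficiently large, and treat each piece separately. The tail piece will be handled by invoking Lemma~\ref{lemma critical part} together with the defining property of $\mu_*$, while the local piece will be shown to be infinitesimally form-bounded by $H_0^2$ using the fact that $v \in L^2_{\mathrm{loc}}$.

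For the tail, note that Condition~\ref{condition 2}(b) gives $v_\beta^2(r)/A^2(r) \to \mu_*$ as $r \to \infty$, and hence $\omega^2(r)/A^2(r) \to 0$ by the decomposition \eqref{v beta decomposition}. Thus, for any $\kappa > 0$ there exists $R = R(\kappa) > 0$ such that $\omega^2(r) \leq \kappa A^2(r)$ for all $r > R$. Choosing $\kappa = \ve_*$ and invoking Lemma~\ref{lemma critical part} yields, in the sense of forms on $\mathcal{C}$,
\begin{equation*}
\omega^2 \mathbf{1}_{r > R} \ \leq \ \ve_* A^2 \ \leq \ \frac{\ve_*}{\mu}\, H_0^2 + \ve_*\, \boldsymbol{\eta \lambda^2 \eta} + \frac{2 \ve_* \ve_0}{\mu}.
\end{equation*}
Since by Definition~\ref{definition partition unity} one has $\eta_j \leq \chi_j$ pointwise, and $\lambda_j$ is a scalar constant in each block, $\boldsymbol{\eta \lambda^2 \eta} \leq \boldsymbol{\chi \lambda^2 \chi}$ as operators, which produces a tail contribution of the required form.

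For the local piece $\omega^2 \mathbf{1}_{r \leq R}$, the goal is to prove an infinitesimal form bound: for any $\ve_1 > 0$ there exists $C_{\ve_1,R} > 0$ with
\begin{equation*}
\omega^2 \mathbf{1}_{r \leq R} \ \leq \ \frac{\ve_1}{\mu}\, H_0^2 + C_{\ve_1, R}
\end{equation*}
in the sense of forms on $\mathcal{C}$. To achieve this, I would first note that $\omega \mathbf{1}_{r \leq R} \in L^2(\R^2, \mathrm{d}\bx)$, since $v \in L^2_{\mathrm{loc}}$ and $A$ is bounded on $[0,R]$. Next, since $V_j$ is bounded from below on $r \leq R$ uniformly in $j$ once the sign of $B$ is controlled on that region, one has $H_0^2 \geq T - C_R$ block-wise, so it suffices to estimate $\omega^2 \mathbf{1}_{r \leq R}$ against the radial kinetic energy $T$. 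The classical truncation $\omega = \omega_N + \omega^N$ with $\omega_N = \omega \mathbf{1}_{|\omega| \leq N}$ gives a bounded part contributing $N^2 \|\psi\|^2$ and a remainder with $\|\omega^N\|_{L^2} \to 0$ as $N \to \infty$; the remainder is then controlled by a 2D Sobolev/Rellich--Kondrachov estimate applied to $\psi$ on the bounded region, selecting $N$ large enough to absorb an arbitrarily small fraction of the kinetic energy into $\ve_1/\mu$.

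Combining the tail and local estimates gives the lemma in the general case. For the \emph{moreover} statement, if $|\omega(r)| \to 0$ as $r \to \infty$, then for any $\kappa > 0$ one can choose $R = R(\kappa)$ so that $|\omega|\mathbf{1}_{r > R} \leq \kappa$ pointwise; hence $\omega^2 \mathbf{1}_{r > R} \leq \kappa^2$ is merely a bounded constant, and no $A^2$-type term is needed for the tail. Combining with the infinitesimal local bound directly yields $\omega^2 \leq (\ve_1/\mu)\, H_0^2 + C$, i.e.\ the case $\ve_* = 0$.

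The main technical obstacle lies in the local piece: since $v$ is only assumed $L^2_{\mathrm{loc}}$, the potential $\omega^2 \mathbf{1}_{r \leq R}$ sits precisely at the critical exponent for 2D Sobolev embedding, so the infinitesimal form bound cannot be read off directly from $H^1 \hookrightarrow L^p$ for $p < \infty$. The two-step truncation $\omega = \omega_N + \omega^N$, combined with the compactness of the embedding on bounded regions, is essential to bring the relative bound down to an arbitrarily small value of $\ve_1$.
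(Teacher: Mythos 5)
Your handling of the tail is essentially the paper's own argument: Condition~\ref{condition 2}\,\ref{a-control} and the decomposition \eqref{v beta decomposition} give $\1_{[R,\infty)}\omega^2\le \ve_* A^2 + C_{\ve_*}$ for $R$ large, Lemma~\ref{lemma critical part} converts $A^2$ into $\tfrac{1}{\mu}H_0^2+\boldsymbol{\eta}\boldsymbol{\lambda}^2\boldsymbol{\eta}+\mathrm{const}$, and $\eta_j^2\le\chi_j^2$ gives the $\boldsymbol{\chi}\boldsymbol{\lambda}^2\boldsymbol{\chi}$ term; the case $\limsup_{r\to\infty}|\omega(r)|=0$ is treated the same way in the paper (tail bounded by a constant, so $\ve_*=0$). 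So that half is correct and identical in substance.

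The gap is in the local piece $\1_{[0,R]}\omega^2$. The paper does not attempt a Sobolev/compactness argument there: it simply invokes Lemma~\ref{loc.int} of Appendix~\ref{appendix local}, whose proof goes through the diamagnetic inequality and a separate relative bound for the $\sigma_3 B$ term. Your substitute has two problems. First, the reduction $H_0^2\ge T-C_R$ block-wise is false for $0<\alpha<1$: under Condition~\ref{condition 1'} one has $B(r)=(1+\alpha)r^{\alpha-1}$, and $V_j^-$ contains $-B(r)$, which is unbounded below as $r\downarrow 0$ (already for $j=0$); uniformity in $j$ is not the issue, the singular $B$-term is, and it is precisely what Lemma~\ref{loc.int} handles by a second relative-bound argument instead of discarding it. Second, and more seriously, the truncation $\omega=\omega_N+\omega^N$ does not close the critical case: the only smallness you retain is $\|\omega^N\1_{r\le R}\|_{L^2(\R^2,\d\bx)}\to 0$, i.e.\ the potential $(\omega^N)^2$ has small $L^1$ norm, and in two dimensions a potential that is merely small in $L^1$ need not have a small relative form bound with respect to the kinetic energy. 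For instance $U_\lambda(\bx)=\lambda\,|\bx|^{-2}\big(\log (1/|\bx|)\big)^{-2}\1_{\{|\bx|<\delta\}}$ is of the form $\omega^2$ with $\omega\in L^2(\d\bx)$ locally and has arbitrarily small $L^1$ norm as $\delta\downarrow 0$, yet testing with $u(\bx)=\big(\log(1/|\bx|)\big)^{1/2}$ (suitably cut off) gives $\int U_\lambda|u|^2\approx 4\lambda\int|\nabla u|^2$ up to bounded errors, so its relative bound stays of order $\lambda$ no matter how small its $L^1$ norm; Rellich--Kondrachov cannot repair this because $H^1(\R^2)\not\hookrightarrow L^\infty$. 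Hence "choosing $N$ large" does not push the constant below a prescribed $\ve_1$, and the local estimate — which is the actual content of this lemma beyond the tail — is not established by your plan. At that step you should quote Lemma~\ref{loc.int} (as the paper does), or give an argument that uses more than bare $L^1$-smallness of $\omega^2\1_{[0,R]}$.
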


\begin{proof} 	
First, we control $\omega$ at infinity.
Since $\omega(r)=o(A(r))$, for any $\varepsilon_\ast  \in [0,1)$ there is   $R_0>0$ and  $C_{\varepsilon_\ast}>0$ such that
		\begin{align}
			\label{mono1}
\1_{	[ R_0  ,\infty )	}
			\omega^2\le \varepsilon_\ast A^2 +C_{\varepsilon_\ast}\,. 
		\end{align}	
On the other hand, 
since $A\in L^2_{\rm loc}(\R^2)$  we have, by \eqref{v beta decomposition},  that $\omega$ belongs to the same space.
Therefore, by Lemma~\ref{loc.int}
we have  that, for all  $\varepsilon_{\rm l} \in (0,1)$, there is a constant $C_{\varepsilon_{\rm l}}>0$ such that 
		\begin{align}
			\label{mono2}
			\1_{[0,R_0]}	
			\omega^2
			\le
			\frac{\varepsilon_{\rm l}}{\mu } H_0^2+C_{\varepsilon_{\rm l}}
			\,.
		\end{align} 
Finally, we consider the decomposition
$
	\omega^2 
 = 
\1_{	[ R_0  ,\infty )	}
\omega^2
+ 
\1_{[0,R_0]}	
\omega^2 
$ and 
use \eqref{mono1}, \eqref{mono2}
as well as   Lemma \ref{lemma critical part}
to conclude that 
\begin{align}
\nonumber
	\omega^2  
 \ \leq  \ 
\ve_* A^2  
+ 
\frac{\ve_1}{\mu} H_0^2 + C_{\ve_1 } +C_{\varepsilon_\ast}	 
 \  \leq \ 
\frac{(\ve_* + \ve_1)}{\mu }
H_0^2
+ 
\ve_* 
\boldsymbol{\chi} 
\boldsymbol{\lambda}^2
\boldsymbol{\chi} 
+
C  
\end{align}
for a constant $C>0$. The last statement follows since if $\limsup_{r\to\infty} |\omega(r)|=0$ we may take $\varepsilon_\ast=0$ in \eqref{mono1}.
This finishes the proof. 
\end{proof}

\subsection{}

%
%

%
%
%

	 \section{Exponential decay and the proof of Theorem~\ref{theorem tunneling estimates}}
	 \label{section abs decay thm}
 In this section we state an abstract result
regarding   the  exponential decay of the eigenprojection  $\p_I (H  )$, 
 away from the   the support of $\chi^\perp=(\chi_j^\perp)_{j\in \Z}$
 or, in other words, 
away from  the classically forbidden regions. 
 This is the content of   Theorem~\ref{theorem exp decay}, 
 and its proof is postponed to the next section. 
Here, we focus on how  our main result, Theorem~\ref{theorem tunneling estimates}, 
follows from this abstract result. 
 Without loss of generality we consider $A(r)$ as in Condition \ref{condition 1'}.

	 \subsection{Exponential weights}
	 In order to quantify   exponential decay  we  introduce an appropriate class of operators playing the role of \textit{exponential weights}.
	 To this end,  we consider the following space of sequences of real-valued, radial functions
	 \begin{equation}  
	 	\W  
	 	: =
	 	\big\{  \, 
	 	F = (F_j)_{ j \in \Z} 
	 	\  \big|   \ 
	 	F_j \in W^{1,2}_{ \t{loc}}  \big(  ( 0,\infty ) ,  \R    \big) \ \forall j \in \Z 
	 	\, 
	 	\big\} \ . 
	 \end{equation}
	 To any $F =  (F_j)_{j\in \Z} \in \W $, we associate the following two operators on $ \H $  
	 \begin{equation}  
	 	e^{\pm \boldsymbol{F} } 
	 	:=
	 	\sum_{j\in\Z} 
	 	e^{\pm F_j}(|\bx |) P_{m_j}
	 \end{equation}
	 which we define  on their respective maximal dense domains 
	 $\calD(e^{\pm 
	 	\boldsymbol{F}
 	}
 	)$. 
	 Both operators are self-adjoint, positive and inverses of each other. 
	 In addition, we define the space of \textit{bounded weights} on $\mathcal{H}$ as 
 \begin{equation}
	 \W_b:=\{F \in \W \ \big| \ \|F\| 
= 
\sup_{j\in\Z} 
\sup_{r>0}
|F_j(r)|  
< \infty \}\,. 
 \end{equation}
Note that for bounded weights, the associated exponential operators
$e^{\pm 	\boldsymbol{F}}$ are   bounded operators on $\H$.

\subsection{Energy levels}
One of the main ingredients in our analysis 
are the Clasically Allowed Regions,  introduced in Definition \ref{definition CAR}
and further analyzed in Section \ref{section analysis car}. 
These
 are defined in terms of  a sequence of positive real numbers $ (E_j)_{ j \in \Z}$, that we regard as  reference energy levels. 
Let us now specify how this sequence looks like.

\vspace{1mm}
First, 
 we specify 
a collection of   parameters that have been introduced in the previous section.
Namely, we consider  
nine   non-negative real numbers
$(\mu , \delta , c,  j_0 ,   \ve_0 , \ve_1 ,\ve_*, C,U   )\in \R_+^9 $, 
which we shall refer to as the \textit{internal parameters}. 
These are going to be tuned 
with respect to four  
\textit{external parameters} of our theory: 
$(\alpha, \beta , \mu_*,E )$. 
The first three are determined by the Dirac Hamiltoninan $H_0$ satisfying Condition 
\ref{condition 1} and \ref{condition 2}, 
whereas $E  \geq  1$ is understood as the energy of a state $\vp\in\H$ under consideration. 

\begin{condition}[Parameter tuning]
	\label{condition tuning}
	Let 
	$
	(\mu , \delta , c,  j_0 , \ve_0 , \ve_1, \ve_*   ,C,U,  )\in [0,\infty)^9  
	$
be a collection of internal parameters, 
and let us introduce the following notations
\begin{align}
\begin{cases}
	 & 	
	  \tilde \mu  \equiv \mu_* + \ve_1 + \ve_*   
	\qquad
	\t{and}
	\qquad 
	\bar \mu_*
	 \equiv
	   \mu_*+ \ve_*   \ ,   \\ 
&
	a \equiv  
	 (		\widetilde \mu / \mu )^{\frac{1}{2}}  , 
\qquad
b 
\equiv 
   ( 	\overline{\mu}_*	 /   \widetilde \mu )^\frac{1}{2} , 
\qquad 
\t{and}
\qquad
\delto  
=  
\min(
1 - a, 1 - b
)   \ ,  \\ 
& \widetilde C_0 \equiv  (1-a) \ve_0 + C /a  \ , \\ 
&
\label{definition widetilde E}
\widetilde E ^2 
\equiv 
\big(
 1			-			\ve_0/ \delta_0 
  \big)
U^2  	-		 2 \ve_0 \delta_0^2 		- 
\delta_0  C/a     \ . 
\end{cases}
\end{align}
We say that the internal parameters 
	are {\textbf{tuned}} with respect to 
	$(\alpha , \beta  , \mu_*, E) \in \R_+^4 $
	provided the following conditions are satisfied. 
\end{condition}

\begin{enumerate}[label =  (\roman*) , leftmargin  = 0.7 cm  ]
\item   \textit{Basic inequalities.}
\begin{align}
\label{po}
\tag{P0}
	 \mu \in (\mu_* ,1 ) ,  \   \ 
	 \delta \in (0,1) , \  \ 
	 c \geq 1  , \  \ 
	 j_0 \geq 1  , \  \ 
	 U \geq 1   , \  \ 
	 \ve_0 \in  \Big(0,\frac{1}{16 }	\Big) 
	 \  \t{ and } \ 
	 \ve_1 , \   \ve_* \in [0,1)  . 
\end{align}

\item  \textit{AR constraints.}
Letting 
 $( c_\mu , \delta_{\mu, c})$
be  the parameters from  Definition \ref{definition delta mu c mu} 
\begin{align} 
\label{p1}
\tag{P1}
c \geq c_\mu  
\quad \t{ and } \quad 
\delta \leq 
\delta_{\mu,c} \ .
\end{align}

\item  \textit{Localization error.}
In the context of Proposition \ref{prop-chi-mu}
\begin{align} 
\label{p2}
\tag{P2}
  \ve_0  \geq
   C_* 
   \max\Big(	 \frac{1}{c} , \frac{1}{\delta j_0^\sigma }	\Big) \ . 
\end{align} 

\item \textit{$W$ constraints.}
In the context of Theorem \ref{theorem W} 
\begin{align}
\label{p3}
\tag{P3}
 	 \tilde \mu     <  \mu, 
 	 \qquad 
 \bar \mu_*    < 1 \   , 
 \quad 
 \t{and}
 \quad 
 C \geq C_0 \ . 
\end{align}


\item 
	\textit{Energy constraints on $U$ and $\ve_0$. }
\begin{align}
\tag{P4}
\label{p4}
\begin{cases} 
& 	\widetilde E  
  \ 	\geq  \ 
	E + \ve_0   \ ,   \\
	& 	(1   -  2 \ve_0/ \delto )  U^2 
	 \ \geq  \ 
	2 \delta_0 \widetilde C_0  \ ,   \\ 
& 
U \ \geq   \ 
4  \, 
\sup \{ |z|: z   \in \C , \ | \t{Re}\,  z|\leq E + \ve_0 , \ |\t{Im} \, z|\leq \ve_0  \}   \ . 
\end{cases} 
\end{align}

    \item \textit{Energy constraint on $c$. } 
\begin{align} 
	\label{p5}
	\tag{P5}
		 \mu ( 1  - \sqrt{	\bar \mu_* / \mu 	} ) A^2 ( 2 c )	
		 \  \geq \   2   U^2 / \delta_0^2  \ . 
\end{align}

	\end{enumerate} 

{  
	 \begin{remark}[Existence]
Note  that for all  $(\alpha, \beta, \mu_*, E )$
one can always find parameters 
$	(\mu , \delta , c,  j_0 , \ve_0 , \ve_1, \ve_*   ,C,U  ) $
satisfying Condition \ref{condition tuning}. 
To see this, fix first any  $\mu \in (\mu_*,1 )$.
Then, fix   $\ve_1$ and $\ve_*$
such that  \eqref{po} and \eqref{p3} hold.
Next, fix  $C  =  C_0$ in \eqref{p3}
together with  $\ve_0 < \frac{\delta_0}{16} .$
Thanks to the last bound, we can fix  $U = U (E, C, a)$
such that \eqref{p4} holds. 
Finally, fix $\delta$ small enough, and $c$ and $j$ large enough
such that 
  \eqref{po}, \eqref{p1},  \eqref{p5} and \eqref{p2} hold. 
 \end{remark}
}

\vspace{2mm}

We are now ready to define the energy levels that we work with.  
Let us recall here that
 the sequence $\lambda = (\lambda_j)_{ j \in \Z}$
 has been introduced in
 Def. \ref{definition lambda} .

\begin{definition} 
	\label{definition-energies}
The  
	 \textbf{reference energy levels}
	 corresponds to the sequence of positive numbers
	$  (E_j)_{j\in \Z}$,  
	defined through the relation
	\begin{equation}\label{Ej definition}
		E_j^2 
		\ 	\equiv \ 
		( 	\mu \mast)^{\frac{1}{2}}
		\, 
		\lambda_j^2 
		\, + \,  	\frac{1}{\delta_0^2 } U^2 \  , 
		\qquad \forall j \in \Z  
	\end{equation}

\end{definition}

\begin{remark}
A few comments are in order.
\begin{enumerate}[leftmargin=0.7cm]
	\item 
It follows from the definition of $\widetilde E$ in \eqref{definition widetilde E}, 
the first condition in  \eqref{p5},  and  the fact that $\delta_0 < 1 $, 
that for all $ j \in \Z $  the energy levels are bounded below by the reference energy
\begin{equation}
	E_j > E  \ . 
\end{equation}
In applications, it will be useful to consider   $ E \geq 1 $ in order to simplify 
algebraic manipulations.

	\item 
		If $W$ is a bounded operator, 
	it is possible to choose 
	$\ve_1 = \mu_\ast = \mast = 0$.
	In this case, 
	the reference  energy levels 
	can be chosen 
	uniformly in   $j \in \Z $.
	This is the analogous situation 
	that was originally considered in the non-relativistic case
	\cite{CHSV21}.

	\item 
	We  say that
	the reference  energy levels are \textit{congruent}
	with respect to the partition of unity   $\chi$. 
This comes from the observation that the operator inequalities hold 
	\begin{equation}
		\label{eq: Ej admissible}
		\bchi^\perp		
		( \boldsymbol{V}  - \boldsymbol{E}^2) 
		 \bchi^\perp 
		\ 	 \geq  \ 
		0 \    \ .
	\end{equation}
	In particular, \eqref{eq: Ej admissible} implies that
	$\mathcal{C}_j(E_j) \subset \supp \chi_j$ for all $ j \in \Z$.
	In order to prove the above inequality, 
	we apply Proposition \ref{prop-chi-mu}
	to find that 
	\begin{align}
		\bchi^\perp
		\big(
		 \boldsymbol{V}  - \boldsymbol{E}^2 
		\big)
		\bchi^\perp
		\geq 
		\bchi^\perp 
		\big(
		\mu  \blambda^2
		- 
		\boldsymbol{E}^2 
		\big) 
		\bchi^\perp  \ . 
	\end{align}
	In our parameter regime, 
	the sequence defining   the  above right hand side is strictly positive.
This follows from the inequalities 
	\begin{align}
		{   
			\mu \lambda_j^2 
			- E_j^2
			\  =  \ 
			\mu 
			\Big(
			1 - 
(\bar \mu_* / \mu)^{\frac{1}{2}} 
				 \Big)
			\lambda_j^2 
			- 
			 \frac{1}{\delta_0^2 }
			  U^2 
			\ 	 \geq  \ 
			\mu 
		\Big(
		1 - 
		(\bar \mu_* / \mu)^{\frac{1}{2}} 
		\Big)
		\lambda_0^2 
		- 
		\frac{1}{\delta_0^2 }
		U^2 
			\geq 0  \ .
		} 
	\end{align}
	Here, we  have used the definition of $E_j$, 
	the lower bound  
	$\lambda_j 
	\geq \lambda_0 =  A^2 (2 c ) $, 
and  the inequality contained in 
	\eqref{p5}. 
\end{enumerate}

\end{remark}

	 \subsection{An abstract result on exponential decay}
	
	Our proof of the  tunneling estimates contained in Theorem \ref{theorem tunneling estimates}
	    relies on   the following   abstract exponential decay result.

	 \vspace{1mm}

	 \begin{theorem}\label{theorem exp decay}

Let us consider the following. 

\begin{itemize}[label= - , leftmargin =  * ]
	\item 
		{  		Let $H $ be the Dirac Hamiltonian   satisfying 
		Condition \ref{condition 1'} with $\alpha>0$
		and Condition \ref{condition 2} with    $\beta$, $\mu_*$.
		Further,  let  $E>0$. } 
	
	\item 
	Let 
	$(\mu , \delta , c,  j_0 ,   \ve_0 , \ve_1 , \ve_* , U, C ) $
	be the internal parameters, tuned with respect to 
	$(\alpha , \beta , \mu_* , E)$ in the sense of Condition \ref{condition tuning}. 
	
	\item 
Let
	$(\chi,\eta)$
	be a partition of unity, built with respect to $(c, \delta, j_0)$
in the sense of Definition \ref{definition partition unity}. 
	Let $(E_j)_{j\in \Z}$ be the energy levels as in 
Definition \ref{definition-energies}. 
\end{itemize}
Then, for any exponential weight $F \in\W  $ 
  that satisfies 
	 	\begin{align}
\tag{F1}
\label{F1}
	 		&    \boldsymbol{F}   \bchi  =  0     \\
\tag{F2}
\label{F2}
	 		& (  \boldsymbol{F}   '  )^2  
	 		\leq
\bigg(
\frac{1 -a }{2}
\bigg)^2
\bchi^\perp
(\boldsymbol{V} - \boldsymbol{E}^2 )
 \bchi^\perp  \\ 
\tag{F3}
\label{F3}
	 		& 
	 		\sup_{r>0 }
	 		|F_j - F_k|
	 		\leq
	 		(\beta /2 ) 
	 		| j - k |    \ , \quad \forall j , k \in \Z \  , 
	 	\end{align}
	 	 	 	it holds true that  $e^{ \boldsymbol{F} }  \p_{ [-E,E ] } (H  ) $ is a bounded operator on $\H$.
	 \end{theorem}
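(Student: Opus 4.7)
The plan is to establish the statement via a BFS-style Cauchy integral argument \cite{BFS1998b} centered on the comparison operator $\widetilde H = H + \sigma_{3}\bE\bchi$ introduced in \eqref{co}. As a preparatory step I would first verify that, under the tuning conditions \eqref{po}--\eqref{p5}, this $\widetilde H$ is essentially self-adjoint on $\calC$ and satisfies the spectral gap inequality $\widetilde H^{2}\geq \widetilde E^{2}$. The heuristic is clean: on $\supp\chi_{j}$ the mass term $E_{j}\sigma_{3}$ anticommutes with $h_{j}$, yielding $(h_{j}+E_{j}\sigma_{3})^{2}=h_{j}^{2}+E_{j}^{2}\geq E_{j}^{2}$, while on $\supp\chi_{j}^{\perp}$ the admissibility inequality \eqref{eq: Ej admissible} combined with Proposition~\ref{prop-chi-mu} gives $V_{j}\geq E_{j}^{2}$. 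Combining these with the IMS identity \eqref{localization formula} (whose error terms $\leq 2\varepsilon_{0}^{2}$ are absorbed in the definition \eqref{definition widetilde E} of $\widetilde E$), and with the relative bound on $W$ from Theorem~\ref{theorem W}, produces the desired operator inequality.

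Next I consider the conjugated operator $\widetilde H_{F}=e^{\boldsymbol F}\widetilde H e^{-\boldsymbol F}$, which makes sense as a bounded perturbation of $\widetilde H$ when $F\in\W_{b}$; the case $F\in\W$ is recovered at the end by truncating to $F_{j}^{(n)}=\max(-n,\min(n,F_{j}))$ (which preserves \eqref{F1}--\eqref{F3}) and passing to the limit via monotone convergence of $\|e^{\boldsymbol F^{(n)}}\p_{I}(H)\vp\|^{2}$. For the radial part of $H_{0}$, conjugation by $e^{F_{j}}$ produces $h_{j}$ plus a bounded additive term proportional to $F_{j}'$. For the non-radial part $W$, condition \eqref{F3} implies that the Fourier coefficients of $W_{F}=e^{\boldsymbol F}We^{-\boldsymbol F}$ still decay exponentially in $|j-k|$ at rate $\beta/2$, so that Lemma~\ref{lemma exp twisted} (with $\beta$ replaced by $\beta/2$) together with Theorem~\ref{theorem W} gives the same type of relative bound for $W_{F}$ up to harmless constant factors. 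Finally, the identity $e^{\boldsymbol F}\bchi =\bchi$ coming from \eqref{F1} ensures that the mass term $\sigma_{3}\bE\bchi$ is unchanged under conjugation. Putting this together, the perturbation $\widetilde H_{F}-\widetilde H$ is controlled by $|\nabla\boldsymbol F|$ plus the $W_{F}-W$ correction; condition \eqref{F2} is precisely calibrated so that $|\nabla\boldsymbol F|^{2}$ is bounded by $\big(\tfrac{1-a}{2}\big)^{2}\bchi^{\perp}(\boldsymbol V-\boldsymbol E^{2})\bchi^{\perp}\leq \big(\tfrac{1-a}{2}\big)^{2}\widetilde H^{2}$, and hence does not close the gap. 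This yields a uniform resolvent estimate $\sup_{z\in\mathcal R}\|(z-\widetilde H_{F})^{-1}\|\leq C$ on the closed rectangle $\mathcal R=\{z\in\C:|\mathrm{Re}\,z|\leq E+\varepsilon_{0},|\mathrm{Im}\,z|\leq \varepsilon_{0}\}$, with a constant independent of $\|F\|_{\infty}$ thanks to \eqref{p4} which places $\widetilde E$ strictly beyond the rectangle.

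The main step is then the Cauchy integral along $\gamma=\partial\mathcal R$. Since $\widetilde H$ has no spectrum in $\mathcal R$ we have $\oint_{\gamma}(z-\widetilde H)^{-1}\,\d z=0$, and hence by the second resolvent identity applied to $H=\widetilde H-\sigma_{3}\bE\bchi$ one obtains
\begin{align*}
e^{\boldsymbol F}\p_{I}(H)
& = \frac{1}{2\pi\mathrm i}\oint_{\gamma}e^{\boldsymbol F}\bigl[(z-H)^{-1}-(z-\widetilde H)^{-1}\bigr]\,\d z \\
& = \frac{1}{2\pi\mathrm i}\oint_{\gamma}(z-\widetilde H_{F})^{-1}\,\sigma_{3}\bE\bchi\,(z-H)^{-1}\,\d z,
\end{align*}
where in the second line I have used $e^{\boldsymbol F}(z-\widetilde H)^{-1}=(z-\widetilde H_{F})^{-1}e^{\boldsymbol F}$ together with $e^{\boldsymbol F}\sigma_{3}\bE\bchi=\sigma_{3}\bE\bchi$ from \eqref{F1}. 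The first factor $(z-\widetilde H_{F})^{-1}$ is uniformly bounded on $\gamma$ by the previous step. For the remaining factor $\sigma_{3}\bE\bchi(z-H)^{-1}$ I would decompose $E_{j}^{2}=(\mu\mast)^{1/2}\lambda_{j}^{2}+U^{2}/\delta_{0}^{2}$ and use Theorem~\ref{theorem W} (rewritten so that $\mast\bchi\blambda^{2}\bchi$ is dominated by $H^{2}$ plus lower-order terms) to bound $\|\bE\bchi(z-H)^{-1}\|$ in terms of $\|H(z-H)^{-1}\|$ and $\|(z-H)^{-1}\|$, both of which are controlled on $\gamma$ by standard spectral-theoretic arguments (after, if necessary, a small analytic deformation of the vertical sides of $\gamma$ to stay in the resolvent set of $H$).

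The main technical obstacle is the interplay between the two sources of unboundedness in the integrand: the reference energies $E_{j}$ grow like $\lambda_{j}\sim |j|^{\alpha\sigma}$, and both $\bE\bchi$ and the perturbation $W$ are unbounded multiplication operators at infinity. This is exactly the reason for the delicate parameter tuning \eqref{po}--\eqref{p5}: the pre-factors $(\mu_{*}+\varepsilon_{1}+\varepsilon_{*})/\mu$ and $\mu_{*}+\varepsilon_{*}$ in Theorem~\ref{theorem W} are both strictly less than~$1$, leaving a positive slack $1-a>0$ to accommodate $|\nabla\boldsymbol F|$ via \eqref{F2}, while \eqref{p5} guarantees $E_{j}\geq E$ for every $j\in\Z$ so that \eqref{eq: Ej admissible} holds and the comparison operator indeed has the claimed gap. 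Once the uniform bound for bounded $F$ is obtained, the truncation argument extends it to every $F\in\W$ satisfying \eqref{F1}--\eqref{F3}, completing the proof.
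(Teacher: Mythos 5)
Your preparatory structure (the comparison operator $\widetilde H=H+\sigma_3\bE\bchi$, its gap, the conjugation $e^{\bF}\widetilde H e^{-\bF}$, the roles of \eqref{F1}--\eqref{F3}, and the truncation from $\W$ to $\W_b$) matches the paper. The gap is in the representation of the spectral projection. You write $e^{\bF}\p_I(H)$ as a Riesz--Cauchy contour integral over $\gamma=\partial\mathcal R$, but this formula is only valid, and the ensuing estimates only possible, when $\gamma$ lies in the resolvent set of $H$. Here nothing is known about the spectrum of $H$ near $\pm(E+\ve_0)$: already for $W=0$ and $0<\alpha<1$ the spectrum of $H_0$ is all of $\R$ (dense pure point), so the vertical sides of $\gamma$ necessarily cross the spectrum, $\|(z-H)^{-1}\|$ behaves like $1/|\mathrm{Im}\,z|$ there, the line integral $\int \d y/|y|$ diverges, and the ``small analytic deformation of the vertical sides into the resolvent set'' you invoke need not exist at all. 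Avoiding precisely this is why the paper replaces the sharp projection by a smooth $g(H)$ and uses the Helffer--Sj\"ostrand formula: the area integral over $\triangle$ carries the almost-analytic weight $|\partial_{\bar z}\tilde g(z)|=\mathcal O(|\mathrm{Im}\,z|)$, which exactly compensates the $|\mathrm{Im}\,z|^{-1}$ blow-up of $\|(H-z)^{-1}\|$ regardless of the spectral type of $H$, and then $\p_I(H)=g(H)\p_I(H)$ finishes the argument.

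There is a second genuine error in your treatment of the factor $\sigma_3\bE\bchi(z-H)^{-1}$. You propose to ``rewrite'' Theorem~\ref{theorem W} so that $\mast\bchi\blambda^2\bchi$ is dominated by $H^2$ plus lower-order terms; that inequality cannot be reversed, and in fact $\bE\bchi$ is \emph{not} relatively bounded with respect to $H$: states of bounded energy (the Aharonov--Casher-type zero modes of $H_0$, or low-lying eigenstates of $h_j$ for large $|j|$) are concentrated exactly in the classically allowed regions where $\chi_j=1$, while $E_j\sim\lambda_j\to\infty$. So $\|\bE\bchi(z-H)^{-1}\|$ cannot be controlled by $\|H(z-H)^{-1}\|$ and $\|(z-H)^{-1}\|$. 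The paper's way out is an adjoint manipulation (see \eqref{griesemer norms}): taking adjoints moves $\bE\bchi e^{\bF}=\bE\bchi$ (by \eqref{F1}) onto the resolvent of the \emph{conjugated comparison operator}, where Proposition~\ref{prop inverse estimate}(ii) and Lemma~\ref{lemma: HF extension} give the uniform bound $\|\bE(\widetilde H^{F}-z)^{-1}\|\le 4/\delta_0$ for $|z|\le U/4$, while the factor left next to $(H-z)^{-1}$ is merely bounded. Your argument needs both repairs; with the Helffer--Sj\"ostrand representation and the adjoint trick it becomes the paper's proof.
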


	 \subsection{Proof of Theorem~\ref{theorem tunneling estimates}}
In order to prove Theorem \ref{theorem tunneling estimates}
we shall choose explicit sequences of weights $F$ and $G$
that satisfy the conditions  \eqref{F1}, \eqref{F2} and \eqref{F3}. 
It turns out that \eqref{F2} is the most difficult condition to check in practice. 
The next lemma gives      lower bounds 
on ${\boldsymbol \chi}^\perp ({\boldsymbol V }- {\boldsymbol E}^2) {\boldsymbol \chi}^\perp $
that both motivates the definition of the upcoming weights, 
and helps checking that the condition is verified.

	 \begin{lemma}\label{lemma: Ej admissible}
	 	Under the same assumptions of Theorem \ref{theorem exp decay}, 
  the following statements hold true. 
\begin{enumerate}[label = (\roman*)]
	\item For $|j| > j_0$ and $r \leq \delta |j|^\sigma /4 $ there holds 
	\begin{align}
\label{lemma Vj 1 eq 1}
		\chi_j^\perp (V_j  - E_j^2) \chi_j^\perp 
		 \geq  
		 \frac{ \mu \delta_0}{2} 
		 \lambda_j^2  \ . 
	\end{align}

	\item For  $j\in\Z$ and $r \geq  4 c  \< j \>^\sigma $ there holds 
		\begin{align}
			\label{lemma Vj 1 eq 2}
		\chi_j^\perp (V_j  - E_j^2) \chi_j^\perp 
		\geq  
		\frac{ \mu \delta_0}{2} 
A^2(r)  \ . 
	\end{align}
\end{enumerate} 
	 \end{lemma}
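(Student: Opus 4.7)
The plan is to reduce both inequalities to pointwise lower bounds for $V_j(r) - E_j^2$ in the respective regions, recalling $E_j^2 = \sqrt{\mu\bar\mu_*}\,\lambda_j^2 + U^2/\delta_0^2$ from Definition \ref{definition-energies}, and then invoke the tuning conditions \eqref{po}--\eqref{p5} to close the estimate. A preliminary remark is that $\chi_j^\perp(r) = 1$ in both regions: for part (i), the interval $r \le \delta|j|^\sigma/4$ lies inside the left kill-set of $\chi_j = g_{\delta/2,2c}(\cdot/|j|^\sigma)$ (since $g_{a,b}$ vanishes for argument $\le a/2$); for part (ii), $r \ge 4c\langle j\rangle^\sigma$ lies inside the right kill-set. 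Hence $\chi_j^\perp$ may be dropped and we work with $V_j - E_j^2$ directly.

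For part (ii), note further that $r \ge 4c\langle j\rangle^\sigma \ge 2c\langle j\rangle^\sigma$, so also $\eta_j^\perp(r) = 1$, and \eqref{a-1} of Proposition \ref{prop-chi-mu} gives $V_j(r) \ge \mu A^2(r)$. Combined with the monotonicity bound $\lambda_j \le A(r)$ on the same region, one finds
\[
V_j(r) - E_j^2 \;\ge\; \mu(1-b)A^2(r) - U^2/\delta_0^2, \qquad b = \sqrt{\bar\mu_*/\mu}.
\]
Using $\delta_0 \le 1 - b$ (from Condition \ref{condition tuning}) and \eqref{p5} applied with $A^2(r) \ge A^2(4c) \ge A^2(2c)$, the constant $U^2/\delta_0^2$ is absorbed into half of the leading term, yielding the stated bound $\ge \tfrac{\mu\delta_0}{2}A^2(r)$.

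For part (i), Proposition \ref{prop-chi-mu} alone only gives $V_j \ge \mu A^2(2c|j|^\sigma)$ on $\supp\chi_j^\perp$, which is only marginally larger than $\mu\lambda_j^2 = \mu A^2(2c\langle j\rangle^\sigma)$ and insufficient to dominate $E_j^2$. I would therefore revisit the computation in the proof of Lemma \ref{lemma lower bound V}(ii). Writing
\[
V_j^+ \;=\; \Big(\frac{j+1}{r}\Big)^{\!2}\Big[\Big(1 - \tfrac{rA}{j+1}\Big)^{\!2} + \tfrac{r^2B}{(j+1)^2}\Big],
\]
and similarly for $V_j^-$, on the smaller interval $r \le \delta|j|^\sigma/4$ the bracket remains $\ge \mu$ (since $\delta/4 \le \delta \le \delta_{\mu,c}$) while the prefactor $((j+1)/r)^2$ gains a factor of $16$ relative to the case $r \le \delta|j|^\sigma$. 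Combined with $\delta \le 1/(2(2c)^\alpha)$ from Definition \ref{definition delta mu c mu} and the elementary bound $(|j|/\langle j\rangle)^{2\alpha\sigma} \ge 1/2$ (valid for $|j| \ge 1$), this yields a refined pointwise estimate of the form $V_j(r) \ge 8\mu\lambda_j^2$ on the region of interest. Subtracting $E_j^2$ leaves $V_j - E_j^2 \ge (8 - b)\mu\lambda_j^2 - U^2/\delta_0^2$, after which $\lambda_j \ge A(2c)$ together with \eqref{p5} absorbs the constant term, leaving at least $\tfrac{\mu\delta_0}{2}\lambda_j^2$.

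The main obstacle is the bookkeeping: tracking which of the tuning conditions \eqref{po}--\eqref{p5} is invoked at each step, and carrying out the refined pointwise bound in part (i) cleanly without introducing extra constants beyond those controlled by the tuning conditions. A minor technical subtlety is that the argument must be executed for both diagonal components $V_j^\pm$, which are structurally identical modulo a shift in the angular index.
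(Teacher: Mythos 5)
Your proof is correct. For part (ii) it is essentially the paper's argument: on $r\ge 4c\<j\>^\sigma$ one has $\chi_j^\perp=1$, $V_j\ge \mu A^2(r)$ from Proposition \ref{prop-chi-mu}, $\lambda_j\le A(r)$, and \eqref{p5} absorbs $U^2/\delta_0^2$ into half of the margin $\mu\big(1-\sqrt{\mast/\mu}\big)A^2(r)$, leaving $\tfrac{\mu\delta_0}{2}A^2(r)$. For part (i) you take a genuinely different route, but your stated reason for doing so is not accurate: Proposition \ref{prop-chi-mu} \emph{is} sufficient, because by Definition \ref{definition-energies} one has $E_j^2=\sqrt{\mu\mast}\,\lambda_j^2+U^2/\delta_0^2$ with $\sqrt{\mu\mast}=\mu\sqrt{\mast/\mu}<\mu$, so the bound $V_j\ge\mu A^2(2c|j|^\sigma)$ from \eqref{a-2} already dominates the $\lambda_j^2$-part of $E_j^2$ with margin essentially $\mu\big(1-\sqrt{\mast/\mu}\big)\lambda_j^2\ge\mu\delta_0\lambda_j^2$, half of which absorbs $U^2/\delta_0^2$ via \eqref{p5}; this is exactly the paper's decomposition \eqref{lemma V lower bound eq 1}--\eqref{lemma V lower bound eq 3} (also note $A^2(2c|j|^\sigma)$ is slightly \emph{smaller} than $\lambda_j^2=A^2(2c\<j\>^\sigma)$, not larger, and your $b=\sqrt{\mast/\mu}$ differs from the paper's $b=\sqrt{\mast/\widetilde\mu}$, though since $\widetilde\mu<\mu$ the inequality $\delta_0\le 1-\sqrt{\mast/\mu}$ that you use still holds). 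Your alternative --- re-running the computation in the proof of Lemma \ref{lemma lower bound V}(ii) on the smaller interval $r\le\delta|j|^\sigma/4$, where the prefactor $((j+1)/r)^2$ gains a factor $16$, giving $V_j\ge 8\mu\lambda_j^2$ after the $(|j|/\<j\>)^{2\alpha\sigma}\ge 1/2$ correction, for both components $V_j^\pm$ --- is nevertheless valid under the tuning conditions, and it buys something: the large factor makes the $|j|$-versus-$\<j\>$ discrepancy between \eqref{a-2} and $\lambda_j^2$ manifestly harmless, a point the paper's proof passes over silently. The price is that you must reopen the proof of Lemma \ref{lemma lower bound V} (the bound $\delta\le 1/(2(2c)^{\alpha})$ lives in that proof, not in the statement of Definition \ref{definition delta mu c mu}), whereas the paper's route stays at the level of the already-established localization statements and needs no sharpened constants.
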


 \begin{proof}
 		We start with a general inequality, and then show how $(1)$ and $(2)$
 	follow from it. 
 	Namely, we start by decomposing in operator notation 
 	\begin{align}
 		\nonumber 
 		\bchi^\perp  		( \bV - \bE^2)		\bchi^\perp 
 		&  = 
 		\bchi^\perp 
 		\Big( \bV -    			 	( 	\mu	\mast)^{\frac{1}{2}} \, \blambda^2 		\Big)
 		\bchi^\perp  
 		-  \delta_0^{-2}
 		\bchi^\perp  U^2 \bchi^\perp  \\
 		& = 
 		\frac{1}{2}
 		\bchi^\perp 
 		\Big(\bV -  
 		( 	\mu \mast)^{\frac{1}{2}}
 		\, \blambda^2 	
 		\Big)
 		\bchi^\perp  
 		+
 		\bchi^\perp 
 		\Bigg(
 		\frac{1}{2}
 		\Big(
 		\bV -   			 	( 	\mu \mast)^{\frac{1}{2}} \, \blambda^2 	
 		\Big)
 		-
 		 \delta_0^{-2}
 		U^2 
 		\Bigg)
 		\bchi^\perp 
 		\label{lemma V lower bound eq 1} \ . 
 	\end{align}
 	Let us now show that the second term of
 	\eqref{lemma V lower bound eq 1}
 	is non-negative.
 	Indeed, Proposition \ref{prop-chi-mu}, 
 	$\lambda_j \geq \lambda_0 = A(2 c )$
 	and 
\eqref{p5}
 	imply that 
 	\begin{align}
 		\bchi^\perp 
 		\Bigg(
 		\frac{1}{2}
 		\big(\bV - 
 		( 	\mu \mast)^{\frac{1}{2}}
 		\, \blambda^2 		\big)
 		-
 	\frac{U^2}{
 	\delta_0^2}
 		\Bigg)
 		\bchi^\perp 
 		& \geq 
 		\bchi^\perp 
 		\Bigg( 
 		\frac{\mu }{2}
 		\big(
 		1 - (\bar \mu_* / \mu )^{\frac{1}{2}}
 		\big) 
 		\blambda^2 
 		-
 	\frac{U^2}{
	\delta_0^2}
 		\Bigg)   
 		\bchi^\perp 
 		\geq 0 \ . 
 		\label{lemma V lower bound eq 2}
 	\end{align}
 	We are now ready to specialize into the two different items of the proof.
 	
 	\vspace{1mm}
 	
\textit{(i) }
 	In view of \eqref{lemma V lower bound eq 2}
 	it suffices to look at the first term of 
 	\eqref{lemma V lower bound eq 1}.
 	To this end, we apply 
 	Proposition \ref{prop-chi-mu}
 	to find that for all $ j \in \Z $
on 
 	$  r \geq 4 \<j\>^\sigma $
 	there holds 
 	\begin{align}
 		\chi^\perp_j 
 		V_j
 		\chi^\perp_j 
 	 \ 	\geq  \ 
 		\mu 
 		\chi^\perp_j 
 		A^2
 		\chi^\perp_j 
\  		= \ 
 \mu  A^2 
 	\end{align}
 	where we have used that $\chi_j^\perp(r) =1 $
 	for $r > 2 c  \< j\>^\sigma$, see Def. \ref{definition partition unity}. 
 	On the other hand, 
 	for $r \geq 2 c  \<  j \>^\sigma $
 	there holds $A^2(r) \geq A^2(2 c   \<  j \>^\sigma) = \lambda_j^2$.
 	Thus, we conclude 
 	that on 
   $r  \geq 2 c \<  j  \>^\sigma $ 
 	there holds 
 	\begin{equation}
 		\chi_j^\perp 	\lambda^2_j \chi_j^\perp 
 	 \ 	\leq \ 
 		\chi_j^\perp 	 A^2   \chi_j^\perp  
 \  		=  \ 
 		A^2
 	\end{equation}
 	where  again we used that $\chi_j^\perp(r) =1 $.
 	We can conclude that 
on    $r  \geq 2 c \<  j  \>^\sigma $   there holds 
 	\begin{align}
 		\chi^\perp_j 
 		\Big(V_j -  
 		( 	\mu \mast)^{\frac{1}{2}}
 		\, \lambda^2_j
 		\Big)
 		\chi^\perp_j  
 		\geq 
 		\mu A^2 - 
 		( 	\mu \mast)^{\frac{1}{2}} 
 		A^2 
 		=
 		\mu 
 		\Big(
 		1  -  \sqrt{\frac{\bar\mu_*}{\mu }} 
 		\Big)
 		A^2 \ . 
 	\end{align}
 	It suffices now to go back to
 	\eqref{lemma V lower bound eq 1}.
 	
 	\vspace{1mm}
 	
 \textit{(ii)}
 	Proposition \ref{prop-chi-mu}
 	immediately implies that
 	for all $ j \in \Z$ 
 	\begin{align}
 		\frac{1}{2}
 		\chi^\perp_j 
 		\Big(V_j -  
 		( 	\mu \mast)^{\frac{1}{2}}
 		\, \lambda^2_j
 		\Big) 
 		\geq 
 		\frac{\mu }{2 }
 		\Big(
 		1  -  \sqrt{\frac{\bar\mu_*}{\mu }} 
 		\Big) 
 		\chi^\perp_j
 		\lambda^2_j
 		\chi^\perp_j 
 		\label{lemma V lower bound eq 3}.
 	\end{align}
 	Now, for $| j| > j_0$
 	and $r   \leq \delta /2 |j|^\sigma$
 	there holds  
 	$\chi_j^\perp(r)=1$.
 	This observation
 	combined with 
 	\eqref{lemma V lower bound eq 1}, 
 	\eqref{lemma V lower bound eq 2}
 	and
 	\eqref{lemma V lower bound eq 3}
 	finishes the proof. 
 \end{proof}

%

{ 
	 	 We will consider the following family of exponential weights, that are inspired  by \cite{CHSV21} for the non-relativistic case.
	 	
	 	\begin{definition}
	 	Given a constant  $\gamma>0$ we consider the two following sequences of weights. 
	 		\begin{enumerate}
	 			\item $F^{\gamma} = (F_j^{\gamma})_{j\in\Z}\in \mathcal{W}$ is defined as
	 			\begin{equation}
	 				F_j^{\gamma}(r) 
	 				:= \begin{cases}
	 					\gamma |j|^{1-\sigma}  
	 					\left(4^{-1 }\delta|j|^\sigma - r\right)_+\quad 
	 					&
	 					\mbox{if }|j|> j_0,\\
	 					0,\quad&\mbox{if }|j|\leq j_0,
	 				\end{cases} 
 				\quad\mbox{for }r>0.
	 			\end{equation}
	 			
	 			\item $G^{\gamma} = (G^{\gamma}_j)_{j\in\Z}\in \mathcal{W}$
	 			 is defined as
	 			\begin{equation}
	 				G_j^{\gamma} := \gamma\left((r/ 4c)^{1+\alpha} - \langle j \rangle)\right)_+,\quad\mbox{for }r>0.
	 			\end{equation}
	 		\end{enumerate}
	 	\end{definition}

This definition leads to the following lemma. 
	 	
	 	\begin{lemma}
	 		\label{lemma F G conditions}
 There exists $\gamma_0\in( 0,1]$
	 		such that  for all $ \gamma \leq \gamma_0 $
	 	the weights 
	 		$F^{   \gamma}$ and $G^{ \gamma}$ 
	 		verify the    conditions
	 		\eqref{F1}, \eqref{F2}, and \eqref{F3}.  
	 	\end{lemma}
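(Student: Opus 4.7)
The plan is to verify \eqref{F1}, \eqref{F2} and \eqref{F3} separately for each of $F^{\gamma}$ and $G^{\gamma}$, and to define $\gamma_0\in(0,1]$ as the minimum of the finitely many smallness conditions on $\gamma$ that emerge. I expect \eqref{F1} and \eqref{F3} to be direct: the former follows from support comparisons and the latter from elementary Lipschitz estimates on the piecewise-explicit weights. The main step, and hence the main obstacle, will be \eqref{F2}: both weights are designed so that their pointwise squared derivatives are controlled by the lower bounds on $\bchi^\perp(\bV-\bE^2)\bchi^\perp$ supplied by Lemma~\ref{lemma: Ej admissible}, with a prefactor proportional to $\gamma^2$ that can be absorbed by shrinking $\gamma$.

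For \eqref{F1}, I inspect supports. By construction, $\chi_j$ vanishes on $[0,\delta|j|^\sigma/4]$ for $|j|>j_0$ (from the inner cutoff of $g_{\delta/2,2c}$), and this set contains $\supp F_j^{\gamma}$; for $|j|\le j_0$ we have $F_j^{\gamma}\equiv 0$, so $\bF^{\gamma}\bchi=0$. Similarly, $\chi_j$ vanishes on $[4c\<j\>^\sigma,\infty)$ by the outer cutoff, a set containing $\supp G_j^{\gamma}$, so $\boldsymbol{G}^{\gamma}\bchi=0$. For \eqref{F3}, a case analysis on the relative sizes of $|j|,|k|$ and $j_0$ gives, for $F^{\gamma}$, the bound $\sup_r|F_j^{\gamma}-F_k^{\gamma}|\le \tfrac{\gamma\delta(1+j_0)}{4}|j-k|$ (using $\sup F_j^{\gamma}=\tfrac{\gamma\delta}{4}|j|$ together with the estimate $|j|\le(1+j_0)|j-k|$ in the mixed case $|k|\le j_0<|j|$), and for $G^{\gamma}$ a direct calculation yields $\sup_r|G_j^{\gamma}-G_k^{\gamma}|=\gamma|\<j\>-\<k\>|\le\gamma|j-k|$. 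Both are bounded by $(\beta/2)|j-k|$ once $\gamma\le\min\bigl(\beta/2,\,2\beta/[\delta(1+j_0)]\bigr)$.

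The substantive step will be \eqref{F2}, which I view as the hard part, and for which I will lean heavily on Lemma~\ref{lemma: Ej admissible}. For $F^{\gamma}$, the derivative is piecewise constant: $|F_j^{\gamma\,\prime}(r)|=\gamma|j|^{1-\sigma}$ on $\{r<\delta|j|^\sigma/4\}$ with $|j|>j_0$ and vanishes elsewhere. On that same set, Lemma~\ref{lemma: Ej admissible}(i), combined with $\lambda_j=(2c)^\alpha\<j\>^{1-\sigma}\ge(2c)^\alpha|j|^{1-\sigma}$ (using $\alpha\sigma=1-\sigma$), yields $\chi_j^\perp(V_j-E_j^2)\chi_j^\perp\ge \tfrac{\mu\delta_0}{2}(2c)^{2\alpha}|j|^{2(1-\sigma)}$, so the pointwise operator inequality in \eqref{F2} follows as soon as $\gamma^2\le \bigl(\tfrac{1-a}{2}\bigr)^2\tfrac{\mu\delta_0}{2}(2c)^{2\alpha}$. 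For $G^{\gamma}$, differentiating gives $G_j^{\gamma\,\prime}(r)=\tfrac{\gamma(1+\alpha)}{(4c)^{1+\alpha}}A(r)$ on $\{r\ge 4c\<j\>^\sigma\}$, and Lemma~\ref{lemma: Ej admissible}(ii) furnishes $\chi_j^\perp(V_j-E_j^2)\chi_j^\perp\ge\tfrac{\mu\delta_0}{2}A^2(r)$ on the same region; \eqref{F2} then follows provided $\gamma^2(1+\alpha)^2(4c)^{-2(1+\alpha)}\le \bigl(\tfrac{1-a}{2}\bigr)^2\tfrac{\mu\delta_0}{2}$. Setting $\gamma_0\in(0,1]$ strictly smaller than all the explicit upper bounds produced along the way completes the argument; the only delicate point is that the $|j|$- and $r$-scalings on both sides of \eqref{F2} match exactly, and this is precisely what motivated the normalizing prefactors $|j|^{1-\sigma}$ and $(r/4c)^{1+\alpha}$ in the definitions of $F^{\gamma}$ and $G^{\gamma}$.
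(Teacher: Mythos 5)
Your proposal is correct and follows essentially the same route as the paper: \eqref{F1} from support comparisons, \eqref{F3} from elementary Lipschitz estimates, and the key step \eqref{F2} from Lemma~\ref{lemma: Ej admissible} via $\lambda_j=(2c)^\alpha\<j\>^{1-\sigma}$ for $F^\gamma$ and the bound $\chi_j^\perp(V_j-E_j^2)\chi_j^\perp\ge\tfrac{\mu\delta_0}{2}A^2(r)$ for $G^\gamma$, with $\gamma_0$ the minimum of the resulting smallness thresholds (matching the paper's $\gamma_1,\gamma_2$ up to constants). The paper only writes out the \eqref{F2} check for $F^\gamma$ and defers the remaining verifications to \cite{CHSV21}; you supply those details consistently, with the only cosmetic caveat that off the supports of the weights the inequality \eqref{F2} rests on the nonnegativity \eqref{eq: Ej admissible}, and that the case $|j|,|k|>j_0$ in \eqref{F3} needs the Lipschitz-in-$j$ estimate rather than the sup bound alone, both of which your argument already accommodates.
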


	 	\begin{proof}
The proof of this lemma is based on calculations 
for analogous weights introduced originally in
\cite{CHSV21}.
Thus--in order to avoid repetition--here we only show that  Lemma \ref{lemma: Ej admissible} implies that $F^\gamma$ verifies \eqref{F2}
for a small enough $\gamma$.
More precisely, 
we take $\gamma \leq \gamma_0 = \min\{\gamma_1  , \gamma_2\}$
where
	 		\begin{equation}
	\label{eq: tunneling-constants}
	\gamma_1 
	\equiv 
  (2c)^{2\alpha}
	(
	\mu - 
	\sqrt{\mu  \mast }
	)
	(
	\mu  - \sqrt{\mu \tilde \mu}
	)
	/ 8 
	\quad \textrm{and}\quad 
	\gamma_2 
	\equiv 
	\min\Big\{ 
	\frac{\beta}{\delta \<  j_0 \>}, 1 
	\Big\} \ . 
 \end{equation}
Here, the condition $\gamma\leq \gamma_1$ is used in the proof of \eqref{F2}, 
while the condition $\gamma \leq \gamma_2$ is used in the proof of \eqref{F3}. 

\vspace{1mm}

Let us now prove \eqref{F2}. 
To this end, we note that  for  $|j| > j_0$, 
  straightforward differentiation in $r>0$
and \eqref{lemma Vj 1 eq 1}
gives 
\begin{align}
		|(F^{  \gamma}_j)'|^2 
	 \ 	=  \ 
	\gamma^2
	|j|^{2(1-\sigma)}
	\mathbb{1}_{\left[0, \delta|j|^\sigma /4 \right]}   
 \  \leq  \ 
	\gamma^2
|j|^{2(1-\sigma)} 
\Big(
\mu \delta_0  \lambda_j^2 /2 
\Big)^{-1 } 
\chi_j^\perp 
(V_j - E^2)
\chi_j^\perp   \ . 
\end{align}
Recalling that $\lambda = A^2 (c \<  j \>^\sigma)   =  c^{2\alpha} \< j\>^{2 \alpha\sigma } $, 
our  claim    now follows by taking $\gamma \leq \gamma_1$  
in the above right hand side. 
	 	\end{proof}

	 	We are now ready to prove our main result.  
	 	Since the argumentation follows the main ideas of \cite{CHSV21}, 
	 	we only present a sketch of the proof in order to avoid repetition.

	 	\begin{proof}
	 		[Proof of Theorem \ref{theorem tunneling estimates}]
	 		We choose   $\gamma= \gamma_0$ as in the proof of Lemma \ref{lemma F G conditions}, and we  denote for simplicity $ F  = F^{\gamma}$
	 		and $G = G^{\gamma}$. 

\vspace{1mm}	 		

 \textit{{(i)}}
Let us denote $ \zeta_1 \equiv \delta \gamma/8 $.
and $C_1  =  8 \delta $. 
Then   observe that for   $j \in \Z$ that satisfies $| j | > j_0$ 
it holds that  
\begin{equation}
\label{lower bound F}
 F_ j (r ) 
 \geq 
\zeta_1 \, |j|  
\qquad
\t{on}
\qquad 
	r \geq  C_1  |j|^\sigma   \ . 
\end{equation} 
Consequently, for $|j| >  j_0 $ we find that the following lower bound holds true 
\begin{equation}\label{eq 5}
	\exp(F_j) 
	\geq 
	\1_{  [ 0 ,   {C_1  |j|^\sigma ]  } } 
	\exp(F_j) 
	\geq 
	\1_{  [ 0 ,   {C_1  |j|^\sigma]  }   } 
	\exp \Big(    \zeta_1 \gamma | j |   \Big) \ . 
\end{equation}
Hence, we decompose  the sum over angular momentum  $j\in  \Z$ in two regions, 
and find that 
  \begin{align}
\nonumber 
  	 	  \sum_{ j \in \Z }
  	 	   \|  
  	 	   e^{ \zeta_1   |j |}
  	 	    			\1_{ [ 0, C_1 |j|^\sigma]   }(r )
  	 P_{m_j}
  	 \p_E (H )		\|^2 	 
  & 	 \leq 
  	 \sum_{| j | \leq j_0 } 
  	   	 	   e^{ 2 \zeta_1    |j |}
  	   	 	   + 
  	   	 	   \sum_{ |j| > j_0 }
  	   	 \|	   	\exp(F_j)   		  	 P_{m_j}
  	 \p_E (H )			\|^2 		\\ 
  & 	 \leq  
 C |j_0|   	   	 	   e^{ 2 \zeta_1     |	 j_0 	|  }
 + 
 \|	 e^{\bF} \p_E(H)	\|^2 \ . 
  \end{align}
	 		Finiteness of the right-hand side then follows from Theorem~\ref{theorem exp decay} and Lemma~\ref{lemma F G conditions}. This concludes the proof of \eqref{equation interior tunnelling}. 

	 		\vspace{1mm}

 \textit{(ii)}
 The proof of  \eqref{equation exterior tunnelling} follows a similar argument.
We set $\zeta_2 \equiv \gamma/2c $ and
$C_2 \equiv 8c$. 
	 		The only   modification comes from modifying the lower bound
	 		\eqref{lower bound F}. 
	 		In the present case, one uses the fact that for  \textit{all} $ j \in \Z $, 
	 		the inequality 
	 		$G_j  (r )  \,   \geq   \,     \,  \zeta_2  r^{1 + \alpha  }$
holds true 	 		for 
	 		$r^{1 + \alpha } \geq  C_2  \< j  \> ^\sigma $. 
	 		We leave the details to the reader. 
	 	\end{proof}

	 \section{Proof of the exponential decay theorem}
	 \label{section proof thm}
Let $H$ be the Dirac Hamiltonian satisfying Conditions \ref{condition 1'}
with $\alpha>0$ and Condition \ref{condition 2}
with    $\beta$ and $\mu_*$, 
and consider an energy level $E>0$ . 
Throughout this section, we consider 
 $\mu$, $\delta$, $c$, $j_0$, $\ve_0$, $\ve_1$, $\ve_*$, $C$ and $U$
to be   nine positive parameters, tuned with respect to $(\mu_*,  \alpha, \beta,E)$ in the sense of 
Condition \ref{condition tuning}.
In particular, the additional parameters 
$ \delta_0$, $\widetilde E$, $a$, $b$,  $\widetilde \mu$, $\bar \mu_* , \widetilde C _0$
have been defined   in \eqref{definition widetilde E}. 
  $\chi$   denotes the partition of unity  constructed with respect to $(c, \delta,j_0)$, 
  in the sense of Definition \ref{definition partition unity}.
  On the other hand $(E_j)_{j\in \Z}$
  correspond to the reference energy levels, as prescribed in  Definition \ref{definition-energies}.

\subsection{Strategy of the proof}
	 The main goal of this section is the proof of Theorem~\ref{theorem exp decay} stated in the previous section. Let us first explain the general ingredients: As in  \cite{BFS1998b,Griesemer2004} (see also \cite{CHSV21}) we use the Helffer-Sj\"ostrand formula to approximate  the spectral projection  
	 $\p_I (H)=\mathds{1}_{ [-E , E]} (H)$ by a smooth function  $g(H)$ of the Hamiltonian. Let  
	 $g \in C_c^\infty (\R, [0,1])$
	 be such that $g |_{ [-E,E]   } = 1$ and $\supp \,  g \subset [ -E - \ve/2 , E + \ve  /2 ] $, for some $\ve>0$. By means of the Spectral Theorem we see that it is enough to show the boundedness of $e^{\bF}g(H)$ since
	 \begin{equation}
	 	\|e^{\bF } \p_I(H)\| = \|e^{\bF } g(H)\p_I(H )\| \leq \|e^{\bF } g(H)\|\,.
	 \end{equation}
	 
	 It is well-known (see e.g., \cite{Davies1995}) that there is an almost analytic extension of $g$, denoted by $\tilde{  g  }$, such that 
	 \begin{equation}\label{hs}
	 	g (t) =   \frac{1}{ \pi   } \int_{\triangle} 
	 	( t - z )^{-1} \,\partial_{\bar z} \, 
	 	\tilde{g}(z)\, \d x \d y  \,,\qquad t\in \R .
	 \end{equation}
	 Here $\partial_{\bar z} \tilde{g }=\frac{1}{2}(\partial_x + {\im} \partial_y)\tilde{g }$ is supported on 
	 $ \triangle  := \{ z = x+iy \in \C \, \  |  \   |x| \leq E + \ve , \ |y | \leq \ve \}$. The almost analytical property allow us to pick $\tilde{  g  }$ such that 
	 $ |\partial_{\bar z} \tilde{g }(z) |   = \mathcal{O} \big(   | \mathrm{Im} z|  \big) $ as $ | \mathrm{ Im} z| \downarrow  0 $.  To simplify the notation we choose $\ve=\ve_0$, the localization error from our previous analysis.

	 \vspace{1mm}

	 It is useful to compare $H$ to a suitable self-adjoint operator $\widetilde{H}$ such that the resolvent set of the latter is contained in $\triangle$, 
	 \textit{i.e.} $\rho(\widetilde{H})\subset \triangle$. 
	 We refer to the latter property as a \textit{spectral gap} condition. 
Indeed, in this case $g(\widetilde H) = 0 $
and the following equation holds 
	 \begin{equation}\label{g-t}
	 	g (H )  
	 	=
	 	\frac{1}{\pi}\int_{\Delta}
	 	\Big( \big(H - z\big)^{-1} -  \big(  \widetilde{H} - z\big)^{-1}
	 	\Big) 
	 	\partial_{\bar z} \,  \widetilde{ g }  (z) \, \d x \d y  \, . 
	 \end{equation}
	  Theorem \ref{theorem exp decay} 
	  might then be shown by getting suitable estimates on the operator norm of  the composition 
	  $e^{\bF }  \big[ (H - z)^{-1} -  (  \widetilde{H} - z)^{-1}\big]$.
	 
	 \vspace{1mm}

	 An essential novel ingredient in our approach is in the choice of the operator $\widetilde{H}$. 
	 Indeed, we work with a comparison operator   acting as $H$,  plus an auxiliary mass-like potential.
Formally, this is defined as the sum 
	 \begin{align}
	 	\widetilde{H} = H+\sigma_3\bE \bchi	 
	 \end{align}
	 where	   $\bE \bchi=\oplus_{ j \in \Z } E_j\chi_j$ and  $\bE$ is chosen as in Definition~\ref{definition-energies}.  
	  Note that $\bE$ is an unbounded operator in case $W$ grows at infinity (and hence $\mast>0$). 
	  At least in the case when $W=0$ it is is easy to check that the desired spectral gap holds.

	 \vspace{1mm}
	 
	 In the next subsections we construct   $\widetilde{H}$ as a self-adjoint operator   and provide, in Proposition~\ref{prop inverse estimate},  the main bounds to 
	 estimate $e^{\bF } $ times the resolvent difference.
	 We give  the proof of Theorem~\ref{theorem exp decay}
	 at the end of this section.

	 \subsection{The free comparison operator}
	 We consider first the \textit{free} comparison  operator  acting initially on $\core$ as,
	 \begin{align}
	 	\label{h0-tilde definition}
	 	\widetilde H _0 
	 	: =
	 	H_0  
	 	+ 
	 	\sigma_3 {\bE}\bchi
	 	\quad \textrm{where} \quad  
	 	{\bE}\bchi 
	 	\equiv \oplus_{j\in\Z} 
	 	E_j \chi_j \ . 
	 \end{align} 
	 In   Lemma~\ref{lemma-essential}, we show  that $\widetilde H_0$ is essentially self-adjoint.
	 We denote its closure by the same symbol $\widetilde H_0$, 
	 and its  domain by
	 $\calD (\widetilde H _0)$. 
	 We  characterize the latter in Proposition~\ref{proposition domain comparison} below.

	 A short calculation yields	that as quadratic forms on $\core$
	 \begin{equation}
	 	\label{dd1}
	 	\widetilde{H}_0^2
	 	\ = \ 
	 	H_0^2 + 
	  \chi	{\bE}^2  	\chi 
	 	+L({\bE})\,,
	 \end{equation}		
	 where
	 $L({\bE}) \equiv  \oplus_{ j \in \Z } E_j \chi_j' \, \sigma_1\,.$
	 Note that while the second term in \eqref{dd1}  is quadratic in $\bE$, 
	 the third term is only linear. 
Hence, it can then be regarded as a lower order term for large energies.

%
%
	 \begin{lemma} \label{lemma: H tilde gap}
	 	In the sense of quadratic forms on $\core$, we have
	 	\begin{equation}\begin{split}\label{dd3}
	 			H_0^2 + 
 \bchi 	 			{\bE}^2    	\bchi 
	 			\ge ({\bE}^2-2\varepsilon_0)\,.
	 	\end{split}\end{equation}
Furthermore, the following statements hold true. 
	 	\begin{enumerate}
	 		\item $\calD
	 		( \widetilde H _0 )
	 		\subset   
	 		\calD (  {\bE}  )$ and  
	 		the following estimate holds 
	 		\begin{equation}\begin{split}\label{lower bound m^2}
	 				\|  \widetilde{H}_0 \vp   \| 		\textstyle
	 				\geq   (1-
	 			 3 \ve_0
	 			 )^{\frac{1}{2}}
	 				\| {\bE}\, \vp   	\|
	 				\ , 
	 				\qquad 
	 				\forall \vp   \in \calD ( \widetilde H _0  ) \ . 
	 		\end{split}\end{equation} 
	 		\item $\calD
	 		( \widetilde H _0 )
	 		\subset   
	 		\calD (  H_0  )$ and  
	 		\begin{equation}\begin{split}\label{lower bound two}
	 				\|  \widetilde{H}_0 \vp   \| 		\textstyle
	 				\geq 
	 				 (1 - 2\ve_0)^{\frac{1}{2}}
	 				\| H_0\, \vp   	\|
	 				\ , 
	 				\qquad 
	 				\forall \vp   \in \calD ( \widetilde H _0  ) \ . 
	 		\end{split}\end{equation} 
	 	\end{enumerate}
	 \end{lemma}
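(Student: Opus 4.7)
For the form inequality \eqref{dd3}, the plan is to exploit the commutativity of $\bchi$ and $\bE$ (both diagonal in the angular-momentum decomposition, with $\bE$ acting as a scalar $E_j$ on the $j$-th channel) to rewrite
\[
\bchi\,\bE^2\,\bchi \;=\; \bE^2 - \bchi^\perp\,\bE^2\,\bchi^\perp.
\]
The inequality then reduces to showing $\bchi^\perp \bE^2 \bchi^\perp \leq H_0^2 + 2\varepsilon_0$ as forms on $\core$. This is obtained by chaining together two ingredients already available: the congruence property \eqref{eq: Ej admissible} of the reference energy levels, giving $\bchi^\perp \bE^2 \bchi^\perp \leq \bchi^\perp \bV \bchi^\perp$, followed by the localization bound $\bchi^\perp \bV \bchi^\perp \leq H_0^2 + 2\varepsilon_0^2$ from the corollary to Proposition~\ref{prop-chi-mu}. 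Since $\varepsilon_0 < 1$, the factor $\varepsilon_0^2$ is weaker than $\varepsilon_0$ and \eqref{dd3} follows.

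To establish both (1) and (2) on the core $\core$, I start from the identity \eqref{dd1} and expand
\[
\|\widetilde H_0 \vp\|^2 \;=\; \|H_0\vp\|^2 + \|\bE\bchi\vp\|^2 + \langle \vp, L(\bE)\vp\rangle.
\]
The cross term is controlled by the pointwise form inequality $-\varepsilon_0\,\bE \leq L(\bE) \leq \varepsilon_0\,\bE$ on $\core$, a direct consequence of $\|\chi_j'\|_{L^\infty} \leq \varepsilon_0$ (Proposition~\ref{prop-chi-mu}) together with the fact that $\sigma_1$ has spectrum $\{\pm1\}$. Since $E_j > E \geq 1$ implies $\bE \geq 1$ and hence $\bE \leq \bE^2$, one obtains $\langle \vp, L(\bE)\vp\rangle \geq -\varepsilon_0\|\bE\vp\|^2$. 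For (1), combining this with \eqref{dd3} yields $\|\widetilde H_0\vp\|^2 \geq (1-\varepsilon_0)\|\bE\vp\|^2 - 2\varepsilon_0\|\vp\|^2$, and the residual $\|\vp\|^2$ is absorbed using $\|\vp\|^2 \leq \|\bE\vp\|^2$ to obtain the desired constant $1 - 3\varepsilon_0$.

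For (2) the argument is more delicate because $\|\bE\vp\|^2$ is not directly dominated by $\|H_0\vp\|^2$. The plan is to split $\|\bE\vp\|^2 = \|\bE\bchi\vp\|^2 + \|\bE\bchi^\perp\vp\|^2$ in the estimate of the cross term, and to apply congruence together with the corollary to Proposition~\ref{prop-chi-mu} to the outer piece, yielding $\|\bE\bchi^\perp\vp\|^2 \leq \|H_0\vp\|^2 + 2\varepsilon_0^2\|\vp\|^2$. This leads to
\[
\|\widetilde H_0\vp\|^2 \;\geq\; (1-\varepsilon_0)\|H_0\vp\|^2 + (1-\varepsilon_0)\|\bE\bchi\vp\|^2 - 2\varepsilon_0^3\|\vp\|^2,
\]
after which the remaining $\|\vp\|^2$ is reabsorbed via the inequality $(1-2\varepsilon_0^2)\|\vp\|^2 \leq \|\bE\bchi\vp\|^2 + \|H_0\vp\|^2$ (obtained by combining $\bE \geq 1$ with the same outer bound). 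For $\varepsilon_0 < 1/16$ this gives $2\varepsilon_0^3\|\vp\|^2 \leq \varepsilon_0(\|\bE\bchi\vp\|^2 + \|H_0\vp\|^2)$, producing the advertised factor $1 - 2\varepsilon_0$. Both inclusions $\calD(\widetilde H_0) \subset \calD(\bE)$ and $\calD(\widetilde H_0) \subset \calD(H_0)$, together with the extension of the norm estimates from $\core$ to the full domain, follow from essential self-adjointness of $\widetilde H_0$ on $\core$ combined with the closedness of $\bE$ and $H_0$.

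The main obstacle will be handling the linear cross term $L(\bE)$: it contains the potentially unbounded factor $\bE$, and a careless Cauchy--Schwarz would cost a factor comparable to $\|\bE\vp\|^2$ that cannot be absorbed. The key is that the smallness of $\chi_j'$ combines with the congruence property \eqref{eq: Ej admissible} to convert the mass contribution outside the AR into kinetic energy from $H_0$, preserving the favorable constants $1-3\varepsilon_0$ and $1-2\varepsilon_0$.
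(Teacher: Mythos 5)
Your proposal is correct and follows essentially the same route as the paper: \eqref{dd3} via the identity $\bchi\bE^2\bchi=\bE^2-\bchi^\perp\bE^2\bchi^\perp$ together with the congruence property \eqref{eq: Ej admissible} and the IMS-type bound \eqref{v-for-h}, the cross term controlled by $\pm L(\bE)\le \ve_0\bE\le\ve_0\bE^2$, and the extension to $\calD(\widetilde H_0)$ by density plus closedness of $\bE$ and $H_0$. The only (cosmetic) difference is in part (2), where you absorb the error by splitting $\|\bE\vp\|^2$ into the $\bchi$ and $\bchi^\perp$ pieces and routing the outer piece through the congruence bound, whereas the paper re-applies \eqref{dd3} after a convexity split of $H_0^2+\bchi\bE^2\bchi$; both give the constant $(1-2\ve_0)^{1/2}$ under \eqref{po}.
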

	 \begin{proof}
	 	Let us first show \eqref{dd3}. 
	 	We observe that $H_0^2+{\bE}\bchi^2=\oplus_{ j \in \Z }(h_j^2+E_j^2\chi^2_j)$ and verify that
\begin{align}
\nonumber 
	\boldsymbol{h}^2 + \bchi \bE^2 \bchi 
& 	 \  =  \ 
	 T + \bV + \bchi \bE^2 \bchi 	\\ 
\nonumber 
	&   \  =  \ 
	   T  + \bchi \bV \bchi + \bchi^\perp( \bV - \bE^2 ) \bchi^\perp + \bE^2  	\\ 
 & 	 \    \geq  \ 
	    \bchi 	\boldsymbol{h}^2 \bchi + \bchi^\perp T \bchi^\perp +   
	    \bchi^\perp (\bV - \bE^2 ) \bchi^\perp 
	    + \bE^2 - 2\ve_0 \ . 
\end{align}
	 	The claim now follows by using  the positivity of $T$ and $h_j^2$ as well as \eqref{eq: Ej admissible}.
	 		 	In order to show \eqref{lower bound m^2} on $\core$, 
	 		 	one may verify that 
	 	\begin{equation}\label{lerror}
	 		\pm L (\bE)
	 		 \  \leq 	\		|\bchi '| \bE 
	 		  \ \le \ 
	 		   \ve_0 \bE
	 		   \ \le  \ 
	 		\ve_0 \bE^2 
	 	\end{equation}
where in the last inequality  we used $\bE \geq1$. 
This, 	 	   combined with \eqref{dd1} and  \eqref{dd3}, readily implies  
	 	\begin{equation}
	 		\widetilde{H}_0^2
	 		\ \ge\ 
	 		 (1-3 \ve_0)\bE^2 \ . 
	 	\end{equation}
	 	This yields \eqref{lower bound m^2} on $\core$, 
	 	and then on $\calD (\widetilde H _0)$ after a density argument. 
Finally, we prove 	 	 	\eqref{lower bound two}.
To this end,  we write
using 
\eqref{dd1},  \eqref{dd3} and \eqref{lerror} 
\begin{align}
\nonumber 
	\widetilde H _0^2  
& 	\geq
 H_0^2 + \bchi \bE^2 \bchi - \ve_0  \bE^2   \\ 
\nonumber 
& \geq
 (1-2\ve_0) H_0^2 
+
 2 \ve_0 (H_0^2 + \bchi \bE^2  \bchi )	- \ve_0 \bE^2 \\ 
& \geq 
(1 - 2 \ve_0 )H_0^2 
+
  \big[ 2 \ve_0 (\bE^2  -  2 \ve_0) - \ve_0 \bE^2  \big ] \ . 
\end{align}
Finally, we note that the term in brackets $[\cdots]$ in the last displayed equation is non-negative.
This follows from the lower bound $\bE \geq 1 $
and the fact that $\ve_0 \leq 1/4 $.  This finishes the proof. 
	 \end{proof}
	 Next we characterize the domain of self-adjointness of the free comparison operator as follows. 
	 \begin{proposition}
	 	\label{proposition domain comparison}
	 	We have  
	 	\begin{equation}
	 		\textstyle 	\calD (\widetilde H _0) =  \mathcal{D}(H_0)\cap \mathcal{D}({\bE})=
	 		\big\{
	 		\, 
	 		\vp \in \calD( H_0 )
	 		\ |  \ 
	 		\textstyle  \sum_{j \in \Z } E_j^2  \,  \| P_{m_j}  \vp   \|^2 < \infty 
	 		\, 
	 		\big\} \ .
	 	\end{equation}
	 \end{proposition}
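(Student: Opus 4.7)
The plan is to prove the two set inclusions separately. The first inclusion $\calD(\widetilde H_0) \subset \calD(H_0) \cap \calD(\bE)$ follows immediately from Lemma~\ref{lemma: H tilde gap}, since the operator bounds \eqref{lower bound m^2} and \eqref{lower bound two} established there explicitly control the graph norms of $\bE$ and $H_0$, respectively, by the graph norm of $\widetilde H_0$ (both inequalities extend to $\calD(\widetilde H_0)$ by the essential self-adjointness on $\core$).

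For the reverse inclusion, the strategy is to produce, for any $\vp \in \calD(H_0) \cap \calD(\bE)$, an approximating sequence $\vp_n \in \core$ such that $\vp_n \to \vp$, $H_0 \vp_n \to H_0 \vp$, and $\bE \vp_n \to \bE \vp$ in $\H$. Once this is achieved, one observes that $\bchi$ commutes with $\bE$ (both being diagonal in the angular momentum decomposition), $\bchi$ preserves $\calD(\bE)$, and $\|\sigma_3 \bchi\| \leq 1$ as a bounded operator on $\H$; hence
\begin{equation*}
\|\sigma_3 \bE \bchi (\vp_n - \vp)\| = \|\sigma_3 \bchi \bE (\vp_n - \vp)\| \leq \|\bE (\vp_n - \vp)\| \to 0.
\end{equation*}
Consequently $\widetilde H_0 \vp_n \to H_0 \vp + \sigma_3 \bE \bchi \vp$ in $\H$. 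Since $\widetilde H_0$ is closed (being self-adjoint by Lemma~\ref{lemma-essential}) and each $\vp_n \in \core \subset \calD(\widetilde H_0)$, this implies $\vp \in \calD(\widetilde H_0)$.

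The approximating sequence is constructed in two steps. First, I truncate in angular momentum by setting $\vp^{(N)} := \sum_{|j| \leq N} P_{m_j} \vp$. Since both $H_0$ and $\bE$ commute with every $P_{m_j}$ and $\sum_j P_{m_j} = \mathbf{1}$ strongly, monotone convergence applied to each of $\|\vp\|^2$, $\|H_0 \vp\|^2$ and $\|\bE \vp\|^2$ yields $\vp^{(N)} \to \vp$, $H_0 \vp^{(N)} \to H_0 \vp$, and $\bE \vp^{(N)} \to \bE \vp$ as $N \to \infty$. Second, for each $|j| \leq N$, I approximate the radial component $P_{m_j} \vp \in \calD(h_j)$ in the graph norm of $h_j$ by elements of $\Cr$; this is possible thanks to Proposition~\ref{prop cores}(2), which asserts that $h_j|_{\Cr}$ is essentially self-adjoint. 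Crucially, since $\bE$ restricts to scalar multiplication by $E_j$ on the $j$-th angular momentum subspace, convergence in the $h_j$-graph norm on that subspace automatically entails convergence in the $\bE$-graph norm on it. A standard diagonal argument then produces the required sequence in $\core$.

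The main obstacle is not conceptual but organizational: coordinating the angular momentum truncation with the radial graph-norm approximation so that all three relevant norms converge simultaneously. This works smoothly precisely because $\bE$ is diagonal in the angular momentum decomposition and acts by a scalar on each block, so no cross-talk between the two approximation steps can occur.
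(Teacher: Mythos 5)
Your proof is correct, but the route you take for the non-trivial inclusion $\calD(H_0)\cap\calD(\bE)\subset\calD(\widetilde H_0)$ differs from the paper's. The paper argues more abstractly: it notes the elementary bound $\|\widetilde H_0\vp\|^2\le 2\big(\|\sqrt{H_0^2+\bE^2}\,\vp\|^2+\|\vp\|^2\big)$ on $\calC$ and then invokes the spectral theorem for the strongly commuting self-adjoint operators $H_0$ and $\bE$ (citing Schm\"udgen) to identify $\calD(H_0)\cap\calD(\bE)$ with the domain of the closure of $\sqrt{H_0^2+\bE^2}\!\upharpoonright_\calC$, from which the inclusion follows at once; like you, it gets the forward inclusion from Lemma~\ref{lemma: H tilde gap} and relies on Lemma~\ref{lemma-essential} for essential self-adjointness of $\widetilde H_0|_\calC$. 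Your argument replaces the joint-spectral-theorem step by an explicit two-stage approximation (angular-momentum truncation, then radial approximation in $\Cr$ via Proposition~\ref{prop cores}) followed by closedness of $\widetilde H_0$, exploiting exactly the right structural fact that $\bE$ acts as the scalar $E_j$ on each block so that $h_j$-graph-norm approximation costs nothing extra in the $\bE$-graph norm. This is more elementary and self-contained, at the price of being longer; the paper's version is more compact but leans on the commuting functional calculus and on the (implicit) fact that $\calC$ is a core for $\sqrt{H_0^2+\bE^2}$. One small point to make explicit in your diagonalization: since $E_j$ grows with $|j|$, the per-channel accuracy for $|j|\le N$ should be chosen relative to $(1+E_j^2)$ (e.g.\ smaller than $\big((1+E_j^2)(2N+1)N\big)^{-1}$), which is trivially possible because $E_j$ is a fixed constant on each channel; with that choice all three convergences $\vp_n\to\vp$, $H_0\vp_n\to H_0\vp$, $\bE\vp_n\to\bE\vp$ hold simultaneously, as you intend.
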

	 \begin{proof}
	 	We prove in Lemma B.3 
	 	that    $\widetilde H _0 |_{\calC}$ is essentially
	 	self-adjoint. 
	 	From Lemma~\ref{lemma: H tilde gap}
	 	we see that
	 	$\calD (\widetilde H_0) \subset \calD (\bE) \cap \calD (H_0)$.
	 	Further, for $\vp\in\calC $
	 	one clearly has    that
	 	\begin{align}\label{in}
	 		\| \widetilde H _0 \vp \|^2 
	 		\leq 
	 		2\left(\|   H_0 \vp\|^2+ 
	 		\|  \bE \vp  \|^2+\|\vp  \|^2\right)
	 		= 
	 		2\left(\|  \sqrt{ H_0^2+\bE^2}\, \vp\|^2+ 
	 		\|\vp  \|^2\right)
	 		\,.
	 	\end{align}
	 	Applying the spectral theorem for the strongly commuting operators $H_0$ and $\bE$  (see, e.g., \cite[Sec. 5.5]{schmudgen2012unbounded}) it is easy to see that
	 	$\mathcal{D}(H_0)\cap \mathcal{D}(\bE)$ equals the domain of the closure of ${\sqrt{ H_0^2+\bE^2}\!\upharpoonright_\core}$. Hence, we conclude that 
	 	$\calD (H_0) \cap \calD (\bE) \subset \calD (\widetilde H _0)$.
	 	This finishes the proof. 
	 \end{proof}

	 \subsection{The full comparison operator}
Let us now  start our discussion of the  perturbations of the free comparison operator. 
First, we prove the following technical lemma.
It shows that  under assumptions on the derivatives of the exponential weights, 
the operators $\exp(\pm F)$
continuously map the operator domain of $H_0$  into itself.

	 		\begin{lemma}\label{lemma: eFH domain}
	 			Let    $F \in  \W_b   $ be a bounded weight satisfying    \eqref{F2}. 
	 			Then, the following statements hold true 
	 			\begin{enumerate}[label = (\roman*)]
	 				\item $e^{ \pm  \bF }  \calD(H_0)  = \calD(H_0)$. 
	 				
	 				\item 
	 					There exists 
	 				$C>0$ such that 
	 				for all $\vp \in \calD(H_0 )$
	 				\begin{equation}
	 					\|	 H_0 e^{\bF} \vp	\| \leq C \|e^{\bF }	\| 
	 					\| (H_0 +  i )	 \vp\|	  \ . 
	 				\end{equation}
	 			\end{enumerate}
	 		 
	 		\end{lemma}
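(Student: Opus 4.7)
The plan is to reduce everything to a direct commutator computation on the essentially self-adjoint core $\calC$ of Proposition~\ref{prop cores}, exploiting the fact that $\bF = \oplus_{j\in\Z} F_j(r)$ is a radial multiplier diagonal in the angular momentum decomposition. On each angular channel, a short calculation on $\Cr$ shows that since $F_j$ depends only on $r$ and the off-diagonal blocks $d_j, d_j^*$ each contribute a single $d/dr$,
\[
 [h_j, e^{F_j}] \;=\; F_j'(r)\, e^{F_j}\, M, \qquad M = \begin{pmatrix} 0 & -1 \\ 1 & 0 \end{pmatrix},
\]
which has operator norm one. Summing over $j\in\Z$ and using that the multiplication operators $\bF'$ and $e^{\bF}$ commute pointwise, the Pythagorean sum over the orthogonal blocks $P_{m_j}\H$ yields the pointwise bound
\[
 \|[H_0, e^{\bF}]\vp\| \;\leq\; \|e^{\bF}\|\, \|\bF'\vp\|, \qquad \vp\in \calC.
\]

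Next I would control $\|\bF'\vp\|$ by the graph norm of $H_0$. Hypothesis \eqref{F2} reads, as a form inequality,
\[
 (\bF')^2 \;\leq\; \Big(\tfrac{1-a}{2}\Big)^2 \bchi^\perp (\bV - \bE^2)\bchi^\perp \;\leq\; \Big(\tfrac{1-a}{2}\Big)^2 \bchi^\perp \bV \bchi^\perp,
\]
where the non-positive term $-\bE^2$ has been dropped. Invoking the localization estimate \eqref{v-for-h}, which gives $\bchi^\perp \bV \bchi^\perp \leq H_0^2 + 2\ve_0^2$, one concludes
\[
 \|\bF'\vp\|^2 \;\leq\; C_1\big(\|H_0\vp\|^2 + \|\vp\|^2\big), \qquad \vp\in \calC,
\]
for some $C_1$ depending only on $a$ and $\ve_0$. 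Combining with the previous commutator bound via the triangle inequality $\|H_0 e^{\bF}\vp\| \leq \|e^{\bF} H_0\vp\| + \|[H_0,e^{\bF}]\vp\|$ and the boundedness of $e^{\bF}$ (since $F\in \W_b$) gives the desired estimate
\[
 \|H_0 e^{\bF}\vp\| \;\leq\; C_2\, \|e^{\bF}\|\, \|(H_0 + i)\vp\|
\]
for all $\vp\in \calC$. This is (ii) on the core; closedness of $H_0$ together with boundedness of $e^{\bF}$ extends it to all of $\calD(H_0)$ by a standard graph-norm density argument. Claim (i) then follows at once: if $\vp_n\in\calC$ converges to $\vp\in\calD(H_0)$ in the graph norm, then $e^{\bF}\vp_n \to e^{\bF}\vp$ in $\H$ while $H_0 e^{\bF}\vp_n$ is Cauchy by (ii), so $e^{\bF}\vp\in\calD(H_0)$; the same argument with $-F$ in place of $F$—which satisfies \eqref{F2} since only $(\bF')^2$ enters—yields $e^{-\bF}\calD(H_0)\subset \calD(H_0)$, hence equality.

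The main obstacle is of a technical nature: one must verify that $e^{\pm\bF}\calC$ actually lies in $\calD(H_0)$ in order for the commutator calculation to make sense before passing to limits. Since each $F_j \in W^{1,2}_{\mathrm{loc}}$ but is not a priori smooth, and $e^{F_j}$ acts radially on functions in $\Cr$, I expect that either a mild mollification of $F_j$ (checking that the bounds above are stable under such approximation) or direct inspection—using that any $\vp\in \calC$ has only finitely many non-vanishing angular components with compactly supported radial profiles so $\bF'\vp \in \H$—will suffice to make the commutator rigorous.
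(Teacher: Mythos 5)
Your computation is essentially the paper's: the channel-wise commutator identity $[h_j,e^{\pm F_j}]=\pm F_j'\,e^{\pm F_j}\,(i\sigma_2)$ (your matrix $M$), the bound $\|\bF'\vp\|^2\lesssim \|H_0\vp\|^2+\|\vp\|^2$ obtained from \eqref{F2} together with \eqref{v-for-h} after dropping $-\bE^2$, and the triangle inequality are exactly the ingredients in \eqref{commutator id 1} and \eqref{lemma F ineq 1}. The one place where your argument is not complete is the point you yourself flag at the end: in your order of proof, the estimate in (ii) is first asserted for $\vp\in\calC$, but for that assertion to make sense you must already know $e^{\bF}\vp\in\calD(H_0)$, and since $F_j$ is only $W^{1,2}_{\t{loc}}$ the vector $e^{\bF}\vp$ does not lie in $\calC$, so this is not automatic. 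Saying that mollification or ``direct inspection'' should work leaves the step unproven; in particular a mollification route would force you to re-verify stability of \eqref{F2} under smoothing, which is avoidable.

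The paper closes precisely this gap by reversing the order and arguing by duality: for $\psi\in\calD(h_j)$ (not just in the core) one first shows $\calD(h_j)\subset\calD(F_j')$ from \eqref{lemma F ineq 1}, and then, pairing $e^{\pm F_j}\psi$ against test functions $\zeta\in\Cr$ and using the commutator identity on $\Cr$, one exhibits an $L^2$ vector representing $h_j^*\big(e^{\pm F_j}\psi\big)$; essential self-adjointness of $h_j$ on $\Cr$ gives $\calD(h_j^*)=\calD(h_j)$, hence $e^{\pm F_j}\calD(h_j)\subset\calD(h_j)$ and part (i) without ever applying $h_j$ strongly to $e^{\pm F_j}\psi$. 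Part (ii) is then obtained by the same weak pairing $\<\zeta,H_0e^{\bF}\vp\>=\<e^{\bF}\zeta,(H_0-i\sigma_2\nabla\bF)\vp\>$ for $\vp\in\calD(H_0)$, $\zeta\in\calC$, followed by taking the supremum over unit $\zeta$. So your ``direct inspection'' is exactly this adjoint argument; if you carry it out (i.e.\ prove (i) first via the weak formulation, then deduce (ii)), your proof becomes the paper's, and your closedness/graph-norm extension and the deduction of (i) from (ii) with $\pm F$ are then no longer needed, though they are correct as stated.
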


	 		\begin{proof}
	 				 			Let us fix $ j \in \Z$. 
	 			Our first observation is the following inequality, 
	 			holding in the sense of quadratic forms of $\Cr$ 
	 			\begin{equation}
	 				\label{lemma F ineq 1}
	 				|F_j'|^
	 				2
	 				\ \leq \ 
	 				\theta  
	 				\, 
	 				\chi_j^\perp	 	 \, 			  V_j				\, 	\chi_j^\perp	 
	 				\	\leq 	\ 
	 				\theta  \,  (h_j^2 + 2\ve_0^2) 
	 			\end{equation}
	 			where  $\theta =   (1/4)
	 			( 1 -      a 	)^{1/2}$
	 			is the constant from \eqref{F2}, 
	 			and we have made use of 
	 			\eqref{v-for-h}. 
Since $\Cr$ is a domain core for both $h_j$ and $F_j'$, we conclude that  
$\calD(h_j) \subset \calD (F_j ' ) $
and \eqref{lemma F ineq 1} extends to $\calD(h_j)$ 
in the form $ \| F_j ' \vp 	\| \leq C  ( \|   h_j \vp		\|  + \|\vp	\|) $, for $\vp \in \calD(h_j)$. 

\vspace{1mm}

Let us now prove \textit{(i)}. To this end, let us fix  $ \psi \in \calD(h_j)$.  
	 			We claim  that $e^{\pm F_j} \psi \in \calD(h_j^*)$. 
	 			To this end, we 
	 			recall the definition of $h_j$ in 
	 			\eqref{Dj definition}  and   compute that 
	 			as linear operators on $\Cr$
	 			\begin{equation} 
	 				\label{commutator id 1}
	 				[ h_j   ,   e^{ \pm F_j}]    =  
	 				e^{  \pm  F_j} 
	 				\begin{pmatrix}
	 					0 &     \mp  F_j '  \\
	 					\pm  F_j ' & 0 
	 				\end{pmatrix} \, 
	 				=  \pm 
	 				\im e^{ - F_j}    \sigma_2F_j'   
	 			\end{equation}
	 			where $\sigma_2$ is the second Pauli matrix.  Then, for any  $\zeta \in \Cr$ it follows that 
	 			\begin{equation}
	 				\<  e^{\pm F_j} \psi  , h_j \zeta   \>
	 				=
	 				\< 
	 				\psi 
	 				, 
	 				e^{ \pm F_j}
	 				h_j \zeta 
	 				\>
	 				=
	 				\<
	 				\psi 
	 				,
	 				h_j e^{ \pm F_j} \zeta
	 				\pm 
	 				\im e^{ \pm F_j }\sigma_2 F_j ' \zeta 
	 				\>
	 				=
	 				\< 
	 				e^{\pm F_j } (h_j \mp \im \sigma_2 F_j' ) \psi  , \zeta 
	 				\> \ . 
	 			\end{equation}
Thus, our claim now follows after we show that $ 	e^{\pm F_j } (h_j \mp \im \sigma_2 F_j' ) \psi \in L^2(\R_+)$. 
This follows from the fact that $\exp(\pm F_j)$ are bounded operators, 
from our observation 
$\calD(h_j) \subset \calD (F_j ' ) $, 
and  the original assumption $\psi \in \calD(h_j)$. 
Since $h_j$ is self-adjoint, this implies that $\exp(\pm F_j) \calD(h_j) \subset \calD(h_j)$. 
Thus, the first claim of the lemma is now proven once we sum over angular momentum channels $j\in \Z$, 
and use the fact that $\exp(\pm F)$ are continuous inverses of one another. 

\vspace{1mm}
The proof of \textit{(ii)}  
now follows from the commutator identity 
\eqref{commutator id 1}, 
inequality
\eqref{lemma F ineq 1}, 
and an approximation argument using the fact that 
$\calC$
is a domain core of $H_0$. 
Namely, we see that  for $\vp \in \calD (H_0)$ and $\zeta \in \calC$
\begin{equation}
	\<  \zeta, H_0 e^{\bF } \vp		\>
	 = 
	 \<	 e^{\bF} H_0\zeta, \vp	 \>
	  = 
	   \<  (  H_0 e^{\bF}    - i \sigma_2 \nabla {\bF}  e^{\bF} ) \zeta, \vp	\>
	    =
	    \<  e^{\bF} \zeta,  (  H_0 - i \sigma_2 \nabla \bF )  \vp\>  \ . 
\end{equation}
It suffices now to use \eqref{lemma F ineq 1} over all $  j \in \Z $, 
and take the supremum over $\zeta \in \calC$, keeping in mind that $e^{\bF}$ is a bounded operator.
	 		\end{proof}

Next, we turn to the following lemma. 
With it,  we can control the perturbation $W$  in terms of the 
free comparison operator. 

\begin{lemma}
	\label{lemma W control}
	For all $\vp  \in \calC $  the following inequality holds true 
	\begin{equation}
	 		\label{W-H-tilde}
		\|	 W \vp 	\|^2	
			\leq  
\Big(  
	  	\frac{	\widetilde \mu 	}{\mu}
\Big)^{\frac{1}{2}}  
	\|  \widetilde H _0 \vp 	\|^2    \ . 
	\end{equation} 
In particular, $\calD(\widetilde H _0) \subset \calD(W)$
and the inequality \eqref{W-H-tilde} extends to $\vp  \in \calD(\widetilde H_0). $
\end{lemma}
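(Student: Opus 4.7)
The plan is to chain together Theorem \ref{theorem W} with the structure of $\widetilde H_0^2$ derived in \eqref{dd1}. Applying Theorem \ref{theorem W} on the core $\calC$ gives, in the quadratic-form sense,
\begin{equation*}
W^2 \le a^2 H_0^2 + \bar\mu_* \bchi \blambda^2 \bchi + C_0,
\end{equation*}
with $a^2 = \widetilde\mu/\mu$ and $\bar\mu_* = \mu_* + \ve_*$. The algebraic bridge between $\blambda$ and $\bE$ is a pointwise inequality on channel indices: since $\bar\mu_* \le \widetilde\mu$ gives $\sqrt{\bar\mu_*/\mu} \le a$, the definition $E_j^2 = \sqrt{\mu\bar\mu_*}\lambda_j^2 + U^2/\delta_0^2$ from Definition \ref{definition-energies} yields, for all $j\in\Z$,
\begin{equation*}
\bar\mu_* \lambda_j^2 \;\le\; a\sqrt{\mu\bar\mu_*}\,\lambda_j^2 \;\le\; a E_j^2.
\end{equation*}
Because $E_j$ acts as a scalar on each angular-momentum channel and commutes with $\bchi$, this promotes to the operator bound $\bar\mu_* \bchi\blambda^2\bchi \le a\,\bchi\bE^2\bchi$.

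Substituting back, using $a\le 1$, and invoking the identity \eqref{dd1}, one arrives at
\begin{equation*}
W^2 \;\le\; a^2 H_0^2 + a\,\bchi\bE^2\bchi + C_0 \;\le\; a\,(H_0^2 + \bchi\bE^2\bchi) + C_0 \;=\; a\,\widetilde H_0^2 - a\,L(\bE) + C_0 .
\end{equation*}
Testing against $\vp \in \calC$ and using \eqref{lerror} to bound $|L(\bE)| \le \ve_0 \bE$ leads to
\begin{equation*}
\|W\vp\|^2 \;\le\; a\,\|\widetilde H_0 \vp\|^2 + a\ve_0\,\langle \vp, \bE\vp\rangle + C_0\,\|\vp\|^2 .
\end{equation*}
The last two error terms must be absorbed into $a\,\|\widetilde H_0\vp\|^2$ by combining the spectral gap \eqref{lower bound m^2}, namely $\|\widetilde H_0 \vp\|^2 \ge (1-3\ve_0)\|\bE\vp\|^2$, with the uniform lower bound $E_j \ge U/\delta_0$ built into Definition \ref{definition-energies}. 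The tuning in Condition \ref{condition tuning}, in particular the constraints on $U$ relative to $C_0$ and $\ve_0$ in \eqref{p4}, is precisely designed so that this absorption closes with the claimed coefficient $a$.

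The main obstacle is this final absorption: the constant $C_0$ and the cross term $L(\bE)$ are both positive contributions, whereas the target inequality carries no additive slack, so the stated coefficient $a=(\widetilde\mu/\mu)^{1/2}$ emerges only after exploiting the delicate parameter hierarchy in Condition \ref{condition tuning}. A conceptual point is that $L(\bE)$ is only \emph{linear} in $\bE$ with small prefactor $\ve_0$, whereas the spectral gap controls $\bE^2$ by $\widetilde H_0^2$, and $\bE \ge 1$ allows the linear term to be upgraded to a quadratic one. Once \eqref{W-H-tilde} is established on $\calC$, the extension to $\calD(\widetilde H_0)$ is routine: by Proposition \ref{proposition domain comparison} the core $\calC$ is dense in $\calD(\widetilde H_0)$ with respect to the graph norm, so for $\vp \in \calD(\widetilde H_0)$ one picks $\vp_n \in \calC$ with $\widetilde H_0 \vp_n \to \widetilde H_0 \vp$; applying the inequality to $\vp_n-\vp_m$ shows that $(W\vp_n)$ is Cauchy in $\H$, and closedness of the multiplication operator $W$ yields $\vp \in \calD(W)$ together with \eqref{W-H-tilde} in the limit.
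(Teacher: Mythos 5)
Your opening steps are fine (the application of Theorem \ref{theorem W}, the bound $\bar\mu_*\lambda_j^2\le a E_j^2$, and the identity \eqref{dd1} with \eqref{lerror}), and your closing density argument for extending \eqref{W-H-tilde} from $\calC$ to $\calD(\widetilde H_0)$ is acceptable. But the final absorption step, which you yourself flag as the main obstacle, does not close as you have set it up, and this is a genuine gap. After promoting $a^2H_0^2$ to $aH_0^2$ and $\bar\mu_*\bchi\blambda^2\bchi$ to $a\bchi\bE^2\bchi$, you arrive at $\|W\vp\|^2\le a\|\widetilde H_0\vp\|^2 + a\ve_0\langle\vp,\bE\vp\rangle + C_0\|\vp\|^2$, where the two extra terms are strictly positive. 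Since the target \eqref{W-H-tilde} carries the coefficient exactly $a$ and no additive slack, no choice of parameters in Condition \ref{condition tuning} can delete these terms: you have already spent the full prefactor $a$ on $\widetilde H_0^2$, so invoking the gap \eqref{lower bound m^2} or $\bE\ge\delta_0^{-1}U$ only gives lower bounds on a quantity that is already entirely on the right-hand side; there is no remaining coefficient strictly below $a$ against which the errors could be traded.

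The missing idea is precisely to \emph{not} discard the slack when passing from $\blambda$ to $\bE$ and from $a^2$ to $a$. Definition \ref{definition-energies} gives the sharper bound $\bar\mu_*\bchi\blambda^2\bchi\le ab\,\bchi\bE^2\bchi$ with $b=(\bar\mu_*/\tilde\mu)^{1/2}<1$, so that $W^2\le a\big[aH_0^2+b\,\bchi\bE^2\bchi\big]+C_0$. Comparing with $a\widetilde H_0^2\ge a\big[H_0^2+\bchi\bE^2\bchi-\ve_0\bE\big]$ (from \eqref{dd1} and \eqref{lerror}), the leftover is $a\big[(1-a)H_0^2+(1-b)\bchi\bE^2\bchi\big]\ge a\delta_0(\bE^2-2\ve_0)$ by \eqref{dd3}, and this \emph{quadratic} slack in $\bE$ dominates the errors $a\ve_0\bE+C_0$ thanks to $\bE\ge\delta_0^{-1}U$, \eqref{p4} and \eqref{definition widetilde E}. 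This is exactly how the paper argues (phrased there as $\widetilde H_0^2-\tfrac{1}{a}W^2\ge\delta_0(\bE^2-2\ve_0)-\ve_0\bE-C_0/a\ge0$). So your route is repairable, but only by retaining the factors $a<1$ and $b<1$ in front of $H_0^2$ and $\bchi\bE^2\bchi$; as written, the step "$W^2\le a(H_0^2+\bchi\bE^2\bchi)+C_0$" has already destroyed the structure that the parameter tuning is designed to exploit.
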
 

Let us briefly postpone the proof of the above lemma, 
and introduce  the \textit{full} comparison operator  as follows 
\begin{equation}
	\label{H tilde definition}
	\widetilde H \equiv \widetilde H_0 + W 
	\qquad
	\t{on}
	\qquad 
 \calD (\widetilde H )  \equiv 	\calD(\widetilde H_0) \ . 
\end{equation}
In particular, thanks to Lemma \ref{lemma W control}, inequality \eqref{p3}, and      the Kato-Rellich theorem, 
$\widetilde H $ is  as a self-adjoint operator.  
Furthermore, the space $\calC$ introduced  in Definition \ref{definition domain cores}
is a domain core for $\widetilde H $. 

\vspace{1mm}

In the next proposition, we establish fundamental estimates regarding $\widetilde H$, when intertwined with an exponential weight
$F \in \W_b$, satisfying the conditions \eqref{F1}, \eqref{F2} and \eqref{F3}.  
Let us  note that thanks to    Lemma
\ref{lemma: eFH domain}  we may easily 
conclude that   $\exp(\pm F) \calD( \widetilde H_0) = \calD(\widetilde H_0)$.
 Thus, we define the 
 exponentially twisted operator  as 
 \begin{equation}
\widetilde H^{F}
\equiv e^{\bF} 
\widetilde H 
e^{-\bF } 
\qquad
\t{on}
\qquad 
\calD (\widetilde H^F) \equiv   \calD(\widetilde H_0) \  . 
 \end{equation}
Let us note that the operator $\widetilde H^F$  is closed.
Indeed, 
since $\widetilde  H $ is self-adjoint and  the maps $\exp(\pm \bF )$ are continuous, self-adjoint  and  inverses of one other, 
one may  easily check that 
$(\widetilde H ^F)^* = \widetilde H^{-F}$.
In particular, this shows that 
$ \overline{\widetilde H^F} = (\widetilde H ^F)^{ * *} = \widetilde H^F$, as desired. 
Furthermore, the inequality contained in \eqref{lemma: eFH domain}
easily implies that $\calC$ is also a domain core of  $\widetilde H^F$.

	 \begin{proposition}
	 	\label{prop inverse estimate}
The following statements hold true. 
\begin{enumerate}[label=(\roman*), leftmargin  =1 cm]
	\item 
For  all 
$\psi \in \calD (\widetilde H )$
 there holds 
	\begin{align}
		\label{comparison op gap}
		\|\widetilde{H}\psi\|  \,\geq\,  \auxE		\|\psi\| \ . 
	\end{align}
	
	\item Additionally, 
assume that 	$F    \in \mathcal{W}_b$ satisfies  \eqref{F2} and \eqref{F3}.  
	Then, for all  
	$\psi \in \calD (\widetilde H^F  )$
	 and $z\in\C$
	 with 
 $|z| \leq U/4$ there holds 
	\begin{equation}
		\begin{split}
			\label{twisted full comparison op lower bound}
			\|\big
			( \widetilde H^F 	-			z\big)
			\psi\| 
			\,\geq\, 
		 (\delta_0 /4 ) 
			\|\mathbf{E}\psi\|  \ . 
	\end{split}\end{equation}
\end{enumerate}
 \end{proposition}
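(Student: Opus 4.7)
My plan is to combine the spectral gap of the free comparison operator $\widetilde H_0$ established in Lemma~\ref{lemma: H tilde gap} with the relative bound on $W$ from Lemma~\ref{lemma W control}. For part~(i), writing $\widetilde H = \widetilde H_0 + W$ and expanding
$$\|\widetilde H\psi\|^2 = \|\widetilde H_0\psi\|^2 + 2\re\langle\widetilde H_0\psi, W\psi\rangle + \|W\psi\|^2,$$
Young's inequality together with the detailed bound $\|W\psi\|^2 \le a^2\|H_0\psi\|^2 + \bar\mu_*\|\blambda\bchi\psi\|^2 + C\|\psi\|^2$ from Theorem~\ref{theorem W} (whose first two prefactors are strictly dominated by the two pieces of $\widetilde H_0^2$ thanks to condition~\eqref{p3}) produces an estimate of the shape $\|\widetilde H\psi\|^2 \ge \delta_0^{2}\,\|\widetilde H_0\psi\|^2 - (\delta_0 C/a)\|\psi\|^2$, up to $\ve_0$-corrections coming from the term $L(\bE)$ in~\eqref{dd1}. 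Invoking Lemma~\ref{lemma: H tilde gap}(1) to replace $\|\widetilde H_0\psi\|$ by $\|\bE\psi\|$, and using the explicit lower bound $\|\bE\psi\|^2 \ge (U/\delta_0)^2\|\psi\|^2$ which follows from \eqref{Ej definition}, I expect to recover precisely
$$\bigl[(1-\ve_0/\delta_0)U^2 - 2\ve_0\delta_0^2 - \delta_0 C/a\bigr]\|\psi\|^2 = \widetilde E^2\|\psi\|^2.$$

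For part~(ii), the algebraic starting point is the identity
$$\widetilde H^F = \widetilde H_0 - i\sigma_2\bF' + W^F \quad\text{on}\quad\calC,$$
where $W^F := e^{\bF} W e^{-\bF}$. This is obtained by combining the commutator formula $[h_j, e^{\pm F_j}] = \pm i e^{\pm F_j}\sigma_2 F_j'$ derived in the proof of Lemma~\ref{lemma: eFH domain} with the fact that $\sigma_3\bE\bchi$ is block-diagonal in angular momentum and hence commutes with $e^{\pm\bF}$. The twisted operator $W^F$ is still controlled by a relative bound of the type in Lemma~\ref{lemma W control} via an exponentially twisted variant (this is where condition~\eqref{F3} enters, compensating for the angular decay of $W$ from Condition~\ref{condition 2}\ref{smoothness}). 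Proceeding as in part~(i) but with $\widetilde H^F$ in place of $\widetilde H$, the triangle inequality gives
$$\|(\widetilde H^F - z)\psi\| \ge \|\widetilde H_0\psi\| - \|W^F\psi\| - \|\bF'\psi\| - |z|\|\psi\|.$$

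The three error terms are handled as follows. The contribution $\|W^F\psi\|$ is absorbed into $\|\widetilde H_0\psi\|$ by the twisted version of Lemma~\ref{lemma W control}. The derivative term $\|\bF'\psi\|$ is controlled by \eqref{F2}, which yields
$$\|\bF'\psi\|^2 \le \bigl((1-a)/2\bigr)^2\,\bigl\langle\psi,\bchi^\perp(\bV - \bE^2)\bchi^\perp\psi\bigr\rangle;$$
by the congruence property~\eqref{eq: Ej admissible} of the energy levels the operator is non-negative, and by the IMS-type estimate~\eqref{v-for-h} combined with Lemma~\ref{lemma: H tilde gap} the right-hand side is bounded by a small multiple of $\|\bE\psi\|^2$. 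Finally, the $|z|\|\psi\|$ term is absorbed using $\|\bE\psi\| \ge (U/\delta_0)\|\psi\|$ together with the assumption $|z| \le U/4$, which gives $|z|\|\psi\| \le (\delta_0/4)\|\bE\psi\|$. Collecting these bounds and using Lemma~\ref{lemma: H tilde gap}(1) to pass from $\|\widetilde H_0\psi\|$ to $\|\bE\psi\|$ yields the claimed lower bound $(\delta_0/4)\|\bE\psi\|$.

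The main obstacle I anticipate is the careful bookkeeping of numerical constants in part~(i): the explicit form~\eqref{definition widetilde E} of $\widetilde E^2$ involves very specific subtractive corrections, so the argument must precisely track how the $L(\bE)$ error from~\eqref{dd1}, the constant $C$ produced by Theorem~\ref{theorem W}, and the gap size $\delta_0 = 1 - \max(a,b)$ combine to reproduce them exactly, rather than settling for a weaker bound of the form $(1-\sqrt{a})^{2}(U/\delta_0)^{2}$. In part~(ii), the delicate point is that $\|\bF'\psi\|$ must be shown to be strictly less than a definite fraction of $\|\bE\psi\|$; condition~\eqref{F2} was tailored precisely for this, via the non-negativity of $\bchi^\perp(\bV - \bE^2)\bchi^\perp$ ensured by the construction of the reference energy levels in \eqref{Ej definition}.
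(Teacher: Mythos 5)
Your part (i) is essentially the paper's argument: a Young/reverse-triangle expansion of $\|(\widetilde H_0+W)\psi\|^2$, the relative bound of Theorem \ref{theorem W} (through Lemma \ref{lemma W control} and the quadratic-form inequality \eqref{H_0 - v_beta}), the free gap \eqref{dd3}, and the constraints \eqref{p4}, \eqref{definition widetilde E} to identify the constant $\auxE^2$. No objection there beyond the bookkeeping you yourself flag.

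In part (ii), however, there is a genuine gap in how you treat the weight-derivative term. You claim that $\|\bF'\psi\|^2\le \big(\tfrac{1-a}{2}\big)^2\langle\psi,\bchi^\perp(\bV-\bE^2)\bchi^\perp\psi\rangle$ is ``bounded by a small multiple of $\|\bE\psi\|^2$.'' This is false: on the support of $\chi_j^\perp$ the effective potential $V_j$ grows like $A^2(r)=r^{2\alpha}$ (unboundedly in $r$ within each channel), while $E_j$ is a constant per channel, so $\bchi^\perp(\bV-\bE^2)\bchi^\perp$ cannot be dominated by $\bE^2$; via \eqref{v-for-h} it is dominated only by $H_0^2+2\ve_0^2$, i.e.\ by the kinetic part. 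Once this is corrected, your norm-level triangle inequality $\|(\widetilde H^F-z)\psi\|\ge\|\widetilde H_0\psi\|-\|W^F\psi\|-\|\bF'\psi\|-|z|\|\psi\|$ runs into a constants problem: with $\|W^F\psi\|\le\sqrt{a}\,\|\widetilde H_0\psi\|$ (Lemma \ref{lemma exp twisted} plus the $v_\beta$-version of Lemma \ref{lemma W control}) and $\|\bF'\psi\|\lesssim\tfrac{1-a}{2}(1-2\ve_0)^{-1/2}\|\widetilde H_0\psi\|$ (Lemma \ref{lemma: H tilde gap}), the leftover coefficient $1-\sqrt a-\tfrac{1-a}{2}(1-2\ve_0)^{-1/2}$ becomes negative as $a\to1$, so no positive lower bound survives in general — and in any case a bound by $\|\widetilde H_0\psi\|$ alone does not yet give $\|\bE\psi\|$.

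The mechanism the paper uses, and which your sketch misses, is to stay at the level of squares (quadratic forms), as in \eqref{twisted H tilde triangle ineq}: there the $H_0^2$ part of the derivative bound \eqref{eq: nabla F - H0} cancels against the $(1-a)H_0^2$ surplus in $\widetilde H_0^2-\tfrac1a v_\beta^2$ from \eqref{H_0 - v_beta}, while the \emph{negative} term $-\bchi^\perp\bE^2\bchi^\perp$ built into \eqref{F2} reappears with a positive sign as $(1-a)\bchi^\perp\bE^2\bchi^\perp$. Combined with the $(1-b)\bchi\bE^2\bchi$ term and the partition of unity $\bchi^2+(\bchi^\perp)^2=1$, this yields a genuine $\delta_0\bE^2$ contribution (up to $\ve_0\bE$ and constant errors absorbed via \eqref{p4}), hence $\|\widetilde H^F\psi\|\ge(\delta_0/2)\|\bE\psi\|$, after which your final step $|z|\|\psi\|\le(\delta_0/4)\|\bE\psi\|$ for $|z|\le U/4$ is exactly the paper's conclusion. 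So the condition \eqref{F2} is not there to make $\bF'$ small relative to $\bE$; it is there to make $|\bF'|^2$ comparable to $H_0^2$ \emph{minus} $\bchi^\perp\bE^2\bchi^\perp$, and the proof must exploit that subtraction through a quadratic-form cancellation rather than a norm-level triangle inequality.
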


 Now, we turn to the proof of Lemma \ref{lemma W control}, 
 and Proposition \ref{prop inverse estimate}. 
 
 \begin{proof}[Proof of Lemma \ref{lemma W control}]
 		   Let us recall some    important  notations
 	that will make the proof easier to navigate.
 	Namely,  in  Condition \ref{condition tuning}
 	we have introduced 
 	\begin{equation}
 		a  
 		=   
 	(		\widetilde \mu / \mu )^{\frac{1}{2}}  , 
 	\qquad
 	b 
 	 =  
 	( 	\overline{\mu}_*	 /   \widetilde \mu )^\frac{1}{2} , 
 	\qquad 
 	\t{and}
 	\qquad
 	\delto  
 	=  
 	\min(
 	1 - a, 1 - b
 	)  
 	\end{equation}
 	In particular, in this notation we may write the reference energy levels   (see Definition~\ref{definition-energies})
 	in the following way 
 	\begin{equation}
 		\bE^2
 		=
 		\tfrac{b}{a}\tilde{\mu} {\boldsymbol \lambda}^2
 		+
 		\delta_0^{-2 }U^2 \ . 
 	\end{equation}
 	with $\widetilde \mu < 1 $ and $U \geq 1  $. 
\vspace{1mm }

 	Let us now prove 
 		 		\eqref{W-H-tilde}. 
 To this end,  we use 
 	Theorem \ref{theorem W}
 	and 
 	the error bound 
 	\eqref{lerror}
 	to find 
 	that  in the sense of quadratic forms on $\mathcal{C}$ 
 	\begin{align}
\label{W bound}
 		\widetilde H_0 ^2
 		\  \geq  \ H_0^2 + \bchi \bE^2 \bchi   - \ve_0 \bE 
 		\qquad 
 		\t{and}
 		\qquad
 		\frac{1}{a}  W^2 
 		\ 	\leq  
 		\ 
 		a H_0^2 + b \bchi \bE^2 \bchi + \frac{C_0}{a} \ .
 	\end{align}
 	Thus, the reverse triangle inequality, 
 	the definition of $\delta_0$  
 	and the lower bound
 	\eqref{dd3} imply 
 	\begin{align}
 		\label{H_0 - v_beta}
 		\widetilde{H}_0^2 - \frac{1}{a} W^2 
 		&  \ \geq  \ 
 		(1-a)  H_0^2 +  (1-b) 
 		\chi \bE^2  \chi
 		- \ve_0\bE -\frac{C_0}{a} \\ 
 		& \ \geq \ 
 		\delta_0 
 		(					H_0^2 + \bchi \bE^2 \bchi 			)	
 		- \ve_0\bE -\frac{C_0}{a}   \\ 
 		& \ \geq \  
 		\delta_0(\bE^2-2\ve_0) -\ve_0\bE-\frac{C_0}{a}.
 		\label{ineq H w 1 }
 	\end{align}
In view of the inequalities  $\ve_0   \bE   \leq \ve_0 \bE^2, \ $ 
$\bE \geq \delta_0^{-1} U ,\ $ 
\eqref{p4}
and   \eqref{definition widetilde E}, the above right hand side is positive. 
Hence, 
$W^2 \leq a \widetilde H_0 ^2$ in the sense of quadratic forms on $\calC$.  
The last proof of the lemma follows from the fact that
$\calC$ is a domain core for both $W$ and $\widetilde H_0$. 
 \end{proof}

\begin{proof}
	[Proof of Proposition \ref{prop inverse estimate}]
Let $a,$ $b$ and $\delta_0$ as in the beginning of the proof of Lemma \ref{lemma W control}.	 

\vspace{1mm}

 \textit{(i)}	We start by showing \eqref{comparison op gap}. 
To this end, let $\psi \in \core$ be in the domain core as defined in Definition \ref{definition domain cores}. 
Then, 	 we make use of the reverse triangle inequality
and 
\eqref{ineq H w 1 }  
to find that 
\begin{equation}\begin{split}\label{H tilde triangle ineq}
		\|(\widetilde{H}_0 +  W  )\psi\|^2 
		&\geq (1-a)\|\widetilde{H}_0\psi\|^2 + \left(1 - \frac{1}{a}\right)\| W  \psi\|^2
		\\
		&  =  \, (1-a)\left(\|\widetilde{H}_0\psi\|^2 - \frac{1}{a}\| W \psi\|^2\right)\,   \\ 
		& \geq 
		(1-a )
		\<		\psi ,  
		\Big[ 
		 		\delta_0(\bE^2-2\ve_0) -\ve_0\bE-\frac{C_0}{a}
		\Big]
		 \psi 	 \> \ .
\end{split}
\end{equation} 
In view of the inequalities  $\ve_0   \bE   \leq \ve_0 \bE^2, \ $ 
$\bE \geq \delta_0^{-1} U ,\ $ 
\eqref{p4}
and   \eqref{definition widetilde E}, 
the term in $[ \cdots]$  is bounded below by 
$\delta_0^{-1 } \widetilde E^2$. 
Since $1 -a \geq \delta_0$, 
this implies that 
\begin{equation}
			\|(\widetilde{H}_0 +  W  )\psi\|^2  
			\geq 
			 \widetilde E ^2
			  \|	 \psi	\|^2 \ . 
\end{equation}
Since $\calC$ is a domain core for $\widetilde H = \widetilde H _0 + W $, 
the above inequality extends to $\psi \in \calD(\widetilde H)$. 

\vspace{1mm}
	 \textit{(ii)}	We now turn to the proof of \eqref{twisted full comparison op lower bound}.  Thanks to Lemma~\ref{lemma: eFH domain} we have that, for all $\psi \in \core$,
\begin{equation}\label{aqa}
	e^{\bF} 			
	(   \widetilde{H}_0   + W ) 
	e^{-\bF}  \psi \, = \, 
	\big( 
	\widetilde{H}_0  + e^{\bF } We^{-\bF} - \im \sg\cdot\nabla \bF\big) \psi\,. 
\end{equation}
Hence, using Lemma \ref{lemma exp twisted}
and   the same idea as in \eqref{H tilde triangle ineq} we get
\begin{equation}\begin{split}\label{twisted H tilde triangle ineq}
		\| e^{\bF} 			
		(   \widetilde{H}_0   + W ) 
		e^{-\bF}  \psi    \|^2
		& \, \geq  \, 
		\frac{1}{2}\|    (\widetilde{H}_0 + e^{\bF } We^{-\bF}) \psi   \|^2							
		-
		2\|  \nabla \bF \psi    \|^2
		\\
		&\,\geq\, \frac{1-a}{2}\left(\|\widetilde{H}_0\psi\|^2 - \frac{1}{a}\|v_\beta\psi\|^2 - \frac{4}{1-a}\|\nabla \bF\psi\|^2\right).
\end{split}\end{equation}
Moreover, in view of 	 			\eqref{v-for-h} and \eqref{F2} we have that, in the sense of quadratic forms on $\core$,
\begin{equation}\label{eq: nabla F - H0}
	\begin{split}
		|\nabla F|^2\leq \frac{(1-a)^2}{4}\big(H_0^2 + 2\ve_0 - \bchi^\perp  \bE^2  \bchi^\perp  \big).
	\end{split}
\end{equation}
Note that the inequality for $W$ in \eqref{W bound} also holds for $v_\beta$--see Theorem \ref{theorem W}. 
Hence, combining \eqref{H_0 - v_beta} and \eqref{eq: nabla F - H0} in \eqref{twisted H tilde triangle ineq} gives
\begin{align}
	\nonumber 
	\| e^{\bF} 			
	(   &\widetilde{H}_0   + W ) 
	e^{-\bF}  \psi    \|^2							\\ 
	\nonumber  
	&\geq\, \frac{1-a}{2}\left\{a \|H_0\psi\|^2 + (1-b)\|\bE\bchi\psi\|^2+(1-a)
	\|\bE\chi^\perp\psi\|^2
	- \ve_0\|\bE^{\frac{1}{2}}\psi\|^2
	-
	\tilde{C}_0\|\psi\|^2\right\}\\
	&\geq\, \frac{\delta_0}{2}\left\{a \|H_0\psi\|^2+\delta_0\|\bE\psi\|^2 - \ve_0
	\|		\bE^{\frac{1}{2}}   \psi\|^2 -
	\tilde{C}_0\|\psi\|^2\right\}\,,
\end{align}
where $\tilde{C}_0 \equiv 2(1-a)\ve_0 +  C_0/a  $.
Note that  $ \ve_0 \bE \leq \ve_0 \bE^2 $, 
$\bE \geq  \delta_0^{-1} U $
and $U$ is chosen large   relative to $C_0$ 
in \eqref{p4}.
Hence,  
the last two terms can be dropped  
at the expense of taking $\delta_0/2$ away from the second term, 
\textit{i.e.} 
$ ( \delta_0/2) \bE^2 - \ve_0\bE-  \tilde{C}_0 \geq 0 $. 
Hence, we find

\begin{equation}\label{coers}
	\|e^{\bF}	\widetilde{H}	e^{-\bF}	\psi\|^2
	 \ \geq  \ 
	\frac{\delta_0 a}{2} \|H_0\psi\|^2 +\frac{\delta_0^2}{4}\|\bE\psi\|^2.
\end{equation}
Finally, we use the
lower bound \eqref{coers}
and  the 	 	 reverse triangle inequality 
to find that (here, we drop the $H_0$ term)
\begin{align}
\nonumber 
	\|	 (	 e^{\bF} \widetilde H e^{-\bF}  - z			)		 \psi	\| 
&  \  	\geq  \ 
	\|	 e^{\bF} \widetilde H e^{-\bF}   \psi	\| 
	- 
	|z |
	\|	 \psi	\| 	\\
	\nonumber 
&  \ 	\geq  \ 
	( 	  \delta_0/2)  \|	 \bE \psi	\| 
	- |z |		\|	 \psi	\| 	\\ 
&  \ 	\geq  \ 
	(  \delta_0/4)  \|	 \bE \psi	\|  \ . 
		\label{eq: no z inv norm}
\end{align}
Here, we have used the fact that $\bE \geq \delta_0^{-1} U$
combined with the assumption 
$|z| \leq U/4$. 
This finishes the proof  for $\psi \in \calC $. 
The inequality then extends to $\psi \in \calD(\widetilde H^F	)$ 
since $\calC$ is a domain core
for both $\widetilde H^F$ and
$\bE$. 
\end{proof} 
	 
	 \subsection{Proof of Theorem~\ref{theorem exp decay}}
	 
The next lemma contains one of the most important bounds
concerning the exponentially intertwined operator $\widetilde H^F$.

	 \begin{lemma}\label{lemma: HF extension}
	 
	 	Let  $F   \in \mathcal{W}_b$ satisfy \eqref{F2} and \eqref{F3}. 
	 	Then, the following statements hold true. 
	 	\begin{enumerate}[label = (\roman*)]
	 		\item For any $z\in \rho(\widetilde{H}^{F})\cap\rho(\widetilde{H})$, we have 
	 		\begin{align}\label{ri}
	 			(\widetilde{H}^{ F}-z)^{-1}=e^{ \bF}(\widetilde{H}-z)^{-1} e^{- \bF}\,,
	 		\end{align} 
	 		
	 		\item 
	 		 For  any $| z|\le U /4 $ we have that $z\in \rho(\widetilde{H}^{F})$ and, 
	 		for all $\psi\in \H$, 
	 		\begin{equation}\label{bound2}
	 			\|\bE(\widetilde{H}^{F}-z)^{-1}\psi\| 
	 			 \ \leq  \ 
	 		  (4 /\delta_0 )
	 			 	\|\psi\| 			 \ . 
	 		\end{equation}
	 	\end{enumerate}
	 \end{lemma}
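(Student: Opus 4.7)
The strategy is to combine the coercivity estimate of Proposition~\ref{prop inverse estimate}~(ii) with the fact that $e^{\pm\bF}$ are bounded, mutually inverse operators that preserve $\calD(\widetilde H_0)$. This preservation property splits into two parts: $e^{\pm\bF}\calD(H_0)=\calD(H_0)$ is the content of Lemma~\ref{lemma: eFH domain}~(i); and $e^{\pm\bF}\calD(\bE)\subset \calD(\bE)$ follows directly from the fact that $\bE$ is diagonal in the angular momentum blocks and each $F_j$ acts as a radial multiplier, so that $\|\bE\, e^{\pm\bF}\psi\|\le e^{\|F\|}\|\bE\psi\|$. Consequently $e^{\pm\bF}$ maps $\calD(\widetilde H_0)=\calD(H_0)\cap \calD(\bE)=\calD(\widetilde H)=\calD(\widetilde H^F)$ bijectively onto itself.

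For part~(i), given $z\in\rho(\widetilde H)\cap\rho(\widetilde H^F)$ and $\psi\in\H$, the plan is to set $\varphi:=e^{\bF}(\widetilde H - z)^{-1}e^{-\bF}\psi$. The mapping property above guarantees $\varphi\in\calD(\widetilde H^F)$, and from the defining relation $\widetilde H^F = e^{\bF}\widetilde H e^{-\bF}$ on this domain a direct manipulation gives $(\widetilde H^F - z)\varphi = e^{\bF}(\widetilde H-z)(\widetilde H-z)^{-1}e^{-\bF}\psi = \psi$. Applying $(\widetilde H^F - z)^{-1}$ yields~\eqref{ri}.

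For part~(ii), the first step is to extract from Proposition~\ref{prop inverse estimate}~(ii) the coercivity bound $\|(\widetilde H^F - z)\psi\|\ge(\delta_0/4)\|\bE\psi\|$ on $\calD(\widetilde H^F)$ for all $|z|\le U/4$. Since $E_j\ge\delta_0^{-1}U\ge 1$ by Definition~\ref{definition-energies} together with~\eqref{po}, this in particular implies $\|(\widetilde H^F - z)\psi\|\ge (U/4)\|\psi\|$, so $\widetilde H^F - z$ is injective with closed range. To upgrade this to $z\in\rho(\widetilde H^F)$ surjectivity is needed, and here the key observation is the identification $(\widetilde H^F)^*=\widetilde H^{-F}$ recorded just before Proposition~\ref{prop inverse estimate}. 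Since the weight hypotheses~\eqref{F2} and~\eqref{F3} are manifestly invariant under $F\mapsto -F$, Proposition~\ref{prop inverse estimate}~(ii) applies equally to $\widetilde H^{-F}-\bar z$ for $|z|\le U/4$, yielding $\ker((\widetilde H^F - z)^*)=\{0\}$; combined with closed range this forces surjectivity. The quantitative estimate~\eqref{bound2} then follows by substituting $\psi=(\widetilde H^F-z)\varphi$ into the coercivity bound.

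The main technical point is the surjectivity step in~(ii): a coercivity estimate on a closed operator does not on its own place $z$ in the resolvent set, so the $F\mapsto -F$ symmetry of conditions~\eqref{F2} and~\eqref{F3}, together with the identification of the adjoint, must be invoked explicitly. Everything else is bookkeeping built around Proposition~\ref{prop inverse estimate} and the domain-preservation properties of $e^{\pm\bF}$.
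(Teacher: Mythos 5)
Your proposal is correct and follows essentially the same route as the paper: part (i) via the intertwining identity on $\calD(\widetilde H)=\calD(\widetilde H^F)$, and part (ii) by combining the coercivity bound of Proposition~\ref{prop inverse estimate}~(ii) with the identification $(\widetilde H^F)^*=\widetilde H^{-F}$ and the sign-flip invariance of \eqref{F2}--\eqref{F3} to get a trivial adjoint kernel, hence bijectivity and the bound \eqref{bound2}. The only cosmetic difference is that you phrase the final step via closed range while the paper invokes continuity of the inverse together with the Closed Graph Theorem, which is equivalent bookkeeping.
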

 \begin{proof}
First we prove \textit{(i)}.
To this end,  we observe that for  
$z\in \rho(\widetilde{H}^{F})\cap\rho(\widetilde{H})$ 
and
 $\vp\in\calD(\widetilde H ) = \calD (\widetilde H ^F )$ we have
	\begin{equation}\label{eq: eF to HF}
		e^{\bF}
		(\widetilde{H}-z)^{-1}e^{- \bF}(\widetilde{H}^{ F}-z)\vp = \vp
		=(\widetilde{H}^{ F}-z)^{-1}(\widetilde{H}^{ F}-z)\vp\,.
	\end{equation}
	This implies the 
	operator equality \eqref{ri}. 
	
Now, we prove \textit{(ii)}. 	  
Our starting point is the lower bound  
	 	\eqref{twisted full comparison op lower bound}, which we recall reads  
	 	\begin{equation} \label{extnorm}
	 		\|(\widetilde{H}^{ F}-z)\vp\| 
	 		 \ \geq  \ 
(  \delta_0/4 ) 	\,  \|\bE \vp\| 
	 		 \qquad \forall \vp\in
	 		\calD (\widetilde H ^F)  \  .
	 	\end{equation}
	 	Next, we show that $z\in \rho(\widetilde{H}^{ F})$ 
	 	provided $| z| \le  U /4 $. Equation \eqref{extnorm} implies that $\widetilde{H}^F$ is invertible with  continuous inverse from 
	 	$\mathrm{Ran} ( \widetilde{H}^{F}-z ) $
	 	 to $\mathcal{H}$. 
	 	Noting that \eqref{extnorm} holds also for  $\widetilde{H}^F$  and $z$ replaced by $\widetilde{H}^{-F}$ and $z^\ast$, respectively, we see that $\mathrm{Ker}(\widetilde{H}^{-F}-z^\ast)=
	 	\mathrm{Ker}(\widetilde{H}^{F}-z)^\ast=\{0\}$. 
	 	Therefore, $\mathrm{Ran} ( \widetilde{H}^{F}-z ) $ is dense. 
	 	Finally, note that by the continuity  of 
	 	$(\widetilde{H}^{F}-z)^{-1}$ and the Closed Graph Theorem we find that 
	 	the  range of $\widetilde{H}^{F}-z$ equals $\mathcal{H}$, i.e., $z\in \rho(\widetilde{H}^{ F})$. Using the bijectivity of the resolvent in \eqref{extnorm}
	 	we get \eqref{bound2}.
	 	This finishes the proof. 
	 		 \end{proof}
	 

	 We are now ready to prove the abstract exponential decay theorem that
	 we introduced in the last section. This then completes the proof of our main result. 
	 
	 \begin{proof}[Proof of Theorem~\ref{theorem exp decay}]
	 	Let us first assume that $F \in \W_b   $ conditions
	 	 \eqref{F1}, \eqref{F2} and \eqref{F3}. 
	 	We consider 
	 	$\widetilde{H}$ as in \eqref{H tilde definition}
	 	and \eqref{h0-tilde definition}. 
	 	For Lemma~\ref{lemma: HF extension} and Proposition~\ref{prop inverse estimate} to hold we choose 
	 	$U>0$
	 	large enough in Condition \ref{condition tuning}  so that 
	 	$\sup_{ z \in \triangle} |z |  \leq U/4  $.

	 	We follow the construction described at the introduction of this section. Note that thanks to \eqref{W-H-tilde} we verify that $\rho(\widetilde{H})\subset \triangle$. Hence, formula \eqref{g-t} applies. 
	 	
	 	We study the norm of  
	 	\begin{align}
	 		e^{\bF}\left((H-z)^{-1} - (\widetilde{H}-z)^{-1}\right), \quad\mbox{for}\quad z\in \triangle\quad \mbox{a.e}\,.
	 	\end{align}
	 	Since $\calD(\widetilde{H})\subset\calD(\bE)$ we have that $(H-z)^{-1} - (\widetilde{H}-z)^{-1} = (H-z)^{-1}(\widetilde{H}-H)(\widetilde{H}-z)^{-1}$. 
	 	For any $\vp,\phi\in \H$ we look
	 	at
	 	\begin{equation}
	 		\begin{split}
	 			\langle e^{\bF} \left((H-z)^{-1} - (\widetilde{H}-z)^{-1}\right) \vp, \phi \rangle
	 			&=\langle \vp, \left((H-z^\ast)^{-1} - (\widetilde{H}-z^\ast)^{-1}\right)e^{\bF}\phi \rangle\\
	 			&= \langle (H-z)^{-1}\vp,\, (\widetilde{H}-H)(\widetilde{H}-z^\ast)^{-1}e^{\bF}\phi \rangle\,.
	 		\end{split}
	 	\end{equation}
	 	Therefore, 
	 	\begin{equation}\label{griesemer norms}
	 		\begin{split}
	 			\|e^{\bF}\left((H-z)^{-1} - (\widetilde{H}-z)^{-1}\right)\|&\leq \|(H-z)^{-1}\|\|(\widetilde{H}-H)(\widetilde{H}-z^\ast)^{-1}e^{\bF}\|\\
	 			&= \|(H-z)^{-1}\| \|\bE{ \bchi} e^{\bF} e^{-\bF}(\widetilde{H}-z^\ast)^{-1}e^{\bF}\|\\
	 			&\leq \|(H-z)^{-1}\| \|\bE e^{-\bF}(\widetilde{H}-z^\ast)^{-1}e^{\bF}\|,
	 		\end{split}
	 	\end{equation}
	 	where we used that $F = 0$ in the support of $\chi$. Then, using the above  in \eqref{g-t} gives
	 	\begin{align}
	 		\| e^{\bF} g  (H )\| 					\nonumber 
	 		&
	 		\leq 
	 		\frac{1}{\pi } 
	 		\int_{\Delta}
	 		\|(H-z)^{-1}\| \|\bE e^{-\bF}(\widetilde{H}-z^\ast)^{-1}e^{\bF}\|  
	 		| \partial_{\bar z} \,  \tilde{  g  }  (z)| \, \d x \d y  \ , 
	 		\\
	 		& \leq 
	 		\sup_{z\in \Delta}  
	 		\|\bE ( e^{-\bF} 
	 		\widetilde H 
	 		e^{\bF } -z)^{-1}\|
	 		\,\, \,  
	 		\frac{1}{\pi} \!  
	 		\int_\Delta  
	 		\frac{       | \partial_{\bar z }         \, 	 \tilde{  g  }  (z)     |  }{ | \mathrm{Im}z|  }\, \d x \d y  
	 		\leq 
	 	\frac{ 4 \delta_0^{-1 }}{\pi }
	 		\,\,   
	 		\int_\Delta  
	 		\frac{       | \partial_{\bar z }         \, 	 \tilde{  g  }  (z)     |  }{ | \mathrm{Im}z|  }\, \d x \d y\,,  				\label{HS 1}
	 	\end{align}
	 	where in the last step we used 
	 	Lemma~\ref{lemma: HF extension}. Moreover,
	 	the integral above   is finite thanks to the almost-analytic property $ | \partial_{\bar z} \,   \tilde{  g  }   (z)|  = \mathcal{O} (|\mathrm{ Im} \, z  |)$ as $\mathrm{ Im}z \to 0.$   This shows the theorem for $F\in \mathcal{W}_b$.

	 	In order to pass to sequences $F \in  \W   $, we argue     as in \cite{CHSV21}. 
	 	Namely, for each $ n \in \N$ we consider the new sequence  $F_n = (F_{j,n})_{ j \in \Z } $ defined as 
	 	\begin{equation}
	 		F_{j,n} := 
	 		\frac{F_j }{ 1 + n^{-1 } F_j } \, , \qquad  j \in \Z \ , n \in \N \ .
	 	\end{equation}
	 	Clearly, this new sequence remains in $W_{\t{loc}}^{1,2}$. 
	 	In addition the  the uniform bounds $|F_{j,n}| \leq n $ show that the sequence is bounded, i.e. $F_n \in \W_b $ for all $ n \in \N$. 
	 	Furthermore,  the collection of estimates  
	 	\begin{equation}
	 		\label{Fn estimates}
	 		|F_{j,n} ' | \leq |F_{j   }'|     \ , 
	 		\quad \quad 
	 		| F_{j,n} | \leq |F_{j}| 
	 		\quad \t{and} \quad 
	 		|  F_{j,n} - F_{k,n} | \leq |F_j - F_k| 
	 	\end{equation}
	 	show that $F_n$ satisfies the conditions 
\eqref{F1}, 	 	\eqref{F2}, and \eqref{F3},
 with respect to the same constants. 
	 	Hence, we may apply the version of the theorem for     bounded sequences to each member of the collection $ ( F_n)_{n \in \N } $. 
	 	Finally, we note that  since $F_{j,n}  \rightarrow  F_j $  monotonically as $ n \rightarrow \infty $, the Monotone Convergence Theorem implies that 
	 	\begin{equation}
	 		\| e^{\bF} 	\p_I (H)  \| 
	 		=
	 		\lim_{ n \rightarrow \infty }
	 		\|  e^{F_n}   \p_I (H ) \| 
	 		\leq 
	 		\sup_{ n \in \N }
	 		\|  e^{F_n}   \p_I (H ) \| 
	 		< \infty  
	 	\end{equation}
	 	where finitenes of the the last quantity follows from \eqref{HS 1}.  This finishes the proof of the theorem. 
	 \end{proof}


	 \appendix
	 \section{Control of $e^F W e^{-F} $}\label{fourier-F}
	 
	 \begin{lemma}\label{lemma exp twisted}
	 	Let $W: \R^2 \rightarrow \C^2 \times \C^2  $ satisfy Condition \ref{condition 2} with exponent $\beta>0$ 
	 	and potential $v(r)$.
	Then, for all    $F = (F_{j})_{j \in \Z} \in   \W $ satisyfing
	  \eqref{F3}
	  there holds 
	 	\begin{equation}
	 		\|   e^{\bF} W e^{-\bF} \psi \|^2 
	 		\leq
	 		\coth^2(\beta/4 )  \| \,v \psi\|^{2}  \  ,
	 		\qquad \forall \psi \in \mathscr C \  . 
	 	\end{equation} 
	 \end{lemma}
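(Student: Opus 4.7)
The plan is to work in the angular momentum basis and exploit the Fourier structure of $W$ provided by Condition \ref{condition 2}(a). Writing $\psi = (\psi_k)_{k \in \Z}$ and using the action of $W$ as a matrix in the angular momentum decomposition
\begin{equation*}
 (W\psi)_j(r) \; = \; \sum_{k \in \Z} \widetilde W(r, j-k)\, \psi_k(r),
\end{equation*}
the operators $e^{\pm \bF}$ are diagonal in this basis and commute with the radial variable $r$. Therefore conjugation transforms the matrix entries into
\begin{equation*}
 (e^{\bF} W e^{-\bF} \psi)_j(r) \; = \; \sum_{k \in \Z} e^{F_j(r) - F_k(r)}\, \widetilde W(r, j-k)\, \psi_k(r).
\end{equation*}

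Next, I would combine the two pointwise bounds at hand: from Condition \ref{condition 2}(a) we have $\|\widetilde W(r, j-k)\|_{2\times 2} \le v(r) e^{-\beta |j-k|}$, while assumption \eqref{F3} gives $|F_j(r) - F_k(r)| \le (\beta/2)|j-k|$ uniformly in $r$. The two effects partially cancel and leave a residual exponential decay $e^{-(\beta/2)|j-k|}$ in the off-diagonal entries. Thus
\begin{equation*}
 \|(e^{\bF} W e^{-\bF} \psi)_j(r)\| \; \le \; v(r) \sum_{k \in \Z} e^{-(\beta/2) |j-k|} |\psi_k(r)|.
\end{equation*}

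The final step is a standard Schur test on the symmetric convolution kernel $K(j,k) = e^{-(\beta/2)|j-k|}$. Applying the Cauchy--Schwarz inequality with weight $K$ and observing that for every fixed $j$ (or $k$)
\begin{equation*}
 \sum_{n \in \Z} e^{-(\beta/2)|n|} \; = \; \frac{1 + e^{-\beta/2}}{1 - e^{-\beta/2}} \; = \; \coth(\beta/4),
\end{equation*}
I would pass the integration over $r$ inside, swap the order of summation in $j$ and $k$, and apply the identity twice. This produces the factor $\coth^2(\beta/4)$ and collapses the right-hand side to $\int_0^\infty v(r)^2 \sum_k |\psi_k(r)|^2\, r\, dr = \|v \psi\|^2$, as claimed.

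The argument is essentially a one-line computation once the kernel structure is set up, so no serious obstacle is anticipated. The only subtle point is justifying the pointwise manipulations for $\psi \in \mathscr{C}$: since elements of the core are smooth, compactly supported, and rapidly vanishing near the origin, the angular Fourier expansion converges absolutely and all sums/integrals involved are finite, so Fubini and Cauchy--Schwarz are applied without issue.
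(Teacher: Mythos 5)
Your proposal is correct and follows essentially the same route as the paper: expand in angular momentum channels, cancel the weight differences $e^{F_j-F_k}$ (bounded via \eqref{F3}) against the decay $v(r)e^{-\beta|j-k|}$ of $\widetilde W$, and sum the resulting convolution kernel $e^{-\beta|j-k|/2}$ to produce $\coth(\beta/4)$ in each summation variable. The only cosmetic difference is that you run a Schur test/weighted Cauchy--Schwarz directly on the squared norm, whereas the paper dualizes against a test function $\varphi$ and invokes Young's convolution inequality plus Cauchy--Schwarz in $r$ --- an equivalent computation.
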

	 
	 \begin{proof}
	 	Let us take smooth functions $\psi= \oplus_{j \in \Z} \psi_j$,  where the decomposition is with respect to the splitting  $( P_{m_j} )_{ j \in \Z  }$.  In order to avoid confusions we make explicit, in a sub-index, the reference space for the norm or the scalar product. 
	 	Let us recall that we denote by $(x,y)$ the inner product on $\C^2$, 
	 	and use the norm $|x|^2  = (x,x)$. 
	 	We write
	 	\begin{align*}
	 		\|v\,\psi\|^2  =\,	\sum_{j \in\Z} \<\psi_j, v^2 \psi_j\>_{L^2(\R^+,\C^2)}=  \sum_{j \in\Z} \int_{0}^\infty v^2(r)
	 		| \psi_j(r)|^2
	 		\, r  \d r =\int_{0}^\infty  v^2(r)\|\psi (r) \|_{\ell^2(\Z)}^2 \, r  \d r\,.
	 	\end{align*} 
	 	A straightforward calculation, shows that for 
	 	$\vp =\oplus_{j \in \Z} \vp_j$ smooth, we have
	 	\begin{equation}\label{lemma 5.4 eq 1}
	 		\begin{split}
	 			|  \<\vp ,      
	 			e^{\bF} W  e^{-\bF}
	 			\psi 
	 			\>_\H  | 
& = 
\bigg|
\sum_{j , k \in \Z } 
\<
P_{m_j} \vp ,  
e^{F_j}
W
e^{-F_k}
P_{m_k}
\psi 
 \>_\H 
\bigg|	 \\ 
&  =  
\bigg|
\sum_{j , k \in \Z } 
\<
 \vp_j ,  
e^{F_j}
\widetilde{W} (\cdot,  j  - k )
e^{-F_k} 
\psi_k 
 \>_{	\widehat \H }
\bigg|		\\ 
	 			& \leq 
	 			\sum_{j,k \in\Z}
	 			e^{F_j(r ) - F_k(r )}
	 			\int_{0}^\infty 
	 			\Big| 
	 			\bigg( \vp_j (r),
	 			\widetilde{W}(r,j-k)
	 			\psi_k(r)
	 			\bigg)    \Big|  
	 			\, r \d r 	\\
	 			&\le 			\int_{0}^\infty 
	 			\sum_{j,k \in\Z}
	 			e^{\beta |j-k|/2} \|\widetilde{W}(r,j-k)\|_{2\times 2}\, 
	 			 | \varphi_j(r) | 
	 			| \psi_k(r)| \, r  \d r \\
	 			&\le 	
	 			\int_{0}^\infty 	v(r)  \,
	 			\Big[	\sum_{j,k \in\Z} 
	 			e^{-\beta |j-k|/2} 
	 			\,
	 			 | \varphi_j(r) | 
	 		| \psi_k(r)|
	 			\Big]
	 			\, r  \d r  \ , 
	 		\end{split}
	 	\end{equation}
where in the last  inequality we   used that
$W$ satisfies Condition \ref{condition 2}.  
	 	Using Young's convolution  inequality we can estimate 
	 	$$
	 	\sum_{j,k \in\Z} e^{-\beta |j-k|/2} 
	 	\,  
	 	| \varphi_j(r) | 
	 	 | \psi_k(r)  | 
	 	\le \sum_{j\in \Z} e^{-\beta |j| / 2} \|
	 	\|\psi(r)\|_{\ell^2}	\|\varphi(r)\|_{\ell^2}=  \coth(\beta/4 ) 
	 		\|\psi(r)\|_{\ell^2}	\|\varphi(r)\|_{\ell^2}\,.
	 	$$
	 	Hence, using Schwarz' inequality in the $r$ integral, we get
	 	\begin{align*}
	 		|  \<\vp ,    \nonumber 
	 		e^{\bF} W  e^{-\bF}
	 		\psi 
	 		\>_\H  | 
	 		& \leq  \coth(\beta/4 ) 
	 		\int_{0}^\infty 	v(r) \, \|\psi(r)\|_{\ell^2}	\|\varphi(r)\|_{\ell^2}
	 		\, r  \d r \,\le \coth(\beta/4 ) \|v\, \psi\| \|\varphi \|\,.
	 	\end{align*}
The proof is finished after we take the supremum over all $\vp$ with unit norm in $\H$. 
	 \end{proof}

	 \section{Domain Cores}\label{appendix domain core}
	 Let $H_0$ be the Dirac operator satisfying Condition \ref{condition 1}, and let $\mathscr{C}$ be the space of functions defined in Eq.  \eqref{domain core}. 
	 Essential self-adjointess of   $H_0 |_{\mathscr{C}}$  follows from essential self-adjointess of $  H_0 |_{C_c^\infty (\R^2 )}$ and  the following result. 
	 Since the proof for $H |_{\mathscr{C}}$ presents no modification, we omit it. 
	 \begin{lemma}
	 	Let $\vp \in C_c^\infty (\R^2 , \C^2)$ and consider the sequence $(\vp_n)_{ n\in \N}$ defined by  
	 	\begin{equation}
	 		\vp_n( \bx )
	 		: 	=
	 		\bigg(
	 		\frac{1}{1 + i n \exp (1 / |\bx |)}
	 		\bigg)
	 		\ 
	 		\vp (\bx )
	 		\, , 
	 		\qquad 
	 		\bx \in \R^2 \backslash \{0\} \, , 
	 	\end{equation}
	 	and $\vp_n(0)=0$. Then, $\vp_n \in \mathscr{C}$ for every $ n \in \N$  and 
	 	\begin{equation}
	 		\lim_{ n \rightarrow \infty } 
	 		\Big( 
	 		\| H_0 (\vp - \vp_n)  \| 
	 		+
	 		\|  \vp - \vp_n \|
	 		\Big) 
	 		=0  \ . 
	 	\end{equation}
	 \end{lemma}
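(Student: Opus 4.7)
The plan is to establish the two assertions separately: first, that $\vp_n \in \mathscr{C}$ for each $n \in \N$; second, that $\|\vp - \vp_n\| + \|H_0(\vp - \vp_n)\| \to 0$ as $n \to \infty$. Writing $\vp_n = g_n \vp$ with $g_n(\bx) = (1 + in\exp(1/|\bx|))^{-1}$, both steps reduce to a careful analysis of the cutoff $g_n$. The exponential $e^{1/|\bx|}$ in the denominator forces super-polynomial vanishing of $g_n$ and its derivatives at the origin for every $n$, which delivers $\mathscr{C}$ membership; the parameter $n$ then regulates the behaviour of $g_n$ on the compact support of $\vp$ in such a way that $g_n \vp$ approaches $\vp$ in both $L^2$ and the graph norm of $H_0$.

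For the membership $\vp_n \in \mathscr{C}$, since $\vp \in C_c^\infty(\R^2,\C^2)$ and $g_n$ is smooth on $\R^2\setminus\{0\}$, the product is smooth off the origin and has compact support inherited from $\vp$. What needs verification is the behaviour of $g_n$ together with all of its partial derivatives as $|\bx|\to 0$. The estimate
\begin{equation*}
  |g_n(\bx)| \, \le \, \frac{1}{n}\, e^{-1/|\bx|}
\end{equation*}
is immediate from $|1+in e^{1/|\bx|}| \ge n e^{1/|\bx|}$. For higher derivatives, applying the Fa\`a di Bruno formula to the composition $g_n = h \circ F$ with $h(z) = 1/z$ and $F(\bx) = 1 + in e^{1/|\bx|}$ expresses each $D^\alpha g_n$ as a finite sum of terms of the form $g_n^{k+1}\, (in e^{1/|\bx|})^k\, Q_{\alpha,k}(1/|\bx|)$, where $Q_{\alpha,k}$ is a polynomial. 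Each such summand is dominated by a polynomial in $1/|\bx|$ times $e^{-1/|\bx|}$, which is $O(|\bx|^N)$ for every $N \in \N$. Extending by zero at the origin therefore yields $\vp_n \in \mathscr{C}$.

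For the graph-norm convergence, I would use the Leibniz identity for the Dirac operator to split
\begin{equation*}
  H_0(\vp - \vp_n) \, = \, (1 - g_n)\, H_0 \vp \, - \, i\, \sg\cdot (\nabla g_n)\, \vp.
\end{equation*}
The convergences $\|\vp - \vp_n\| \to 0$ and $\|(1-g_n)\, H_0 \vp\| \to 0$ will then follow from the dominated convergence theorem, using the uniform bound $|1-g_n| \le 2$, the pointwise behaviour of $g_n$ on $\R^2\setminus\{0\}$ as $n\to\infty$, and the fact that $\vp,\, H_0 \vp \in L^2$. What remains is the commutator contribution $\sg\cdot(\nabla g_n)\, \vp$, which is where I expect the main obstacle.

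To control this last term, I would start from the explicit formula
\begin{equation*}
  \nabla g_n(\bx) \, = \, g_n(\bx)^2 \cdot in\, e^{1/|\bx|} \cdot \frac{\bx}{|\bx|^3},
\end{equation*}
and introduce $u \equiv n\, e^{1/|\bx|}$, so that the scalar prefactor equals $u/(1+u^2)\cdot|\bx|^{-2}$. This yields the uniform bound $|\nabla g_n(\bx)| \le (2|\bx|^2)^{-1}$ together with the refined estimate $|\nabla g_n(\bx)| \le (n|\bx|^2)^{-1}\, e^{-1/|\bx|}$ in the regime where $u$ is large, i.e.\ near the origin. Combined with the pointwise decay of $\nabla g_n$ to zero on $\R^2 \setminus \{0\}$ as $n \to \infty$, one then passes to polar coordinates on the compact support of $\vp$ and splits the resulting integral into a small ball around the origin -- where the factor $e^{-1/|\bx|}$ absorbs the $|\bx|^{-2}$ singularity, providing an integrable majorant uniformly in $n$ -- and its complement, where pointwise decay and the boundedness of $\vp$ allow a direct application of dominated convergence. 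This yields $\|\sg \cdot (\nabla g_n)\, \vp\| \to 0$ and completes the proof.
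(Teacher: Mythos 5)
Your proposal is correct and follows essentially the same route as the paper: the same factorization $\vp_n = g_n\vp$, the same commutator/Leibniz splitting of $H_0(\vp-\vp_n)$ into $(1-g_n)H_0\vp$ plus a $\nabla g_n$ term, with dominated convergence for the first piece and the observation that $e^{-1/|\bx|}$ tames the $|\bx|^{-2}$ singularity in $\nabla g_n$ for the second. The only cosmetic difference is that the paper bounds the gradient term by the uniform estimate $\sup_{\rho>0}\big(e^{-1/\rho}\rho^{-2}\big)/n$, giving $O(1/n)$ directly, whereas you split into a small ball and its complement and invoke dominated convergence; your own refined bound $|\nabla g_n|\le e^{-1/|\bx|}/(n|\bx|^2)$ already yields the paper's one-line conclusion.
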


	 \begin{proof}
	 	Given $ n \in \N$, for the rest of the proof  we let $\chi_n(r) =  \big(1 + i n \exp (1/r) \big)^{-1 }$ for $ r >  0 $. 
	 	In particular, it is straightforward--although tedious--to verify that $\chi_n \in C^\infty (0, \infty)$ and that it satisfies the following asymptotic estimates as $r \downarrow 0 $
	 	\begin{equation}
	 		|  \chi^{(k)} (r)| = \mathcal{O} (r^N) \ , \qquad  \forall  N \in \N \ ,   \forall k \in \N_0  \ .
	 	\end{equation}
	 	Since $\vp $ and all of its derivatives are bounded, it follows easily that $\vp_n \in \mathscr{C}$. For the second part, we use the triangle inequality and a commutator estimate to find that
	 	\begin{align}
	 		\|   H_0 (\vp - \vp_n)  \|    
	 		=
	 		\|  H_0 (1 - \chi_n ) \vp   \|
	 		\leq 
	 		\|   (1 - \chi_n) H_0 \vp   \|
	 		+
	 		\| \chi_n' \vp    \| \ .
	 	\end{align}
	 	Since for all $r>0$, $\lim_{ n \rightarrow \infty } \chi_n(r) = 1 $ and $|1 - \chi_n| \leq 2 $, the Dominated Convergence Theorem implies that $\|   (1 - \chi_n) H_0 \vp   \|$ converges to zero. 
	 	For the second term, we calculate that for all $r > 0 $
	 	\begin{equation}
	 		| \chi_n ' (r) | 
	 		\ = \ 
	 		\bigg| 
	 		\frac{ n \exp( 1 /r ) r^{-2 }    }{  \big(  1 + i n \exp (1/r) \big)^2   }
	 		\bigg| 
	 		\  = \ 
	 		\frac{  n  e^{-1/r} r^{-2 } }{     e^{-2/r }		+ n^2 }
	 		\ \leq  \ 
	 		\frac{ \sup_{ \rho >0}
	 			\big(
	 			e^{-1/ \rho } \rho^{-2 }
	 			\big) }{n} \ .
	 	\end{equation}
	 	Thus, we put the two estimates together to find that $ \lim_{ n \rightarrow \infty } \|  H_0 (\vp_n - \vp ) \| = 0$. The estimate for $ \| \vp - \vp_n \|$ only uses the Dominated Convergence Theorem. This finishes the proof of the lemma.   
	 \end{proof}
	 
	 Let  now fix $ j \in \Z$ and let $h_j$ be the operator defined in Eq. \eqref{Dj definition}, acting on $L^2 (  \R_+  , \C^2     )$. 
	 Recall that the space of radial functions $\Cr$ is defined in Eq. \eqref{domain core C'}. 
	 
	 \begin{lemma}
	 	\label{lemma domain core}
	 	$\Cr$ is a domain core of  	$h_j$.
	 \end{lemma}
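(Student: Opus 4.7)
The plan is to transfer essential self-adjointness from $H_0 |_{\mathscr{C}}$, just established in the previous lemma, to the radial operator $h_j$ on $L^2(\R_+, \C^2)$ via the angular momentum decomposition $H_0 \cong \bigoplus_{k\in\Z} h_k$. Fix $j \in \Z$ and an arbitrary $\psi \in \mathcal{D}(h_j)$. Using the unitary transformation $\mathcal{F}\mathcal{U}$ of Subsection \ref{sec.unit}, I will identify $\psi$ with a function $\varphi \in \H$ entirely concentrated in the $j$-th angular momentum channel, so that $\varphi \in \mathcal{D}(H_0)$, $P_{m_j}\varphi = \varphi$, $\|\varphi\|_\H = \|\psi\|$ and $\|H_0\varphi\|_\H = \|h_j \psi\|$. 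The goal then becomes to produce a sequence $\psi_n \in \Cr$ with $\psi_n \to \psi$ and $h_j \psi_n \to h_j \psi$.

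The first step is to invoke essential self-adjointness of $H_0 |_{\mathscr{C}}$ to obtain $(\varphi_n)_{n\in\N} \subset \mathscr{C}$ with $\varphi_n \to \varphi$ and $H_0 \varphi_n \to H_0 \varphi$ in $\H$. The next step is to apply the orthogonal projection onto the $j$-th angular momentum subspace: since $P_{m_j}$ is bounded and commutes with $H_0$, the sequence $\varphi_n^{(j)} := P_{m_j} \varphi_n$ still satisfies $\varphi_n^{(j)} \to P_{m_j}\varphi = \varphi$ and $H_0 \varphi_n^{(j)} \to H_0 \varphi$ in $\H$. The radial function $\psi_n$ attached to $\varphi_n^{(j)}$ via $\mathcal{F}\mathcal{U}$ is then my candidate approximating sequence, and the required convergence $\psi_n \to \psi$ and $h_j\psi_n \to h_j\psi$ in $L^2(\R_+, \C^2)$ will follow immediately from unitarity of the identification and the block structure of $H_0$.

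The main obstacle, and really the only non-formal step, is to verify that each $\psi_n$ actually lies in $\Cr$. The components of $\psi_n$ are, up to phase conventions and scalar factors, the angular Fourier coefficients $r \mapsto \int_0^{2\pi} \varphi_n^\ell(r\cos\theta, r\sin\theta)\, e^{-ik_\ell\theta}\, d\theta$ with $k_\ell \in \{j, j+1\}$. Compact support on $[0, \infty)$ and $C^\infty$-smoothness on $(0,\infty)$ transfer immediately from $\varphi_n \in \mathscr{C}$ by differentiation under the integral sign. The critical property is the polynomial vanishing at $r=0$ of $\psi_n$ and of each of its radial derivatives, faster than any power; I will derive this from the analogous uniform polynomial bounds on the Cartesian derivatives of $\varphi_n$ that are built into the definition of $\mathscr{C}$, by observing that any radial derivative of order $k$ is a finite linear combination of Cartesian partial derivatives of order at most $k$ with bounded trigonometric coefficients, and then passing this bound under the $\theta$-integral.
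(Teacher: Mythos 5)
Your proposal is correct and follows essentially the same route as the paper: both transfer essential self-adjointness from $H_0|_{\mathscr{C}}$ to $h_j$ by projecting onto the $j$-th angular momentum channel (the paper phrases this as $h_j$ being essentially self-adjoint on $\Pi_j\calF\,\calU\,\mathscr{C}$), and both reduce the matter to checking that the angular Fourier coefficients of functions in $\mathscr{C}$ lie in $\Cr$, via the same explicit integral formula and transfer of the rapid polynomial vanishing at $r=0$.
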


	 \begin{proof}
	 	Consider the canonical projection $\Pi_j : \bigoplus_{ \ell \in \Z} L^2 (\R_+, \C^2 ) \rightarrow L^2(\R_+, \C^2 )$ 
	 	into the $j$-th component. 
	 	Since $H_0$
	 	is essentially self-adjoint in $\mathscr C$, 
	 	its $j$-th block
	 	$h_j$
	 	is essentially self-adjoint in $\Pi_j \calF \calU \mathscr{C}$.  
	 	We claim that 
	 	\begin{equation}
	 		\Pi_j \calF \calU
	 		\mathscr{C} \subset \Cr
	 	\end{equation}
	 	from which it is easy to conclude that $\Cr$ is a domain core of $h_j$. 
	 	To this end,  let $\vp  = (\vp_1 , \vp_2)^\top \in   \Pi_j \calF \calU$. 
	 	Then, using the definition of the maps $\calF$ and $\U$, it verifies that there exists $\psi = ( \psi_1 , \psi_2)^\top  \in \mathscr{C}$ such that the following formula holds 
	 	\begin{equation}\label{fourier  eq 1}
	 		\vp_1 (r) = (2 \pi)^{-1/2}
	 		\int_0^{2\pi }
	 		\psi_1 (r \cos \theta , r \sin \theta) e^{ - \im j \theta } d \theta  \, , \qquad r>0 
	 	\end{equation}
	 	and analogously  for $\vp_2$. 
	 	Starting from the formula in Eq. 
	 	\eqref{fourier  eq 1}, it is staightforward to verify that $\vp \in \Cr $. This finishes the proof   
	 \end{proof}

	 Let us now turn to the final result in this appendix.  
	 Let $\widetilde{H}_0$  be  the operator defined as    in \eqref{h0-tilde definition}, 
	 in the space 
	 $ \calC = \{   \vp \in \bigoplus_{j\in \Z} \Cr   \ | \ \exists N : \vp_j = 0 \t{ if } |j| >   N  \} $. 
	 
	 \begin{lemma}\label{lemma-essential}
	 	$\widetilde{H}_0$ is essentially self-adjoint. 
	 \end{lemma}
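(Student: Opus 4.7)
The plan is to reduce essential self-adjointness of $\widetilde{H}_0$ on $\mathcal{C}$ to the corresponding statement for each of its angular-momentum blocks, and then to invoke a standard criterion for direct sums. Recall that by construction
\[
  \widetilde H_0 \;=\; \bigoplus_{j\in\Z} \widetilde h_j\,,
  \qquad
  \widetilde h_j \;:=\; h_j + \sigma_3 E_j \chi_j\,,
\]
acting on $\widehat\H = \bigoplus_{j\in\Z} L^2(\R_+,\C^2)$, and that $\mathcal{C}$ consists of those $(\vp_j)_{j\in\Z}$ with $\vp_j\in\mathscr{C}_r$ for every $j$ and $\vp_j=0$ for all $|j|$ sufficiently large.

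First, I would fix $j\in\Z$ and show that $\widetilde h_j$ is essentially self-adjoint on $\mathscr{C}_r$. By Lemma~\ref{lemma domain core}, $h_j$ is essentially self-adjoint on $\mathscr{C}_r$. The perturbation $\sigma_3 E_j \chi_j$ is a bounded symmetric multiplication operator (because $\chi_j$ is a smooth, compactly supported function with values in $[0,1]$ and $E_j<\infty$), so it is in particular $h_j$-bounded with relative bound $0$. A direct application of the Kato--Rellich theorem then yields the essential self-adjointness of $\widetilde h_j$ on $\mathscr{C}_r$, and moreover the self-adjoint closure has domain $\calD(\overline{\widetilde h_j}) = \calD(\overline{h_j})$.

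Next, I would upgrade this block-wise statement to the full operator using the standard fact that a countable direct sum of essentially self-adjoint operators is essentially self-adjoint on the algebraic direct sum of their cores. Concretely, it suffices to verify that $\mathrm{Ran}(\widetilde H_0 \pm i)$ is dense in $\widehat\H$ (von Neumann's criterion). Given $(\psi_j)_{j\in\Z}\in\widehat\H$ and $\varepsilon>0$, for each $j\in\Z$ essential self-adjointness of $\widetilde h_j|_{\mathscr{C}_r}$ supplies $\vp_j\in\mathscr{C}_r$ with
\[
  \|(\widetilde h_j \pm i)\vp_j - \psi_j\|_{L^2(\R_+,\C^2)}^2 \;\le\; \varepsilon\, 2^{-|j|}\,,
\]
and one then truncates the resulting sequence to $|j|\le N$ with $N$ large enough that the remainder $\sum_{|j|>N}\|\psi_j\|^2$ is smaller than $\varepsilon$. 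The truncated sequence lies in $\mathcal{C}$ by definition, and by orthogonality of the blocks it approximates $(\psi_j)_{j\in\Z}$ within the image of $\widetilde H_0\pm i$ up to $\mathcal{O}(\varepsilon)$. Letting $\varepsilon\downarrow 0$ proves density.

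I do not expect any genuine obstacle in this argument: the only subtlety is keeping track of the two stages (bounded perturbation on each fibre, then assembling via direct sums), both of which are textbook. The whole construction works precisely because $\chi_j$ is smooth and bounded with $\|\chi_j\|_\infty\le 1$, so that, in spite of the fact that $(E_j)_{j\in\Z}$ is unbounded in $j$, the perturbation on each fixed fibre remains relatively bounded with respect to $h_j$ with vanishing relative bound, and no cross-fibre coupling is introduced.
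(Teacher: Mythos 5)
Your proposal is correct and follows essentially the same route as the paper: essential self-adjointness of each block $\widetilde h_j = h_j + \sigma_3 E_j\chi_j$ on $\mathscr{C}_{\mathrm{r}}$ via the boundedness of $E_j\chi_j$ on the fixed fibre (the paper says "easy to check" where you invoke Kato--Rellich), followed by a density-of-range argument for $\widetilde H_0 \pm i$ using truncation in the angular momentum index so that the approximant lies in $\mathcal{C}$. The only difference is cosmetic ordering (you choose approximants for all $j$ and then truncate, the paper truncates $\psi$ first), which does not affect the argument.
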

	 
	 \begin{proof}
	 	Clearly, $\widetilde{H}_0$ is symmetric.  
	 	We will show that $ ( \F \U )\t{Ran}(  \widetilde{H}_0 \pm \im 		\textbf{1}  )$ is dense in 
	 	$ 	\widehat \H = \bigoplus_{j\in \Z} L^2(\R_+ , \C^2 ) $. 
	 	To this end, let $ \psi  = (\psi_j)_{j\in \Z } \in 
	 	 	\widehat \H$    and pick $\epsilon >0$. Choose $ N >0$ such that 
	 	\begin{equation}
	 		\|   \psi  - \psi^{(N)}  \|_{    	\widehat \H }
	 		\leq \epsilon /2
	 	\end{equation}
	 	where $\psi^{(N)}_j$ is defined as $\psi_j$ for $| j| \leq N$ and as $0$ for $|j| > N$. 
	 	Let us fix $ j \in \{ 1 , \ldots, N\}$ and let us note that   $ h_j |_{\Cr}$ is essentially self-adjoint.
	 	Since $\chi_j$ is     bounded, it is easy to check that the same statement holds for $ \widetilde h_j  = h_j + \sigma_3 E_j \chi_j$. 
	 	In particular, it follows   that  $\mathrm{Ran}(  \widetilde{h}_j |_{\Cr}   \pm \im \1       )$ is dense in $ L^2 (\R_+, \C^2 )$. 
	 	Therefore, we may find $\vp_j \in    \Cr $ such that 
	 	\begin{equation}
	 		\|   ( \widetilde h _j   \pm   \im 		\textbf{1}  )\vp_j - \psi_j  \| \leq \epsilon / (2N). 
	 	\end{equation}
	 	Let us consider $ \vp^{(N)} = (\vp_j)_{j\in \Z} \in \calC $, where we set $\vp_j \equiv  0$ for $| j  |> N$. 
	 	The triangle inequality then implies that 
	 	\begin{equation}
	 		\|   (\F \U ) (\widetilde H _0   + \im ) (\F \U)^{-1 }\vp^{(N) }   
	 		- \psi 
	 		\|
	 		\leq 
	 		\|    (\F \U ) (\widetilde H _0   + \im 		\textbf{1} ) (\F \U)^{-1 }  \vp^{(N) }   
	 		- \psi^{(N)}
	 		\|
	 		+
	 		\|  \psi - \psi^{(N)}  \| \leq \epsilon \ . 
	 	\end{equation}
	 	Since $\psi  \in \tilde \H $ and $\epsilon >0 $ were arbitrary, we conclude that   $\t{Ran}(  \widetilde{H}_0 \pm \im	\textbf{1}   )$ is dense in  $ 	\widehat \H$; essential self-adjointess now follows from standard arguments. 
	 \end{proof}

	 \section{Local Integrability}
	 \label{appendix local}
	In what follows we write $\1_R (\bx)\equiv \1_{B_r(0)} (\bx)\,\, (\bx\in \R^2)$
for
the indicator function on the centered ball of radius $r>0$. 	 
	 \begin{lemma}\label{loc.int}
	 	Consider the Dirac operator 
	 	$H_0 = \sigma \cdot (- i \nabla -A )$  
	 	with a locally integrable  magnetic field $B$.
	 	Let $V \in L^2_{\rm loc} (\R^2  )$. 
	 	Then, for all $R>0$
	 	and all $\ve>0$, 
	 	there exists $C = C(R, \ve)>0$
	 	such that for all $\vp \in C_0^\infty(\R^2)$
	 	there holds 
	 	\begin{align}
	 		\| V \1_R \vp \| 
	 		\leq 
	 		\ve 
	 		\|  H_0 \vp  \|
	 		+ 
	 		C \|  \vp   \| \ . 
	 	\end{align}
	 \end{lemma}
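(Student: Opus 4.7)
The plan is to combine a cutoff of $\vp$, local elliptic regularity for the magnetic Dirac operator, and a truncation of the potential $V$, closing the estimate via a 2D Sobolev embedding.

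First, pick $\chi\in C_c^\infty(\R^2,[0,1])$ supported in the ball of radius $2R$ with $\chi\equiv 1$ on the ball of radius $R$, so that $V\1_R\vp=V\1_R(\chi\vp)$. Using $H_0(\chi\vp)=\chi H_0\vp-i\,\sigma\cdot(\nabla\chi)\,\vp$ together with the identity $\sigma\cdot(-i\nabla)=H_0+\sigma\cdot A$ on the compactly supported $\chi\vp$ and the Plancherel identity $\|\sigma\cdot(-i\nabla)u\|_{L^2}=\|\nabla u\|_{L^2}$, one obtains a local elliptic estimate
\begin{equation*}
\|\chi\vp\|_{H^1(\R^2)}\leq C_R\bigl(\|H_0\vp\|_{L^2}+\|\vp\|_{L^2}\bigr),
\end{equation*}
with $C_R$ absorbing $\|\nabla\chi\|_{L^\infty}$ and the sup of $|A|$ on the annulus $\{\bx:R/2\leq|\bx|\leq 2R\}$, finite since $A$ is built from $B\in L^1_{\rm loc}$ via \eqref{aaa} and is continuous away from the origin.

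Next, truncate $V=V_M+V^M$ with $V_M:=V\1_{\{|V|\leq M\}}$ bounded by $M$, so that $\|V_M\1_R\vp\|\leq M\|\vp\|$ is absorbed into the $C\|\vp\|$ term, while $V^M:=V-V_M$ has $\|V^M\|_{L^2(B_R)}\to 0$ as $M\to\infty$ by dominated convergence. For the residual, Hölder's inequality combined with the 2D Sobolev embedding $H^1(\R^2)\hookrightarrow L^q(\R^2)$ (valid for every finite $q$) yields, for any $q>2$,
\begin{equation*}
\|V^M\1_R(\chi\vp)\|_{L^2}\leq C_q\,\|V^M\|_{L^{2q/(q-2)}(B_R)}\,\|\chi\vp\|_{H^1},
\end{equation*}
and combining with the elliptic bound above, choosing $M$ large, produces the infinitesimal prefactor on $\|H_0\vp\|$.

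The main obstacle is that $L^2_{\rm loc}$ is the critical threshold in two dimensions: the exponent $2q/(q-2)$ strictly exceeds $2$ for every finite $q$, so $V^M\in L^2$ alone does not a priori belong to $L^{2q/(q-2)}$. I would resolve this via one further layer of approximation: mollify $V^M$ in $L^2(B_R)$ by smooth $V^{M,\epsilon}\in C_c^\infty$, to which the above estimate applies in every $L^p$, and control the remainder $V^M-V^{M,\epsilon}$ either through a Moser--Trudinger exponential integrability estimate for $\chi\vp\in H^1(B_{2R})$, or by exploiting the fact that $\vp\in C_c^\infty(\R^2)$ is a priori bounded (so the smoothing error enters only through the additive constant $C$). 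A diagonal choice of $M$ and $\epsilon$ then closes the argument for any prescribed $\ve>0$.
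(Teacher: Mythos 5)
Your reduction to the ball and the H\"older--Sobolev computation are fine as far as they go, but the step you yourself flag as ``the main obstacle'' is a genuine gap, and neither of your proposed repairs closes it. After truncation and mollification, the remainder $V^M-V^{M,\epsilon}$ is small only in $L^2(B_R)$; to control $\|(V^M-V^{M,\epsilon})\chi\vp\|_{L^2}$ by $\|\chi\vp\|_{H^1}$ you would need either the potential remainder in some $L^p$ with $p>2$ (not available) or $\|\chi\vp\|_{L^\infty}\lesssim\|\chi\vp\|_{H^1}$ (false in two dimensions). Moser--Trudinger does not rescue this: exponential integrability of $|\chi\vp|^2$ pairs, by Orlicz duality, with potentials whose \emph{square} lies in $L\log L$, whereas $V\in L^2_{\rm loc}$ only gives $V^2\in L^1_{\rm loc}$. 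Your alternative -- using that $\vp\in C_0^\infty$ is bounded -- makes the additive constant depend on $\|\vp\|_{L^\infty}$, i.e.\ on $\vp$ itself, which is not permitted since $C=C(R,\ve)$ must be uniform over all $\vp$. There is also a secondary problem in your ``local elliptic estimate'': the term $\|\sigma\cdot A\,(\chi\vp)\|$ requires $\sup|A|$ over all of $\supp\chi=B_{2R}$, not merely over the annulus where $\nabla\chi$ lives, and under the lemma's hypothesis ($B$ only locally integrable) the gauge \eqref{aaa} need not be bounded near the origin (e.g.\ $B(r)=r^{-3/2}$ gives $A(r)=2r^{-1/2}$); continuity away from $0$ is not enough. (Under the paper's standing Condition \ref{condition 1}, $B\in L^2_{\rm loc}$, this last point is harmless, but your argument as written uses it.)

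The paper's proof takes a different route that sidesteps both difficulties: it squares the inequality and works at the level of quadratic forms, writing $\|V\1_R\vp\|^2\le\langle\chi_R\vp,\,V^2\1_R\,\chi_R\vp\rangle$, invoking the form bound of $L^1+L^\infty$ potentials relative to $-\Delta$, transferring it to the magnetic Laplacian via the diamagnetic inequality, and then passing to $H_0$ through the identity $H_0^2=(-i\nabla-A)^2+\sigma_3 B$, with the field term $\chi_R|B|\chi_R$ handled by the same mechanism since $\chi_R|B|^{1/2}\in L^2$. In particular it never uses pointwise information on $A$ or on $\vp$. Note that the obstruction you ran into is intrinsic rather than technical: $V^2\1_R\in L^1(\R^2)$ is exactly the two-dimensional borderline for form bounds relative to $-\Delta$ (this quadratic-form mechanism, not a H\"older/Sobolev mechanism, is the delicate heart of the lemma), so no amount of truncating or mollifying $V$ alone, with smallness measured only in $L^2$, can produce the missing integrability; a complete argument must exploit a form-level inequality of the type the paper cites, or a Kato-class/$L\log L$-type hypothesis on the potential.
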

	 \begin{remark}\label{rem.ess.sa}
	 	Consider $W$ satisfying Condition~\ref{condition 2} and  
	 	notice that thanks to Lemma~\ref{lemma exp twisted} (with ${\bf F}=0$) and Lemma~\ref{loc.int} we get that $W\1_R$ is infinitesimally small with respect to $H_0$. In particular, $H_0+W\1_R$ is essentially self-adjoint on $C_c^\infty(\R^2,\C^2)$ for every $R>0$. Hence by the theorem of Chernoff (see \cite{Chernoff73,Thaller1992}),  $H=H_0+W$  is essentially self-adjoint on $C_c^\infty(\R^2,\C^2)$.
	 	\end{remark}
	 \begin{proof}
	 	For fixed $R>0$ consider $ 0 \leq \chi_R \leq 1$
	 	a smooth function of compact support 
	 	satisfying 
	 	$\1_R \leq \chi_R^2$.
	   It is well-known that a multiplication operator is 
	   infinitesimally form bounded by $(-\Delta)$ if 
	   it  belongs to $L^1(\R^2) + L^\infty ( \R^2)$
	   (for a proof use Sobolev Inequality). Since $V^2 \1_R$ belongs to that  space we get, using the Diamagnetic Inequality that, for all $\ve>0$ and smooth $\vp$,
	 	\begin{align}
	 		\|     V \1_R \vp    \|^2 
	 		\leq 
	 		\< \vp_R , V^2 \1_R  \vp_R  \>  
	 		\leq 
	 		\ve \< \vp_R , ( - i \nabla -A)^2  \vp_R\>
	 		+ 
	 		C_\ve \|  \vp_R \|^2
	 	\end{align}
	 	for some $C_\ve>0$, 
	 	and we denote 
	 	$\vp_R \equiv \chi_R \vp$. 
	 	It is well-known that
	 	$H_0^2 = (- i \nabla -A)^2 + \sigma_3 B$
	 	and, consequently, 
	 	a simple computation shows that 
	 	\begin{align}
	 		\nonumber 
	 		\|     V \1_R \vp    \|^2 
	 		&  \leq 
	 		\ve 
	 		\|      H_0 \vp_R  \|^2 
	 		- 
	 		\ve 
	 		\<\vp_R , \sigma_3 B \vp_R  \>
	 		+ 
	 		C_\ve \|  \vp_R \|^2   \\ 
	 		& \leq 
	 		\ve 
	 		\|     H_0 \vp    \|^2
	 		+
	 		\ve 
	 		\< \vp ,
	 		\chi_R  | B|   \chi_R 
	 		\vp\> 
	 		+ 
	 		\Big( 
	 		\ve   \|   \nabla \chi_R   \|_{L^\infty}^2
	 		+ 
	 		C_\ve
	 		\Big) \|  \vp \|^2  \ . 
	 		\label{c4}
	 	\end{align}
	 	Since $B$ is locally integrable, 
	 	there holds
	 	$\chi_R |B|^{1/2}\in L^2(\R^2)$.
	 	Thus, repeating the same argument 
	 	we used to control $V$, 
	 	we find 
	 	\begin{align}
	 		\< \vp ,
	 		\chi_R  | B|   \chi_R 
	 		\vp\>  
	 		\leq 
	 		\ve 
	 		\|     H_0 \vp    \|^2
	 		+
	 		\ve 
	 		\< \vp ,
	 		\chi_R  | B|   \chi_R 
	 		\vp\> 
	 		+ 
	 		\Big( 
	 		\ve   \|   \nabla \chi_R   \|_{L^\infty}^2
	 		+ 
	 		C_\ve
	 		\Big) \|  \vp \|^2  \ 
	 	\end{align}
	 	for a possibly different constant $C_\ve$.
	 	One may  use the above inequality 
	 	to conclude that 
	 	$ \chi_R |B| \chi_R$
	 	is  form bounded by $H_0^2$, 
	 	with relative constant $\ve/ (1-\ve)$. 
	 	This fact, 
	 	combined with the bound \eqref{c4} finishes the proof. 
	 \end{proof}

	 \section{Reduction to Condition \ref{condition 1'}}
	 \label{section scaling}
	 In Condition \ref{condition 1} we consider the magnetic flux 
	 $\Phi(r )  = \Phi_0 r^{\alpha + 1 } + \phi(r) $ 
	 with $\alpha$, $\Phi_0>0$ and $\phi(r) = o(r^{1+ \alpha})$ as $r \rightarrow \infty$.
In this Appendix, we argue that it suffices to prove Theorem \ref{theorem tunneling estimates} 
for the magnetic field that satisfies Condition \ref{condition 1'}.
 That is, the case $A(r ) = r^\alpha$. First we
discuss how to remove $\phi$ and then, by a scaling argument, we check how to remove the factor in front of $A$.

 \subsection{The magnetic perturbation}
Assume that we haven proven Theorem~\ref{theorem tunneling estimates} for the magnetic flux  $\Phi(r)=\Phi_0r^{\alpha+1}$, 
and all $W$ satisfying Condition \ref{condition 2}. 
Then, we argue that the theorem holds also when such 
$\Phi(r)$ is replaced by 		
 \begin{align}
 \label{magnetic field2}
 {\Phi}^\prime ( r ) 
 \, =  \, 
 \Phi_0 \, 
 r^{ \alpha +1 } 
 + \phi(r) \ . 
 \end{align}	
 Indeed, let us write $H_0$ and $H_0^\prime$  for the Hamiltonians whose fluxes 
 are $\Phi(r)$ and $\Phi'(r)$, respectively.
 Then, we write 
 $$
 A^\prime(r)={\Phi}^\prime ( r ) /(2\pi r)=A(r)+a(r)\,,
 $$
 where $A(r)={\Phi} ( r ) /(2\pi r)$ and $a(r)=o(r^\alpha)$. Hence,  we find that (see \eqref{aaa}) $H_0^\prime=H_0+ a(r)({\boldsymbol{\sigma}\cdot {\textbf{e}_\theta }})$.
 Therefore, for $W$ satisfying Condition~\ref{condition 2} we have 
 $$
 H'=H_0^\prime+W=H_0+W+
 a(r)({\boldsymbol{\sigma}\cdot {\textbf{e}_\theta }})
 \,.
 $$
Finally, note that $W+a(r)({\boldsymbol{\sigma}\cdot {\textbf{e}_\theta }})$ satisfies Condition~\ref{condition 2}. 
 In fact, one can explicitly verify using \eqref{F unitary} that $a(r)({\boldsymbol{\sigma}\cdot {\textbf{e}_\theta }})$ is rotationally symmetric.   
  Hence, Theorem~\ref{theorem tunneling estimates} holds for $H'$.

 \subsection{Scaling}
	  In this subsection, 
	 we argue that using a scaling transformation we
	 can always reduce the analysis  to the notationally simpler case $A(r) = r^\alpha$.
	 We assume here that we start from $\Phi(r) = \Phi_0	r^{\alpha +1  }$. 
	 
	 \vspace{2mm}
	 To this end, let $\lambda>0$ and define the unitary transformation 
	 \begin{equation}
	 	S_\lambda : L^2(\R^2; \C^2) \rightarrow L^2(\R^2; \C^2) 
	 	\ , \qquad  ( S_\lambda \vp )(x) \equiv  \lambda^{ -2 } \vp( \lambda^{-1} x )  \ , 
	 \end{equation}
	 which implements the scaling $x \mapsto \lambda^{-1 } x $. 
	 In particular, we leave it 
	 as an exercise to the reader to verify that 
	 the conjugation of the Hamiltonian $H$
	 with the scaling transformation  is given by 
	 \begin{equation}
	 	\label{hamiltonian scaled}
	 	S^*_\lambda  H S_\lambda 
	 	= \frac{1}{\lambda}
	 	H^{(\lambda)}
	 	\qquad
	 	\t{where}
	 	\qquad 
	 	H^{(\lambda)}
	 	\equiv 
	 	{\sg   } \cdot (-\im \nabla - {\bf A}_\lambda ) 
	 	+ 
	 	W_\lambda 
	 \end{equation}
	 is the re-scaled Hamiltonian
	 with new potentials  
	 $\bf{A}_\lambda  ( \bx ) \equiv  \lambda \bf{A} (\lambda \bx )$
	 and 
	 $W_\lambda( \bx) \equiv \lambda W (\lambda \bx )$. 
	 
	 \begin{remark}
	 	Let us observe that the following facts are true. 
	 	
	 	\begin{enumerate}[leftmargin=*]
	 		\item The re-scaled vector potential is given by 
	 		$ {\bf{A}}_\lambda ( \bx ) 
	 		=  A_\lambda ( |\bx |) {\textbf{e}_\theta }$
	 		with 
	 		$
	 		A_\lambda (r) \equiv  	  \lambda^{1+ \alpha }  \frac{\Phi_0}{2\pi } r^{\alpha}
	 		$.  
	 		
	 		\item The re-scaled  perturbation  $W_\lambda $
	 		satisfies Condition \ref{condition 2}
	 		with respect to the same constant $\beta>0$, 
	 		the re-scaled   radial function $v_\lambda (r) = \lambda v(\lambda r )$, 
	 		and the same parameter $\mu_*$.
	 		This follows from  
	 		\begin{equation}
	 			\mu_*  = 
	 			\limsup_{r \to \infty}
	 			\bigg|
	 			\frac{\coth^2 (\beta/4)	 v_\lambda^2(r)	}{ A_\lambda^2 (r)}
	 			\bigg| \ .  
	 		\end{equation}
	 		
	 		\item 
	 		For all measurable functions $f,g : \R^2 \rightarrow \C $
	 		the following relations hold 
	 		for the operators $J$ and $\bx $ 
	 		\begin{equation}
	 			S_\lambda^*  f(J)  S_\lambda = f(J)
	 			\qquad 
	 			\t{and}
	 			\qquad 
	 			S_\lambda^* g( \bx )	S_\lambda = g (	 \lambda^{-1 } \bx 	)
	 		\end{equation}
	 	\end{enumerate}
	 	
	 \end{remark}

	 Let $\lambda>0$ be fixed. 
	 In view of the above observations, 
	 we argue  that if one re-scales
	 the potentials  
	 $W \mapsto W_\lambda$ and $A \mapsto A_\lambda$, 
	 then 
	 Theorem \ref{theorem tunneling estimates} holds
	 for the re-scaled Hamiltonian, 
	 and this operation  only changes the value of the constants $\zeta_1$, $\zeta_2$, $C_1$
	 and $C_2$, 
	 and the energy treshold $E>0$. 
	 In practice, we   choose $\lambda$
	 so that
	 \begin{equation}
	 	\label{A simple}
	 	A_\lambda (r)	=		r^{\alpha } , \qquad r>0 
	 \end{equation}
	 which will simplify several calculations.
	 Explicitly $\lambda = (\frac{2\pi}{\Phi_0}	)^{\sigma}$.

	 \vspace{2mm}	 
	 
	 More precisely,  
	 let us consider $E>0$.
	 Denote by   $ I  = [ - E, E]$   an energy interval, 
	 and  let  $I (\lambda ) = [ - \lambda E , \lambda E ]$
	 be its re-scaled version. 
	 Then, the following result is true
	 and its proof is left to the reader. 
	 
	 \begin{lemma}
	 	For all    measurable functions $f,g : \R^2 \rightarrow \C$
	 	and $\lambda>0$
	 	there holds 
	 	\begin{equation}
	 		\label{scaling}
	 		\|	  f(J) g(\bx ) \mathbb{P}_{I}(H )  	\| 
	 		= 
	 		\|			 f( J )	 g(\lambda^{-1 }{ \bx }	 )   \mathbb{P}_{  I	(\lambda) } (   H^{(\lambda) } )		\|
	 	\end{equation}
	 	where $H^{(\lambda)}
	 	$
	 	is the re-scaled Hamiltonian \eqref{hamiltonian scaled}. 
	 	In particular, for all $C_1$, $C_2$,  $\zeta_1$,  $\zeta_2>0$
	 	and all $ j \in \Z $
	 	there holds
	 	\begin{align}
	 		\nonumber
	 		\|
	 		\exp \big(    \zeta_1    |j | \big) 
	 		\, 
	 		\1_{ [ 0, C_1 |j|^\sigma]   }(r )
	 		P_{m_j}
	 		\p_I( H )
	 		\| 
	 		& 	 = 
	 		\|
	 		\exp \big(    \zeta_1^{(\lambda )}    |j | \big) 
	 		\, 
	 		\1_{ [ 0, C_1^{(\lambda)}  |j|^\sigma]   }(r )
	 		P_{m_j}
	 		\p_{I(\lambda) }( H^{(\lambda)} )
	 		\|  	\\ 
	 		\nonumber
	 		\|
	 		\exp \big(   \zeta_2  \,   r^{ 1 + \alpha }  \big) 
	 		\1_{   [ C_2 \<j \>^\sigma  , \infty ]}(r ) 
	 		P_{m_j}
	 		\p_I (   H ) 
	 		\| 
	 		&	 =
	 		\|
	 		\exp \big(   \zeta_2^{(\lambda)}  \,   r^{ 1 + \alpha }  \big) 
	 		\1_{   [ C_2^{(\lambda)} \<j \>^\sigma  , \infty ]}(r ) 
	 		P_{m_j}
	 		\p_{ I (\lambda)} (   H^{(\lambda)} ) 
	 		\| \  ,	\nonumber
	 	\end{align}
	 	where $\zeta_1^{ (\lambda)}  = \zeta_1 , \  $
	 	$\zeta_2^{ (\lambda)} = \lambda^{-1 - \alpha}  \zeta_2  , \  $
	 	$C_2^{(\lambda)} = \lambda C_2, \  $
	 	and
	 	$ \ C_1^{(\lambda)} = \lambda C_1   $
	 	are the re-scaled constants. 
	 \end{lemma}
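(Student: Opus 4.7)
The plan is to prove the first displayed equality \eqref{scaling} by pure intertwining, and then to read off the equalities for the interior/exterior exponential factors as the special cases corresponding to $f(J) = \exp(\zeta_1|j|) P_{m_j}$, resp.\ $g(\bx) = \1_{[0,C_1|j|^\sigma]}(r)$, and the analogous choices for the exterior estimate. Since $S_\lambda$ is unitary, $\|  f(J) g(\bx ) \mathbb{P}_{I}(H )  \| = \| S_\lambda^*\, f(J) g(\bx ) \mathbb{P}_{I}(H )  \, S_\lambda\|$, so the whole content is in commuting $S_\lambda$ through the three factors.

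First, from the intertwining identity \eqref{hamiltonian scaled}, namely $S_\lambda^* H S_\lambda = \lambda^{-1} H^{(\lambda)}$, the spectral theorem gives
\begin{equation*}
S_\lambda^* \, \mathbb{P}_I(H)\, S_\lambda
\;=\; \mathbb{P}_{I}\bigl(\lambda^{-1}  H^{(\lambda)}\bigr)
\;=\; \mathbb{P}_{\lambda I}\bigl( H^{(\lambda)}\bigr)
\;=\; \mathbb{P}_{I(\lambda)}\bigl( H^{(\lambda)}\bigr),
\end{equation*}
using that multiplying a self-adjoint operator by $\lambda^{-1}$ rescales its spectrum by $\lambda^{-1}$, and our convention $I(\lambda)=[-\lambda E,\lambda E]$. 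Combined with $S_\lambda^* f(J) S_\lambda = f(J)$ and $S_\lambda^* g(\bx) S_\lambda = g(\lambda^{-1}\bx)$, both of which follow from the spectral theorem once one has verified the generator-level identities $S_\lambda^* J S_\lambda = J$ (rotation invariance of $J$ vs.\ radial scaling) and $S_\lambda^* \bx\, S_\lambda = \lambda^{-1}\bx$, this yields
\begin{equation*}
S_\lambda^* \bigl( f(J)\, g(\bx)\, \mathbb{P}_I(H)\bigr) S_\lambda
\;=\; f(J)\, g(\lambda^{-1}\bx)\, \mathbb{P}_{I(\lambda)}\bigl(H^{(\lambda)}\bigr),
\end{equation*}
and \eqref{scaling} follows from the unitary invariance of the operator norm.

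Next I would specialize. For the interior estimate, pick $f(J)=e^{\zeta_1|j|}P_{m_j}$ (which commutes with $S_\lambda$ since $S_\lambda$ preserves each angular momentum subspace) and $g(\bx)=\1_{[0,C_1|j|^\sigma]}(r)$. Then $g(\lambda^{-1}\bx)=\1_{[0,C_1|j|^\sigma]}(\lambda^{-1}r)=\1_{[0,\lambda C_1|j|^\sigma]}(r)$, which identifies $C_1^{(\lambda)}=\lambda C_1$ and $\zeta_1^{(\lambda)}=\zeta_1$. For the exterior estimate, take $g(\bx)=e^{\zeta_2 r^{1+\alpha}}\1_{[C_2\<j\>^\sigma,\infty)}(r)$; then
\begin{equation*}
g(\lambda^{-1}\bx) \;=\; e^{\zeta_2 \lambda^{-(1+\alpha)} r^{1+\alpha}}\, \1_{[\lambda C_2\<j\>^\sigma,\infty)}(r),
\end{equation*}
giving $\zeta_2^{(\lambda)}=\lambda^{-(1+\alpha)}\zeta_2$ and $C_2^{(\lambda)}=\lambda C_2$, exactly as claimed.

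The proof has essentially no obstacle: the only technical points are (i) checking the generator identities $S_\lambda^* J S_\lambda=J$ and $S_\lambda^*\bx\, S_\lambda=\lambda^{-1}\bx$, which reduce to direct computation on $C_c^\infty$, and (ii) keeping track of the spectral rescaling $\mathbb{P}_I(\lambda^{-1}A)=\mathbb{P}_{\lambda I}(A)$, which is immediate from the spectral theorem. All other statements in the lemma follow as specializations of \eqref{scaling} by substituting indicator functions and exponentials for $f$ and $g$.
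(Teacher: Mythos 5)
Your proposal is correct and is exactly the argument the paper intends: the lemma's proof is left to the reader, and the preceding remark supplies precisely the ingredients you use, namely the intertwining $S_\lambda^* H S_\lambda = \lambda^{-1}H^{(\lambda)}$ together with $S_\lambda^* f(J) S_\lambda = f(J)$ and $S_\lambda^* g(\bx) S_\lambda = g(\lambda^{-1}\bx)$, followed by the spectral rescaling $\p_I(\lambda^{-1}A)=\p_{\lambda I}(A)$ and unitary invariance of the norm. The specializations with $f=e^{\zeta_1|j|}\1_{\{m_j\}}$ and the indicator/exponential choices of $g$ reproduce the stated constants $\zeta_i^{(\lambda)}$, $C_i^{(\lambda)}$, so nothing is missing.
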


	 \bigskip
	 \noindent {\bf Acknowledgments.}
E.C. gratefully acknowledges support from the Provost’s Graduate Excellence Fellowship at The University of Texas at Austin and from the NSF grant DMS-2009549, and the NSF grant DMS-2009800 through
 Thomas Chen.
	B.P. and E.S  acknowledge  support from  Fondecyt (ANID, Chile) 
	through the  grant \# 123--1539. E.S. thanks Marcel Griesemer for an observation concerning  
	 Remark~\ref{rem.grie}.

	 \vspace{1mm}
	 \noindent \textbf{Conflict of interest.}
	 The authors declare that there is no conflict of interest.

 \end{document}